\documentclass[11pt,reqno]{amsart}

\usepackage{amscd,amsmath,amsfonts,amssymb,amsthm,graphicx,cite,mathrsfs,tikz}

\newtheorem{theorem}{Theorem}[section]
\newtheorem{lemma}{Lemma}[section]
\newtheorem{proposition}{Proposition}[section]
\newtheorem{corollary}{Corollary}[section]

\theoremstyle{remark}
\newtheorem{remark}{Remark}[section]

\renewcommand{\hat}{\widehat}

\evensidemargin=0pt
\oddsidemargin=0pt
\textwidth=155mm

\newcommand{\nad}[2]{\genfrac{}{}{0pt}{}{#1}{#2}}
\def \vf {\varphi}

\begin{document}

\title{Baker-Akhiezer functions and generalised Macdonald-Mehta integrals}

\author{M.V.~Feigin} \address{School of Mathematics and Statistics, University of Glasgow, 15 University Gardens, Glasgow G12 8QW, United Kingdom}
\email{misha.feigin@glasgow.ac.uk}
\author{M.A.~Halln\"as} \address{Department of Mathematical Sciences, Loughborough University, Loughborough, Leicestershire LE11 3TU, United Kingdom}
\email{M.A.Hallnas@lboro.ac.uk}
\author{A.P.~Veselov} \address{Department of Mathematical Sciences, Loughborough University, Loughborough, Leicestershire LE11 3TU, United Kingdom and Moscow State University, Moscow, Russia}
\email{A.P.Veselov@lboro.ac.uk}

\date{\today}

\begin{abstract}
For the rational Baker-Akhiezer functions associated with special arrangements of hyperplanes with multiplicities we establish an integral identity, which may be viewed as a generalisation of the self-duality property of the usual Gaussian function with respect to the Fourier transformation. We show that the value of properly normalised Baker-Akhiezer function at the origin can be given by an integral of Macdonald-Mehta type and explicitly compute  these integrals for all known Baker-Akhiezer arrangements.
We use the Dotsenko-Fateev integrals to extend this calculation to all deformed root systems, related to the non-exceptional basic classical Lie superalgebras.
\end{abstract}

\maketitle

\section{Introduction}

In 1963 Dyson and Mehta in the paper on the statistical properties of the eigenvalues of random matrices \cite{MD} put forward the following conjecture
$$\int_{\mathbb{R}^n}\prod_{i<j}|(x_i-x_j)|^{2k} d\gamma(x) = \prod_{j=1}^n\frac{\Gamma(1+kj)}{\Gamma(1+k)},$$
where $d\gamma(x) = (2\pi)^{-n/2}e^{-x^2/2}dx,\, x^2=x_1^2+\dots+x_n^2$ is a Gaussian measure and $\Gamma(z)$ is the classical gamma-function (see Conjecture D in \cite{MD}). It became known as  Mehta conjecture, probably because of its later appearance in \cite{M67} and \cite{M74}. Soon after  Bombieri observed that it can be proved using the Selberg integral (see the nice review \cite{FW} for a history of this discovery).

In 1982 Macdonald published his famous list of conjectures \cite{Mac82}, where he generalised Mehta conjecture to any finite Coxeter group $W$:
$$
\int_{\mathbb{R}^n}\prod_{\alpha\in R_+}|(\alpha,x)|^{2k}d\gamma(x) = \prod_{j=1}^n\frac{\Gamma(1+kd_j)}{\Gamma(1+k)},
$$
where $\alpha$ are the normal vectors to the reflection hyperplanes normalised by $(\alpha,\alpha)=2,$ and $d_j$ are the degrees of the generators in $W$-invariant polynomials.

Macdonald checked this for dihedral groups and some other special cases. Opdam \cite{Opd89} computed the integrals corresponding to a Weyl group using so-called shift operators. The remaining exceptional cases $H_3$ and $H_4$ were handled by Opdam \cite{Opd93} using a computer calculation due to Garvan. A uniform proof, valid for all finite Coxeter groups, was later provided by Etingof \cite{Eti10}. In the case when the action of the Coxeter group $W$ on its reflection hyperplanes is not transitive there are two-parameter generalisations of these formulas (see \cite{Mac82} and Section 4 below).

In this paper we present an extension of these results to some special arrangements of hyperplanes with multiplicities, which we call Baker-Akhiezer arrangements.
These arrangements appeared in the theory of rational multidimensional Baker-Akhiezer functions \cite{CFV99} and include Coxeter arrangements as a particular case (see the next Section for the details).

Our motivation came from this theory and related theory of quasi-invariants and $m$-harmonic polynomials, see \cite{FV02, FV03, EG02}. One of our results is an integral representation for the canonical form \cite{FV02} on the space of quasi-invariants $Q^{\mathcal{A}}$ (cf. Macdonald \cite{Mac80}):
$$
	(p,q)^{\mathcal{A}} = \phi(0,0)\int_{i\xi+\mathbb{R}^n}\frac{\left(e^{L/2}p\right)(-ix)\left(e^{L/2}q\right)(ix)}{A_m(x)^2}d\gamma(x),\quad p,q\in Q^{\mathcal{A}},
$$
where
$A_m(x) = \prod_{\alpha\in\mathcal{A}}(\alpha,x)^{m_\alpha}$ is the product of the defining forms of the arrangement with multiplicities, $\phi(0,0)$ is the value of the corresponding Baker-Akhiezer function at zero and 
$$
L = \Delta - \sum_{\alpha\in\mathcal{A}} 2m_{\alpha}(\alpha,x)^{-1}\partial_\alpha.
$$
Because of the zeroes in the denominator we need a shift of the integration contour to the complex domain (cf. Etingof and Varchenko \cite{EV05}, Grinevich and Novikov \cite{GN09, GN11}, Chalykh and Etingof \cite{CE11}).
Since $(1,1)^{\mathcal{A}} =1$ we see that (the inverse of) this value itself can be given by the integral of Macdonald-Mehta type:
$$\phi(0,0)^{-1} = \int_{i\xi+\mathbb{R}^n}\frac{d\gamma(x)}{A_m(x)^2}.$$
We explicitly compute this value for all known Baker-Akhiezer arrangements. In the two-dimensional case we use recent results by Berest et al \cite{BCE08} and by Johnston and one of the authors \cite{FJ12}.
For the multidimensional non-Coxeter Baker-Akhiezer arrangements we follow Bombieri's calculation using a version of the Selberg integral found by Dotsenko and Fateev \cite{DF}.

In fact, we use the Dotsenko-Fateev integral to evaluate generalised Macdonald-Mehta integrals for all deformed root systems, corresponding to the non-exceptional basic classical Lie superalgebras \cite{SV}. The corresponding deformed Calogero-Moser operator
$$
L_{\mathcal A}=-\Delta+\sum_{\alpha \in \mathcal A}\frac{m_\alpha(m_\alpha+1)(\alpha,\alpha)}{(\alpha,x)^2}+x^2
$$
has a "pseudo-ground state" of the form
$
\psi_0=\prod_{\alpha \in \mathcal A}(\alpha,x)^{-m_\alpha}\exp(-\frac{1}{2}x^2),
$
which is always singular since the multiplicities of the isotropic roots are 1. One can define its "norm" by the integral
$$
I_{\mathcal A}=\int_{i\xi+\mathbb{R}^n} dx \prod_{\alpha \in \mathcal A}(\alpha,x)^{-2m_\alpha}e^{-x^2}.
$$
In particular, in the deformed $BC(n,m)$ case, corresponding to orthosymplectic Lie superalgebras, we have
the following generalised Macdonald-Mehta integral defined by
$$
I(\alpha, \rho)=\int_{{\mathbb R}^n+i \xi} dt \int_{{\mathbb R}^m+i\eta} d\tau  \frac{  \prod_{i=1}^n t_i^{-2\frac{\alpha}{\rho}+1} \prod_{j=1}^m \tau_j^{2\alpha+1} e^{-\frac12\sum_{j=1}^m\tau_j^2-\frac{1}{2}\sum_{i=1}^n t_i^2}}{\prod_{i<j}^n (t_i^2 - t_j^2)^{-\frac{2}{\rho}}  \prod_{i<j}^m (\tau_i^2- \tau_j^2)^{-2\rho}   \prod_{j=1}^m \prod_{i=1}^n (\rho t_i^2 +  \tau_j^2)^2},
$$
where $\rho<0$, $\alpha \in \mathbb R$, and  $\xi_n>\ldots>\xi_1>\eta_m>\ldots>\eta_1>0$.
Using Dotsenko-Fateev formula we show (see Section 6 below) that
\begin{equation*}
\begin{split}
I(\alpha,\rho) &= C\prod_{k=1}^m \frac{\Gamma(1-\rho)}{\Gamma(1-k \rho) \prod_{j=1}^n (k\rho-j)}
\prod_{k=1}^n \frac{\Gamma(1-\rho^{-1})}{\Gamma(1-k \rho^{-1})}\\ &\quad \times \prod_{j=0}^{n-1}\frac{1}{\Gamma(\frac{\alpha-j}{\rho})} \prod_{j=0}^{m-1} \frac{1}{\Gamma(-\alpha-j \rho)} \prod_{j=0}^{m-1}\prod_{k=0}^{n-1} \frac{1}{\alpha+j \rho -k}.
\end{split}
\end{equation*}
with
$$
C=(2\pi)^{m+n} 2^{2(m+n)-2m n -\frac{n\alpha}{\rho}+\frac{n(n-1)}{\rho}+m\alpha+m(m-1)\rho} e^{\pi i (\frac{m+n}{2}+m\alpha-\frac{n\alpha}{\rho})}.$$

Another result of this paper is an integral identity for the Baker-Akhiezer function, which may be viewed as a generalisation of the self-duality property of the usual Gaussian with respect to the Fourier transformation:
$$
	\int_{i\xi+\mathbb{R}^n}\frac{\phi(-ix,\lambda)\phi(ix,\mu)}{A_m(x)^2}d\gamma(x) = e^{-(\lambda^2+\mu^2)/2}\phi(\lambda,\mu).
$$
In the crystallographic case  the difference version of this identity was recently established by Chalykh and Etingof \cite{CE11}.

\section{Multidimensional Baker-Akhiezer functions and integral identity}

The notion of rational Baker-Akiezer function related to an arrangement of hyperplanes with multiplicities was introduced by Chalykh and one of the authors in \cite{CV90} (see also \cite{VSC93}) as a multidimensional version of Krichever's axiomatic \cite{Kri}.

We recall that there are two different axiomatics for rational Baker-Akhiezer functions: the original axiomatics, proposed in \cite{CV90}, was later revised in \cite{CFV99} to cover more cases. For our purposes it is important to make use of the original, and more restrictive, axiomatics. In addition, we shall restrict our attention to hyperplanes in a real (rather than complex) Euclidean space $\mathbb{R}^n$ equipped with the standard inner product $(\cdot,\cdot)$.

Let $\mathcal{A}$ be a finite collection of noncolinear vectors $\alpha=(\alpha_1,\ldots,\alpha_n)\in\mathbb{R}^n$ with multiplicities $m_\alpha\in\mathbb{N}$. To each vector $\alpha\in\mathcal{A}$ there corresponds a hyperplane
\begin{equation}
	\Pi_\alpha=\lbrace x\in\mathbb{R}^n: (x,\alpha)=0\rbrace.
\end{equation}
These hyperplanes partition $\mathbb{R}^n$ into a finite number of regions. We shall refer to the connected components of $\mathbb{R}^n\setminus\cup_{\alpha\in\mathcal{A}}\Pi_\alpha$ as the (open) chambers of $\mathbb{R}^n$. Since any two vectors $\alpha$ and $-\alpha$ determine the same hyperplane, we may and will assume that there exists a vector $v\in\mathbb{R}^n$ such that $(\alpha,v)>0$ for all $\alpha\in\mathcal{A}$. In particular, this allows us to define the 'negative' chamber of $\mathbb{R}^n$ as the set of points $x\in\mathbb{R}^n$ such that $(\alpha,x)<0$ for all $\alpha\in\mathcal{A}$. In comparison with \cite{CV90} and \cite{CFV99} we shall use a gauge that differs by the factor
\begin{equation}
	A_m(x) = \prod_{\alpha\in\mathcal{A}}(x,\alpha)^{m_\alpha}.
\end{equation}

Following \cite{CV90} a function $\phi(x,\lambda)$, $x,\lambda\in\mathbb{C}^n$, will be called a Baker-Akhiezer function associated to the collection $\mathcal{A}$ if it satisfies the following two conditions:
\begin{enumerate}
\item $\phi(x,\lambda)$ is of the form
\begin{equation}\label{BAform}
	\phi(x,\lambda) = P(x,\lambda)e^{(x,\lambda)}
\end{equation}
with $P(x,\lambda)$ a polynomial in $x$ with highest degree term equal to $A_m(x)A_m(\lambda)$;
\item for all $\alpha\in\mathcal{A}$,
\begin{equation}\label{quasiinv}
	\partial_\alpha\phi(x,\lambda) = \partial_\alpha^3\phi(x,\lambda) =\cdots= \partial_\alpha^{2m_\alpha-1}\phi(x,\lambda)\equiv 0,\quad x\in\Pi_\alpha,
\end{equation}
where $\partial_\alpha = (\alpha,\partial/\partial x)$ is the normal derivative corresponding to the vector $\alpha$.
\end{enumerate}

In order for $\phi(x,\lambda)$ to be symmetric under the interchange of $x$ and $\lambda$ (see below), we have chosen the normalisation of $\phi$ somewhat differently from that in \cite{CV90}. In \cite{CFV99} the definition above was revised as follows: a function of the form
\begin{equation}\label{revBAform}
	\psi(x,\lambda) = \frac{P(x,\lambda)}{A_m(x)A_m(\lambda)}e^{(x,\lambda)}
\end{equation}
was considered, where, as above, $P(x,\lambda)$ is a polynomial in $x$ with highest degree term $A_m(x)A_m(\lambda)$, and with the property that
\begin{equation}\label{revquasiinv}
\begin{split}
	\partial_\alpha\big(\psi(x,\lambda)(x,\alpha)^{m_\alpha}\big) &= \partial_\alpha^3\big(\psi(x,\lambda)(x,\alpha)^{m_\alpha}\big)\\ &=\cdots= \partial_\alpha^{2m_\alpha-1}\big(\psi(x,\lambda)(x,\alpha)^{m_\alpha}\big)\equiv 0,\quad x\in\Pi_\alpha.
\end{split}
\end{equation}
Moreover, it was shown that if there exists a Baker-Akhiezer function $\phi$, satisfying \eqref{BAform}--\eqref{quasiinv}, then $\psi(x,\lambda)=\phi(x,\lambda)/(A_m(x)A_m(\lambda))$ satisfies conditions \eqref{revBAform}--\eqref{revquasiinv}, see Corollary 2.7 in \cite{CFV99}. However, the converse statement does not hold true, i.e., there are collections of vectors $\mathcal{A}$ such that $\psi$ exists but $\phi$ does not.

The existence of $\phi$ puts a strong restriction on the arrangement. The known cases besides Coxeter arrangements include their deformed  versions ${\mathcal A}_{m}(p)$ and ${\mathcal C}_{m+1}(r,s)$ as well as some special 2-dimensional configurations, see Section \ref{MMsec}-\ref{SectDeform} below.

It is known that if a Baker-Akhiezer function $\phi$ exists, then it is unique and symmetric with respect to interchange of $x$ and $\lambda$:
\begin{equation*}
	\phi(x,\lambda) = \phi(\lambda,x),
\end{equation*}
see Theorem 2.3 in \cite{CFV99} and note the difference in gauge. Moreover, it satisfies the algebraically integrable differential equation
\begin{equation}\label{CMeq}
	L\phi = \lambda^2\phi
\end{equation}
with
\begin{equation}\label{CM}
	L = \Delta - \sum_{\alpha\in\mathcal{A}}\frac{2m_\alpha}{(\alpha,x)}\partial_\alpha,
\end{equation}
where $\Delta$ denotes the ordinary Laplace operator. In terms of the operator $L$ the Baker-Akhiezer function is given by Berest's formula
\begin{equation}\label{Berest}
	\phi(x,\lambda) = \big(2^{|m|} |m|!\big)^{-1}\big(L-\lambda^2\big)^{|m|}\left(A_m(x)^2e^{(\lambda,x)}\right),
\end{equation}
where $|m|=\sum_{\alpha\in\mathcal{A}}m_\alpha$, see Theorem 3.1 in \cite{CFV99}.

\subsection{Integral identity}
We are now ready to proceed to state the integral identity announced above, which forms the main result of this section. Given $\xi\in\mathbb{R}^n$, we let $\gamma$ denote the Gaussian measure on $i\xi+\mathbb{R}^n$ given by
\begin{equation*}
	d\gamma(x) = (2\pi)^{-n/2}e^{-x^2/2}dx.
\end{equation*}

\begin{theorem}\label{intThm}
For any $\lambda,\mu\in \mathbb{C}^n$ and any $\xi\in \mathbb{R}^n\setminus\cup_{\alpha\in\mathcal{A}}\Pi_\alpha$, we have
\begin{equation}\label{int}
	\int_{i\xi+\mathbb{R}^n}\frac{\phi(-ix,\lambda)\phi(ix,\mu)}{A_m(x)^2}d\gamma(x) = e^{-(\lambda^2+\mu^2)/2}\phi(\lambda,\mu).
\end{equation}
\end{theorem}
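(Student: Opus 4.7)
The approach is to combine Berest's formula \eqref{Berest} with the formal self-adjointness of $L$ with respect to the measure $A_m(x)^{-2}\,dx$, reducing the claim to an explicit Gaussian Fourier calculation.

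First, the chain rule identity $(L_y f)(ix) = -L_x[f(ix)]$, which follows from $\partial_{x_j} = i\partial_{y_j}$ at $y = ix$ together with $(\alpha, ix) = i(\alpha, x)$, applied iteratively to Berest's formula \eqref{Berest}, gives
\[
\phi(ix,\mu) = \bigl(2^{|m|}|m|!\bigr)^{-1}(L_x + \mu^2)^{|m|}\bigl[A_m(x)^2 e^{i(\mu,x)}\bigr].
\]
Substituting into the LHS of \eqref{int} and using that $L$ is formally self-adjoint with respect to $A_m(x)^{-2}\,dx$ (verified directly from the conjugation identity $A_m^{-2} L A_m^2 = L^*$, where $L^*$ denotes the formal $L^2(dx)$-adjoint of $L$), I transfer $(L_x + \mu^2)^{|m|}$ onto $\phi(-ix, \lambda) e^{-x^2/2}$ to obtain
\[
\text{LHS} = \bigl(2^{|m|}|m|!\bigr)^{-1}(2\pi)^{-n/2}\!\int_{i\xi+\mathbb{R}^n}\!(L_x + \mu^2)^{|m|}\bigl[\phi(-ix,\lambda) e^{-x^2/2}\bigr]\,e^{i(\mu,x)}\,dx.
\]
Boundary terms at infinity vanish by Gaussian decay, and the contour shift $i\xi + \mathbb{R}^n$ (with $\xi \notin \cup_\alpha \Pi_\alpha$) keeps the integration path away from the complexified singular set of $A_m$.

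To complete the proof I identify this integral with the RHS. Iterated integration by parts moves $(L_x + \mu^2)^{|m|}$ onto $e^{i(\mu, x)}$, producing singular $1/(\alpha, x)$-type contributions from the action of $L^*$. Combined with the base-case Gaussian Fourier identity $\int_{\mathbb{R}^n} e^{i(\mu - \lambda, x) - x^2/2}\,dx/(2\pi)^{n/2} = e^{-(\mu - \lambda)^2/2}$ and the polynomial structure of $P(-ix, \lambda)$, the integral reduces to a polynomial-in-$\mu$ differential operator applied to $e^{-(\mu - \lambda)^2/2} = e^{-(\lambda^2+\mu^2)/2}e^{(\lambda, \mu)}$. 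By Berest's formula \eqref{Berest} applied in the variable $\mu$ (using the symmetry $\phi(x, \lambda) = \phi(\lambda, x)$), this operator coincides with $(2^{|m|}|m|!)^{-1}(L_\mu - \lambda^2)^{|m|}$ acting on $A_m(\mu)^2 e^{(\lambda, \mu)}$, whence the integral equals $e^{-(\lambda^2+\mu^2)/2}\phi(\lambda, \mu)$.

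The main obstacle lies in this final identification: one must carefully reconcile the singular action of $(L^* + \mu^2)^{|m|}$ on $e^{i(\mu, x)}$ with Berest's operator $(L_\mu - \lambda^2)^{|m|}$ acting on the $\mu$-side. The key structural input is that the quasi-invariance conditions \eqref{quasiinv} force the Laurent expansion of $\phi(-ix, \lambda)\phi(ix, \mu)/A_m(x)^2$ near each $\Pi_\alpha$ to contain only even negative powers of $(\alpha, x)$; this parity, combined with Plemelj-type identities for the contour-shifted integrals, allows the termwise evaluation to proceed and deliver the claimed polynomial-in-$\mu$ output.
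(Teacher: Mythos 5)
Your opening moves are sound: the identity $(L_y f)(ix)=-L_x[f(ix)]$, the resulting rewriting of $\phi(ix,\mu)$ via Berest's formula \eqref{Berest}, and the formal self-adjointness of $L$ with respect to $A_m(x)^{-2}dx$ (which follows from $L=A_m\circ\Delta\circ A_m^{-1}-A_m\Delta(A_m^{-1})$) are all correct, and on the shifted contour $i\xi+\mathbb{R}^n$ the integrations by parts are legitimate because the coefficients are regular there and the Gaussian kills the boundary terms. The gap is in the final identification, which is exactly where the whole difficulty of the theorem sits and which you do not carry out. After your manipulations the integrand is a finite sum of terms of the form $R(x)e^{i(\mu-\lambda,x)}e^{-x^2/2}$ with $R$ a rational function having poles along $\cup_{\alpha}\Pi_\alpha$, and the individual integrals $\int_{i\xi+\mathbb{R}^n}(\alpha,x)^{-k}e^{i(\mu-\lambda,x)}e^{-x^2/2}\,dx$ are \emph{not} polynomials in $\mu$ times a Gaussian. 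Already for $\mathcal{A}=\{\alpha\}\subset\mathbb{R}$ with $m_\alpha=1$, where $\phi(x,\lambda)=(\lambda x-1)e^{\lambda x}$, the integrand of \eqref{int} splits as $\big(\lambda\mu-i(\mu-\lambda)x^{-1}+x^{-2}\big)e^{i(\mu-\lambda)x}e^{-x^2/2}$, and the $x^{-1}$ and $x^{-2}$ pieces separately integrate to error-function-type transcendents; only their sum collapses to $-e^{-(\mu-\lambda)^2/2}$. So a ``termwise evaluation'' cannot ``deliver the claimed polynomial-in-$\mu$ output''; the parity of the Laurent expansion (which is correct, and is what guarantees the absence of residues and hence contour independence) together with unspecified Plemelj-type identities does not supply the cancellation mechanism, and the matching against $(L_\mu-\lambda^2)^{|m|}$ applied to $A_m(\mu)^2e^{(\lambda,\mu)}$ is asserted rather than proved.

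The paper avoids this evaluation altogether, and your argument could be completed the same way: one first proves, via the residue computation that your parity observation makes possible, that $I_\xi$ does not change when $\xi$ crosses a wall; one then observes that $I_\xi(\lambda,\mu)e^{(\lambda^2+\mu^2)/2}$ is entire and satisfies the quasi-invariance conditions \eqref{quasiinv} in $(\lambda,\mu)$, computes exactly only the single top term $A_m(\lambda)A_m(\mu)e^{(\lambda,\mu)}$ coming from the leading coefficients in the expansion of $\phi$, bounds all remaining terms by a polynomial of degree at most $|m|-1$ in $\lambda$ (first for $\mathrm{Re}\,\lambda$ in a fixed cone, then for all $\lambda$ by moving $\xi$ to the other chambers, then by a Liouville argument), and finally invokes the uniqueness of the Baker--Akhiezer function. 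That uniqueness theorem is the ingredient your proposal never uses, and it is precisely what replaces the explicit evaluation you were unable to perform.
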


Before continuing to the proof of the theorem we note that a direct consequence is an integral expression for the value of the Baker-Akhiezer function at the origin. Indeed, suppose that $\phi(0,0)\neq 0$. It can be directly inferred from Berest's formula that
\begin{equation*}
	\phi(x,\lambda) = \big(P_{|m|}(x,\lambda)+\cdots+P_j(x,\lambda)+\cdots+P_0(x,\lambda)\big)e^{(x,\lambda)}
\end{equation*}
for some polynomials $P_j(x,\lambda)$ that are homogeneous of degree $j$ in both $x$ and $\lambda$. It follows that $\phi(x,0)=\phi(0,\lambda)=\phi(0,0)$. Setting $\lambda=\mu=0$ in Theorem \ref{intThm} we thus arrive at the following Corollary.

\begin{corollary}\label{phiCor}
Assume that $\phi(0,0)\neq 0$, and let $\xi$ be as in Theorem \ref{intThm}. Then, we have
\begin{equation*}
	\phi(0,0) = \left(\int_{i\xi+\mathbb{R}^n}\frac{d\gamma(x)}{A_m(x)^2}\right)^{-1}.
\end{equation*}
\end{corollary}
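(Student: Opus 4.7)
The plan is to apply Theorem \ref{intThm} with $\lambda = \mu = 0$ and to combine this with the observation, already flagged in the paragraph preceding the Corollary, that $\phi(x, 0) = \phi(0, \mu) = \phi(0,0)$ as functions on $\mathbb{R}^n$. Granted that identity, the right-hand side of \eqref{int} becomes simply $\phi(0,0)$, while the left-hand side factors as $\phi(0,0)^2 \int_{i\xi + \mathbb{R}^n} A_m(x)^{-2}\,d\gamma(x)$; dividing through by the nonzero $\phi(0,0)$ then gives the stated formula. The only thing that really has to be verified is the constancy of $\phi(x,0)$ in $x$.

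For that I would derive a bi-homogeneity property of $\phi$ directly from Berest's formula \eqref{Berest}. Consider the rescaling $(x,\lambda) \mapsto (tx,\, t^{-1}\lambda)$. The exponential $e^{(\lambda,x)}$ is invariant; the Laplacian $\Delta$ and each Dunkl-type term $(\alpha,x)^{-1}\partial_\alpha$ both acquire a factor $t^{-2}$, so $L$ scales as $t^{-2}L$ and $\lambda^2$ as $t^{-2}\lambda^2$; and $A_m(x)^2$ scales as $t^{2|m|}$. Hence $(L-\lambda^2)^{|m|}$ scales as $t^{-2|m|}$ and these two powers of $t$ cancel in \eqref{Berest}, yielding $\phi(tx, t^{-1}\lambda) = \phi(x,\lambda)$.

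Combining this with the prescribed form $\phi(x,\lambda) = P(x,\lambda)e^{(x,\lambda)}$ from \eqref{BAform}, where $P$ is a polynomial in $x$ of degree $|m|$ with top term $A_m(x)A_m(\lambda)$, I would decompose $P(x,\lambda) = \sum_{j=0}^{|m|} P_j(x,\lambda)$ into its $x$-homogeneous components. The invariance $\phi(tx, t^{-1}\lambda) = \phi(x,\lambda)$ forces each $P_j$ to be homogeneous of degree $j$ in $\lambda$ as well. Setting $\lambda = 0$ then kills every $P_j$ with $j \geq 1$ and leaves $\phi(x,0) = P_0$, a constant, whose value must equal $\phi(0,0)$. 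By the symmetry $\phi(x,\lambda) = \phi(\lambda,x)$ the same holds for $\phi(0,\mu)$.

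There is no serious obstacle: once Theorem \ref{intThm} is available, the whole argument is a homogeneity bookkeeping followed by a single substitution. The only mild subtlety is ensuring that the bi-homogeneity argument is carried out consistently with the normalisation of Berest's formula and the uniqueness of $\phi$; I would appeal to the uniqueness statement (Theorem 2.3 of \cite{CFV99}, as cited) to conclude that the rescaled function, which manifestly satisfies the Baker-Akhiezer axioms in the rescaled variables, must coincide with $\phi$ itself.
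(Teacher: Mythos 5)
Your proposal is correct and follows essentially the same route as the paper: the paper also sets $\lambda=\mu=0$ in Theorem \ref{intThm} after noting that Berest's formula forces $\phi(x,\lambda)=\big(\sum_j P_j(x,\lambda)\big)e^{(x,\lambda)}$ with each $P_j$ bi-homogeneous of degree $j$ in $x$ and $\lambda$, whence $\phi(x,0)=\phi(0,\lambda)=\phi(0,0)$. Your scaling argument $(x,\lambda)\mapsto(tx,t^{-1}\lambda)$ is just a careful justification of the bi-homogeneity that the paper states as "directly inferred" from \eqref{Berest}, and it is valid.
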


For all known cases of existence of the Baker-Akhiezer function the value $\phi(0,0)\neq 0$, see \cite{EG02, FV03}. We shall consider these cases in more detail in Section \ref{MMsec}-\ref{SectDeform}, and explicitly compute the corresponding value of $\phi(0,0)$.

\subsection{Proof of Theorem \ref{intThm}}
Let $I_\xi$ denote the integral in the left-hand side of \eqref{int}. It is clear that $I_\xi$ is well-defined for any choice of $\xi\in \mathbb{R}^n\setminus\cup_{\alpha\in\mathcal{A}}\Pi_\alpha$. A key ingredient in the proof of the theorem is the fact that $I_\xi$ does not depend on this choice.

\begin{lemma}\label{indepLemma}
The integral $I_\xi$ is independent of the value of $\xi$ provided it remains in $\mathbb{R}^n\setminus\cup_{\alpha\in\mathcal{A}}\Pi_\alpha$.
\end{lemma}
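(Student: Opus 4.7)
The plan is to establish the lemma by showing that $\partial I_\xi / \partial\xi_k = 0$ for each $k$, so that $I_\xi$ is locally constant on $\mathbb{R}^n\setminus\cup_{\alpha\in\mathcal{A}}\Pi_\alpha$, and hence constant as $\xi$ varies continuously in this open set. The whole argument reduces to differentiation under the integral sign combined with the fundamental theorem of calculus in one auxiliary direction.

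First I would parametrise the contour by $x = i\xi + y$ with $y\in\mathbb{R}^n$, rewriting
\begin{equation*}
	I_\xi = \int_{\mathbb{R}^n} f_\xi(y)\,dy,\qquad f_\xi(y) = (2\pi)^{-n/2}\frac{\phi(-ix,\lambda)\phi(ix,\mu)}{A_m(x)^2}e^{-x^2/2},\ \ x = i\xi+y.
\end{equation*}
For $\xi\notin\cup_\alpha\Pi_\alpha$ the inner product $(x,\alpha)=(y,\alpha)+i(\xi,\alpha)$ has a nonzero imaginary part for every $\alpha\in\mathcal{A}$, so $A_m(x)$ is non-vanishing along the contour and $f_\xi$ is smooth in $y$. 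The numerator $\phi(-ix,\lambda)\phi(ix,\mu)$ is a polynomial in $x$ times the oscillating factor $e^{ix\cdot(\mu-\lambda)}$ and grows at most exponentially in $|y|$, whereas the Gaussian satisfies $|e^{-x^2/2}| = e^{(\xi^2-y^2)/2}$, providing super-exponential decay in $y$ that dominates everything uniformly in a neighbourhood of $\xi$. Dominated convergence then justifies differentiation of $I_\xi$ under the integral sign.

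The key observation is a chain-rule identity: since $f_\xi(y)$ depends on $\xi$ and $y$ only through the combination $i\xi+y$, one has $\partial_{\xi_k} f_\xi = i\,\partial_{y_k} f_\xi$. Consequently
\begin{equation*}
	\frac{\partial I_\xi}{\partial \xi_k} \;=\; i\int_{\mathbb{R}^n}\partial_{y_k} f_\xi(y)\,dy,
\end{equation*}
which by Fubini's theorem equals $i\int_{\mathbb{R}^{n-1}}\bigl[f_\xi(y)\bigr]_{y_k=-\infty}^{y_k=+\infty}\,dy_{\neq k}$. The boundary terms vanish by the Gaussian decay established above, so $\partial I_\xi/\partial\xi_k = 0$ as desired.

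The main obstacle is purely analytical book-keeping: checking that the differentiation under the integral sign and the application of Fubini's theorem are legitimate, and that the boundary terms at $|y|\to\infty$ genuinely vanish. All of these ingredients follow from the Gaussian factor $e^{-y^2/2}$ dominating the polynomial and exponential growth of the entire numerator, combined with the fact that the singularities of $A_m(x)^{-2}$ are avoided on the contour precisely by the hypothesis $\xi\in\mathbb{R}^n\setminus\cup_\alpha\Pi_\alpha$ — no quasi-invariance of $\phi$ is needed for this step, only smoothness of the integrand on the contour.
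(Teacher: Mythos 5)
There is a genuine gap. Your argument shows that $I_\xi$ is locally constant on the open set $\mathbb{R}^n\setminus\cup_{\alpha\in\mathcal{A}}\Pi_\alpha$, and from this you conclude that it is ``constant as $\xi$ varies continuously in this open set.'' But that set is \emph{not connected}: removing even a single hyperplane through the origin disconnects $\mathbb{R}^n$, so the complement of the arrangement is a disjoint union of open chambers. A locally constant function is constant on each connected component and may perfectly well take different values on different chambers. Your computation (differentiation under the integral, $\partial_{\xi_k}f_\xi = i\,\partial_{y_k}f_\xi$, vanishing boundary terms) is a correct but routine substitute for the paper's one-line appeal to Cauchy's theorem, and it only establishes constancy \emph{within} a chamber. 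The actual content of the lemma is the passage between two adjacent chambers $C_\pm$ separated by some $\Pi_\alpha$: there the contour must sweep across the pole of order $2m_\alpha$ of $A_m(x)^{-2}$ sitting on $(\alpha,x)=0$, and the difference $I_{\xi_+}-I_{\xi_-}$ is an integral over $\Pi_\alpha$ of a residue, namely (up to a constant) $\partial_\alpha^{2m_\alpha-1}$ applied to $\bigl[\psi(-ix,\lambda)(-ix,\alpha)^{m_\alpha}\bigr]\bigl[\psi(ix,\mu)(ix,\alpha)^{m_\alpha}\bigr]e^{-x^2/2}$ restricted to $x\in\Pi_\alpha+i\Pi_\alpha$.

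Showing that this residue vanishes is exactly where the quasi-invariance conditions \eqref{quasiinv} enter: each factor satisfies the odd-normal-derivative vanishing property on $\Pi_\alpha$ (for $e^{-x^2/2}$ by reflection invariance, for $\psi(\pm ix,\cdot)(\pm ix,\alpha)^{m_\alpha}$ by Corollary 2.7 of \cite{CFV99}), and the property is closed under multiplication, so the $(2m_\alpha-1)$-st normal derivative of the product vanishes on $\Pi_\alpha$. Your closing remark that ``no quasi-invariance of $\phi$ is needed for this step'' is precisely the symptom of the gap: without it the lemma is false, since for a generic meromorphic integrand with poles of this order on $\Pi_\alpha$ the integral genuinely jumps as the contour crosses the hyperplane.
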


\begin{proof}
By Cauchy's theorem, the integral $I_\xi$ does not change when $\xi$ varies within a given chamber. It is thus sufficient to show that for any two adjacent chambers $C_+$ and $C_-$ we have $I_{\xi_+}=I_{\xi_-}$ for some $\xi_\pm\in C_\pm$. Fix two such chambers, and suppose that they are separated by the hyperplane $\Pi_\alpha$ for some $\alpha\in\mathcal{A}$. We may assume that $\xi_\pm = p\pm\eta\alpha/|\alpha|$ for some $p\in\Pi_\alpha$ and non-zero real number $\eta$.

We proceed to evaluate the difference $I_{\xi_+}-I_{\xi_-}$. To this end, we shall make use of the map $\Pi_\alpha\times\mathbb{R}\to\mathbb{R}^n$ given by $(y,z)\mapsto y+z\alpha/|\alpha|$. It is clear that the Jacobian determinant of this map equals 1. For convenience, we denote the integrand of $I_\xi$ by $F$, i.e.,
\begin{equation*}
	F(x) = \frac{\phi(-ix,\lambda)\phi(ix,\mu)}{A_m(x)^2}\frac{e^{-x^2/2}}{(2\pi)^{n/2}}.
\end{equation*}
Then, we have
\begin{multline*}
	I_{\xi_+} - I_{\xi_-} = \int_{\Pi_\alpha}\left(\int_{\mathbb{R}}F\left(i\xi_+ +y+z\frac{\alpha}{|\alpha|}\right)dz-\int_{\mathbb{R}}F\left(i\xi_- +y+z\frac{\alpha}{|\alpha|}\right)dz\right)dy\\  =\int_{\Pi_\alpha}\Bigg(\int_{\mathbb{R}}F\left(ip +y+(i\eta +z)\frac{\alpha}{|\alpha|}\right)dz\\ -\int_{\mathbb{R}}F\left(ip+y+(-i\eta+z)\frac{\alpha}{|\alpha|}\right)dz\Bigg)dy.
\end{multline*}

We note that the function $G_{ip+y}:w\mapsto F(ip+y+w\alpha/|\alpha|)$ on $\mathbb{C}$ is meromorphic with a pole of order $2m_\alpha$ located at $w=0$. It follows from the residue theorem that the difference between the two inner integrals is proportional to
\begin{equation*}
	\big((2m_\alpha-1)!\big)\mathrm{Res}_{w=0}\big(G_{ip+y}(w)\big) = \partial_\alpha^{2m_\alpha-1}\left(\frac{\phi(-ix,\lambda)\phi(ix,\mu)e^{-x^2/2}}{\prod_{\substack{\beta\in\mathcal{A}\\ \beta\neq\alpha}}(\beta,x)^{2m_\beta}}\right),\quad x=ip+y.
\end{equation*}
With $\psi(x,\lambda)=\phi(x,\lambda)/A_m(x)$, this residue takes the form
\begin{equation*}
	 \partial_\alpha^{2m_\alpha-1}\Big(\big\lbrack\psi(-ix,\lambda)(-ix,\alpha)^{m_\alpha}\big\rbrack\big\lbrack\psi(ix,\lambda)(ix,\alpha)^{m_\alpha}\big\rbrack e^{-x^2/2}\Big).
\end{equation*}
We note that each factor in the above expression satisfies the condition \eqref{quasiinv}. For $e^{-x^2/2}$ this is immediate from invariance under reflections, and for $\psi(\pm ix,\lambda)(\pm ix,\alpha)^{m_\alpha}$ this follows from Corollary 2.7 in \cite{CFV99}, see the remark succeeding \eqref{revquasiinv}. Since the condition \eqref{quasiinv} is closed under multiplication, the residue vanishes for $x=ip+y$, $p,y\in\Pi_\alpha$.
\end{proof}

Since differentiation under the integral is allowed in \eqref{int}, it is clear that $I_\xi(\lambda,\mu)$ is an entire function in both $\lambda$ and $\mu$. It is also clear that it satisfies the conditions \eqref{quasiinv} (with $(\lambda,\mu)$ substituted for $(x,\lambda)$). By uniqueness of the Baker-Akhiezer function, it is thus sufficient to verify that $I_\xi(\lambda,\mu)e^{(\lambda^2+\mu^2)/2}$ is of the form \eqref{BAform}.

Due to Lemma \ref{indepLemma} we may assume that $\xi$ is contained in the 'negative' chamber of $\mathbb{R}^n$, i.e., that $(\alpha,\xi)<0$ for all $\alpha\in\mathcal{A}$. We let $\mathcal{M}$ denote the set of multiplicities $n=\lbrace n_\alpha\rbrace_{\alpha\in\mathcal{A}}$ such that $n_\alpha=0,\ldots,m_\alpha$ for all $\alpha\in\mathcal{A}$. From Berest's formula \eqref{Berest} and the form of the Calogero-Moser operator \eqref{CM} it follows that
\begin{equation}
	\phi(x,\lambda) = e^{(x,\lambda)}\sum_{n\in\mathcal{M}}Q_n(\lambda)\prod_{\alpha\in\mathcal{A}}(\alpha,x)^{n_\alpha}
\end{equation}
for some polynomials $Q_n$ of degree $|n|\equiv\sum_{\alpha\in\mathcal{A}}n_\alpha$. In particular, we have $Q_m(\lambda) = A_m(\lambda)$. Inserting this expansion into the left-hand side of \eqref{int}, we deduce
\begin{equation}\label{Iexp}
	I_\xi(\lambda,\mu)e^{(\lambda^2+\mu^2)/2}e^{-(\lambda,\mu)} = \sum_{n,n^\prime\in\mathcal{M}}Q_n(\lambda)Q_{n^\prime}(\mu)(2\pi)^{-n/2}\int_{i\xi+\mathbb{R}^n}\frac{e^{-(x+i\lambda-i\mu)^2/2}dx}{\prod_{\alpha\in\mathcal{A}}(\alpha,x)^{2m_\alpha-n_\alpha-n^\prime_\alpha}}.
\end{equation}
For $n=n^\prime=m$ we have the term
\begin{equation*}
\begin{split}
	A_m(\lambda)A_m(\mu)\int_{i\xi+\mathbb{R}^n}\frac{e^{-(x+i\lambda-i\mu)e^2/2}dx}{(2\pi)^{n/2}} &= A_m(\lambda)A_m(\mu)\int_{\mathbb{R}^n}\frac{e^{-x^2/2}dx}{(2\pi)^{n/2}}\\ &= A_m(\lambda)A_m(\mu).
\end{split}
\end{equation*}

We proceed to show that the remaining terms are bounded by a polynomial of degree at most $|m|-1$. To this end, we observe that, for $\lambda\in\mathbb{C}^n$ such that $(\alpha,{\rm Re}\, \lambda)\geq 0$ for all $\alpha\in\mathcal{A}$, we can shift the domain of integration in all integrals in \eqref{Iexp} according to $\xi\to\xi-{\rm Re}\, \lambda$ without crossing any of the poles of the corresponding integrand. (Recall that $\xi$ is assumed to be contained in the 'negative' chamber.) Doing so, we rewrite them as follows:
\begin{equation*}
	\int_{i\xi+\mathbb{R}^n}\frac{e^{-(x-i\mu)^2/2}dx}{\prod_{\alpha\in\mathcal{A}}(\alpha,x-i\lambda)^{2m_\alpha-n_\alpha-n^\prime_\alpha}}.
\end{equation*}
For our purposes, a sufficient bound for these latter integrals is given in the Lemma below.

\begin{lemma}
Let $k\in\mathcal{M}$. Fix $\xi\in\mathbb{R}^n$ such that $(\alpha,\xi)<0$ for all $\alpha\in\mathcal{A}$, and fix $\mu\in\mathbb{C}^n$. Then, there exists a positive number $C$ such that
\begin{equation*}
	\left|\int_{i\xi+\mathbb{R}^n}\frac{e^{-(x-i\mu)^2/2}dx}{\prod_{\alpha\in\mathcal{A}}(\alpha,x-i\lambda)^{k_\alpha}}\right| < \frac{C}{\prod_{\alpha\in\mathcal{A}}\big|(\alpha,\lambda)\big|^{k_\alpha}},\quad \big(\alpha,{\rm Re}\, \lambda\big)\geq 0~\forall \alpha\in\mathcal{A}.
\end{equation*}
\end{lemma}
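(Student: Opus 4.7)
The plan is to pull the contour integral back to a real integral and then exploit a region decomposition matched to the denominator factors. With the parametrization $x = u + i\xi$, $u\in\mathbb{R}^n$, and the decompositions $\lambda = \lambda_1 + i\lambda_2$, $\mu = \mu_1 + i\mu_2$ into real and imaginary parts, one computes
$$|e^{-(x-i\mu)^2/2}| = C_{\xi,\mu}\, e^{-(u+\mu_2)^2/2}, \qquad |\alpha\cdot(x-i\lambda)|^2 = (\alpha\cdot u + a_\alpha)^2 + b_\alpha^2,$$
where $a_\alpha := \alpha\cdot\lambda_2$ and $b_\alpha := |\alpha\cdot\xi| + \alpha\cdot\lambda_1$. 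Two observations drive the proof: first, the strict chamber condition $(\alpha,\xi) < 0$ ensures $b_\alpha \geq |\alpha\cdot\xi| > 0$ \emph{uniformly} in $\lambda$; second, $|\alpha\cdot\lambda|^2 = (\alpha\cdot\lambda_1)^2 + a_\alpha^2 \leq a_\alpha^2 + b_\alpha^2$. Together with the harmless real translation $u\mapsto u-\mu_2$ (and setting $\tilde a_\alpha := a_\alpha - \alpha\cdot\mu_2$), these reduce the lemma to the apparently stronger estimate
$$\int_{\mathbb{R}^n} \frac{e^{-u^2/2}\,du}{\prod_{\alpha}\big((\alpha\cdot u + \tilde a_\alpha)^2 + b_\alpha^2\big)^{k_\alpha/2}} \leq \frac{C}{\prod_{\alpha}(a_\alpha^2 + b_\alpha^2)^{k_\alpha/2}},$$
with $C$ independent of $\lambda$.

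I would prove this by partitioning $\mathbb{R}^n$. Set $T_\alpha := \max(|\tilde a_\alpha|, b_\alpha)$, so $T_\alpha^2 \geq (\tilde a_\alpha^2+b_\alpha^2)/2$, and define the slab $G_\alpha := \{u : |\alpha\cdot u| \leq T_\alpha/2\}$. A short case analysis on whether $T_\alpha = |\tilde a_\alpha|$ or $T_\alpha = b_\alpha$ shows that on $G_\alpha$ one has $(\alpha\cdot u + \tilde a_\alpha)^2 + b_\alpha^2 \geq c(\tilde a_\alpha^2 + b_\alpha^2)$ for an absolute $c > 0$. Decompose $\mathbb{R}^n$ into the disjoint regions $R_S := \bigcap_{\alpha\in S}G_\alpha^c \cap \bigcap_{\beta\notin S}G_\beta$ indexed by subsets $S\subseteq\mathcal{A}$. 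On $R_S$, bound each denominator factor from below by the \emph{good} estimate $c(a_\beta^2 + b_\beta^2)^{k_\beta/2}$ for $\beta\notin S$ and by the \emph{trivial} estimate $b_\alpha^{k_\alpha} \geq |\alpha\cdot\xi|^{k_\alpha}$ for $\alpha\in S$. For $S\neq\emptyset$, choose $\alpha^*\in S$ maximizing $T_\alpha/|\alpha|$; the inclusion $R_S \subseteq G_{\alpha^*}^c$ forces $|u|\geq T_{\alpha^*}/(2|\alpha^*|)$, so a standard Gaussian tail bound yields $\int_{R_S}e^{-u^2/2}\,du \leq C_n\exp(-T_{\alpha^*}^2/(8|\alpha^*|^2))$.

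Combining these bounds, the ratio of the $R_S$-contribution to the target $C/\prod(a_\alpha^2+b_\alpha^2)^{k_\alpha/2}$ is
$$\leq C\prod_{\alpha\in S}(1 + \tilde a_\alpha^2/b_\alpha^2)^{k_\alpha/2}\, e^{-T_{\alpha^*}^2/C'},$$
which, owing to the uniform lower bound $b_\alpha\geq|\alpha\cdot\xi|$, is at most $C(1+T_{\alpha^*})^{|k|} e^{-T_{\alpha^*}^2/C'}$ and hence uniformly bounded in $\lambda$. The case $S=\emptyset$ is immediate from $\int_G e^{-u^2/2}\,du \leq (2\pi)^{n/2}$. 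Summing the finitely many contributions yields the reduced estimate. The main obstacle is this last combinatorial step: the naive one-factor-at-a-time bound $\big((\alpha\cdot u + \tilde a_\alpha)^2 + b_\alpha^2\big)^{k_\alpha/2} \geq b_\alpha^{k_\alpha}$ is much too lossy when $|\tilde a_\alpha|$ is large, and what rescues the argument is the super-polynomial Gaussian decay in the $\alpha^*$-direction, which absorbs the polynomial losses across all of $\mathcal{A}$---an absorption possible precisely because the strict chamber condition $(\alpha,\xi)<0$ keeps every $b_\alpha$ bounded away from zero uniformly in $\lambda$.
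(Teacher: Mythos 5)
Your argument is correct, but it takes a genuinely different route from the paper. The paper's proof is an algebraic trick: it multiplies the integral by $\prod_\alpha(\alpha,i\lambda)^{k_\alpha}$, writes $(\alpha,i\lambda)=(\alpha,x)-(\alpha,x-i\lambda)$, and expands by the binomial theorem, so that the product becomes a finite sum of integrals with integrands $e^{-(x-i\mu)^2/2}\prod_\alpha\big((\alpha,x)/(\alpha,x-i\lambda)\big)^{k_\alpha-\ell_\alpha}$; each of these is uniformly bounded in $\lambda$ because $|(\alpha,x-i\lambda)|\geq|(\alpha,\xi-\mathrm{Re}\,\lambda)|\geq|(\alpha,\xi)|>0$ on the contour while the Gaussian absorbs the polynomial growth of the numerators. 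You instead prove the bound directly by real-variable methods: after reducing to a real Gaussian integral you partition $\mathbb{R}^n$ into slabs adapted to each linear factor, use the ``good'' lower bound $c(\tilde a_\alpha^2+b_\alpha^2)$ on the slab $G_\alpha$ and the crude bound $b_\alpha^{k_\alpha}$ off it, and let the Gaussian tail in the direction of the worst factor absorb the resulting polynomial loss. Both proofs hinge on exactly the same two inputs --- the uniform lower bound on $|(\alpha,x-i\lambda)|$ coming from the chamber condition $(\alpha,\xi)<0$ together with $(\alpha,\mathrm{Re}\,\lambda)\geq0$, and Gaussian decay --- but the paper packages them into a two-line identity whereas you pay for the directness with a $2^{|\mathcal{A}|}$-term case analysis. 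Your version is more elementary and arguably more robust (it would survive replacing the integer exponents $k_\alpha$ by real ones, where the binomial cancellation is unavailable); the paper's is shorter and cleaner for the case at hand. One small point to tidy up: your reduced estimate is stated with $a_\alpha^2+b_\alpha^2$ on the right while the slab analysis naturally produces $\tilde a_\alpha^2+b_\alpha^2$; since $\mu$ is fixed and $b_\alpha\geq|(\alpha,\xi)|>0$, the two are comparable up to a constant depending only on $\xi$ and $\mu$, but this half-line of justification should be said explicitly.
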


\begin{proof}
For convenience, we introduce the short-hand notation
\begin{equation*}
	\binom{k}{\ell}\equiv \prod_{\alpha\in\mathcal{A}}\binom{k_\alpha}{\ell_\alpha},\quad k,\ell\in\mathcal{M}.
\end{equation*}
Then, we have
\begin{multline}\label{eqs}
	 \prod_{\alpha\in\mathcal{A}}(\alpha,i\lambda)^{k_\alpha}\int_{i\xi+\mathbb{R}^n}\frac{e^{-(x-i\mu)^2/2}dx}{\prod_{\alpha\in\mathcal{A}}(\alpha,x-i\lambda)^{k_\alpha}}\\ = \prod_{\alpha\in\mathcal{A}}(\alpha,x-(x-i\lambda))^{k_\alpha}\int_{i\xi+\mathbb{R}^n}\frac{e^{-(x-i\mu)^2/2}dx}{\prod_{\alpha\in\mathcal{A}}(\alpha,x-i\lambda)^{k_\alpha}}\\ = \sum_{\substack{\ell\in\mathcal{M}\\ \ell_\alpha\leq k_\alpha}}(-1)^{|\ell|}\binom{k}{\ell}\int_{i\xi+\mathbb{R}^n}e^{-(x-i\mu)^2/2}\prod_{\alpha\in\mathcal{A}}\left(\frac{(\alpha,x)}{(\alpha,x-i\lambda)}\right)^{k_\alpha-\ell_\alpha}dx,
\end{multline}
where $|\ell|=\sum_{\alpha\in\mathcal{A}}\ell_\alpha$. We note that, for $x\in \mathbb{R}^n+i\xi$,
\begin{equation*}
	\big|(\alpha,x-i\lambda)\big|>\big|(\alpha,\xi-{\rm Re}\, \lambda)\big|.
\end{equation*}
Consequently, as a function of $\lambda$, each integral in the last line of \eqref{eqs} is a bounded function on the subset of $\mathbb{C}^n$ given by $(\alpha,{\rm Re}\, \lambda)\geq 0$ for all $\alpha\in\mathcal{A}$.
\end{proof}

Since the $Q_n(\lambda)$ are polynomials of degree $|n|$ with $Q_m(\lambda)=A_m(\lambda)$, we can thus conclude that
\begin{equation*}
	I_\xi(\lambda,\mu)e^{(\lambda^2+\mu^2)/2} = e^{(\lambda,\mu)}\big(A_m(\lambda)A_m(\mu)+R(\lambda,\mu)\big)
\end{equation*}
for some function $R(\lambda,\mu)$ that is entire in $\lambda$, and that there exists a (positive) function $C(\mu)$ such that
\begin{equation*}
	|R(\lambda,\mu)| < C(\mu)\sum_{\substack{\alpha\in\mathbb{N}^n\\ |\alpha|\leq |m|-1}}|\lambda_1|^{\alpha_1}\cdots|\lambda_n|^{\alpha_n},\quad \big(\alpha,{\rm Re}\, \lambda\big)\geq 0~\forall \alpha\in\mathcal{A}.
\end{equation*}
Moreover, by moving $\xi$ to the other chambers of $\mathbb{R}^n$ we extend this bound to all $\lambda\in\mathbb{C}^n$. Since $R$ is an entire function, it is in fact a polynomial of degree at most $|m|-1$, and we have thus verified that $I_\xi(\lambda,\mu)e^{(\lambda^2+\mu^2)/2}$ is indeed of the form \eqref{BAform}. This concludes the proof of Theorem \ref{intThm}.

\section{A bilinear form on quasi-invariants}
We proceed to deduce a further consequence of Theorem \ref{intThm}, related to a natural bilinear form on the algebra of so-called quasi-invariants. Let us therefore fix a collection $\mathcal{A}$  of vectors $\alpha \in {\mathbb R}^n$ with multiplicities $m_\alpha\in \mathbb N$ such that the corresponding Baker-Akhiezer function $\phi$ exists. A (real) polynomial $p$ on $\mathbb{R}^n$ is said to be quasi-invariant if it satisfies \eqref{quasiinv} for all $\alpha\in\mathcal{A}$. It is readily seen that the set of such polynomials forms a ring, which we shall denote by $Q^{\mathcal{A}}$. Rings of such quasi-invariants appeared first in the work of Chalykh and one of the authors \cite{CV90} in the context of quantum Calogero-Moser systems.

We recall that to any $p\in Q^{\mathcal{A}}$ one can associate a differential operator $L_p$ by requiring that
\begin{equation}\label{hoCMeq}
	L_p\phi(x,\lambda)=p(\lambda)\phi(x,\lambda).
\end{equation}
Moreover, the set of such operators forms a commutative ring isomorphic to $Q^{\mathcal{A}}$, and the differential operator \eqref{CM} corresponds to the polynomial $x^2$, cf.~\eqref{CMeq}. For further details see, e.g., \cite{CFV99}.

On the algebra of quasi-invariants there is a natural bilinear form, given by
\begin{equation}\label{bilindef}
	(p,q)^{\mathcal{A}} = (L_pq)(0),\quad p,q\in Q^{\mathcal{A}}.
\end{equation}
As a straightforward consequence of Theorem \ref{intThm}, we can establish an integral representation of this bilinear form. It will become clear below that the assumption $\phi(0,0)\neq 0$ will again be essential.

Let $p,q\in Q^{\mathcal{A}}$. We now proceed in three steps to deduce the desired integral representation from \eqref{int}. First, using \eqref{CMeq}, we rewrite this equation as
\begin{equation*}
	\int_{i\xi+\mathbb{R}^n}\frac{\left(e^{-L/2}\phi(-ix,\lambda)\right)\left(e^{-L/2}\phi(ix,\mu)\right)}{A_m(x)^2}d\gamma(x) = \phi(\lambda,\mu),
\end{equation*}
where $L$ is the operator \eqref{CM}.
Second, acting by $L_q$ in the variables $\mu$ and making use of the fact $L_q$ commutes with $L$, and then setting $\mu=0$, we infer from \eqref{hoCMeq} and the assumption $\phi(0,0)\neq 0$ that
\begin{equation*}
	\int_{i\xi+\mathbb{R}^n}\frac{\left(e^{-L/2}\phi(-ix,\lambda)\right)\left(e^{-L/2}q(ix)\right)}{A_m(x)^2}d\gamma(x) = q(\lambda).
\end{equation*}
Third, acting by $L_p$ and setting $\lambda=0$, we similarly deduce
\begin{equation*}
	\phi(0,0)\int_{i\xi+\mathbb{R}^n}\frac{\left(e^{-L/2}p(-ix)\right)\left(e^{-L/2}q(ix)\right)}{A_m(x)^2}d\gamma(x) = (L_pq)(0).
\end{equation*}
We thus arrive at the following Theorem (cf. Equation (6) in Macdonald \cite{Mac80}):
\begin{theorem}\label{formThm}
Assume that $\phi(0,0)\neq 0$. Then we have
\begin{equation}
	(p,q)^{\mathcal{A}} = \phi(0,0)\int_{i\xi+\mathbb{R}^n}\frac{\left(e^{L/2}p\right)(-ix)\left(e^{L/2}q\right)(ix)}{A_m(x)^2}d\gamma(x),\quad p,q\in Q^{\mathcal{A}}.
	\end{equation}
\end{theorem}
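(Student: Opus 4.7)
The plan is to derive this formula directly from the integral identity of Theorem \ref{intThm} by acting on both sides of \eqref{int} with the commuting differential operators $L_p^\lambda$ and $L_q^\mu$ and then specialising $\lambda=\mu=0$. The first step is to recast \eqref{int} so that the Gaussian prefactor $e^{-(\lambda^2+\mu^2)/2}$ is absorbed into operator form on the integrand. A chain-rule computation shows that the Calogero-Moser operator $L$ in $x$ satisfies $L[\phi(\pm ix,\mu)] = -\mu^2\phi(\pm ix,\mu)$, since the substitution $y=\pm ix$ turns $L_y$ into $-L_x$ (both the Laplacian and each singular term pick up one factor of $-1$). Consequently $e^{-L/2}\phi(\pm ix,\mu) = e^{\mu^2/2}\phi(\pm ix,\mu)$, and \eqref{int} is equivalent to
\begin{equation*}
\int_{i\xi+\mathbb{R}^n}\frac{\bigl(e^{-L/2}\phi(-ix,\lambda)\bigr)\bigl(e^{-L/2}\phi(ix,\mu)\bigr)}{A_m(x)^2}\,d\gamma(x)=\phi(\lambda,\mu).
\end{equation*}

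Next I would act on both sides by $L_q^\mu$, the operator in the $\mu$ variable. By the symmetry $\phi(x,\mu)=\phi(\mu,x)$ together with \eqref{hoCMeq}, we have $L_q^\mu\phi(x,\mu)=q(x)\phi(x,\mu)$, while on the left $L_q^\mu$ commutes with the $x$-operator $e^{-L/2}$ and may be passed under the integral. Setting $\mu=0$ and using $\phi(x,0)=\phi(0,0)$, the hypothesis $\phi(0,0)\neq 0$ lets us cancel one factor of $\phi(0,0)$, producing an intermediate identity whose right-hand side is $q(\lambda)$. Applying $L_p^\lambda$ to that identity and specialising $\lambda=0$ in exactly the same fashion yields
\begin{equation*}
\phi(0,0)\int_{i\xi+\mathbb{R}^n}\frac{\bigl(e^{-L/2}p(-ix)\bigr)\bigl(e^{-L/2}q(ix)\bigr)}{A_m(x)^2}\,d\gamma(x)=(L_pq)(0)=(p,q)^{\mathcal{A}}.
\end{equation*}
To match the shape of the theorem, I would observe that the sign flip $L_x[r(\pm ix)] = -(Lr)(\pm ix)$ used in Step~1 iterates to give $e^{-L/2}[r(\pm ix)] = (e^{L/2}r)(\pm ix)$ for any polynomial $r$, converting the displayed identity into the formula as stated.

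The main obstacle is purely bookkeeping: tracking which variable each operator acts in (so that $L_p^\lambda$ and $L_q^\mu$ genuinely commute with the $x$-operator $L$ and with its exponential), and handling the sign flip of $L$ under the substitution $x\mapsto\pm ix$. Differentiation under the integral is legitimate because of the Gaussian decay in $\mathrm{Re}\,x$, and the assumption $\phi(0,0)\neq 0$ is used exactly twice — once after each specialisation $\mu=0$, $\lambda=0$ — to divide out the common factor.
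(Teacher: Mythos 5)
Your argument is correct and follows essentially the same three-step route as the paper: rewrite the integral identity by absorbing $e^{-(\lambda^2+\mu^2)/2}$ via the eigenvalue equation \eqref{CMeq} (with the sign flip of $L$ under $x\mapsto\pm ix$, which the paper leaves implicit), then act by $L_q^\mu$ and $L_p^\lambda$ in turn, specialising $\mu=0$ and $\lambda=0$. The only nitpick is that $\phi(0,0)\neq 0$ is needed to divide only once (after the $\mu=0$ step); after the $\lambda=0$ step the factor $\phi(0,0)$ simply remains in the final formula.
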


In particular, if $p$ and $q$ are {\it harmonic} in the sense that $Lp=Lq=0,$ then
\begin{equation}
	(p,q)^{\mathcal{A}} = \phi(0,0)\int_{i\xi+\mathbb{R}^n}\frac{p(-ix)q(ix)}{A_m(x)^2}d\gamma(x).
\end{equation}

\section{Macdonald-Mehta integrals -- Coxeter cases}\label{MMsec}

We recall that Corollary \ref{phiCor} yields an integral expression for the value of $\phi(0,0)$. The main purpose of this section is to compute these integrals when the configuration of vectors $\mathcal{A}=R_+$ with $R_+$ positive roots of a (finite) Coxeter group. We shall accomplish this by determining the analytic continuations of known evaluations of integrals of so-called Macdonald-Mehta type.

More specifically, we shall first consider the case of equal multiplicities, and deduce a uniform formula, valid for all Coxeter groups. In case the action on its root system consists of two orbits there are expressions for the corresponding Macdonald-Mehta integrals allowing for two distinct parameters, each associated to one of the two orbits. The latter cases consists of the groups given by the root systems $B_n$ (the root systems $C_n$ yield essentially the same integrals), $F_4$, as well as the dihedral groups $I_2(2m)$. Apart from the dihedral groups, which are contained in the discussion in Section \ref{2DMMSec}, these cases are treated below in Section \ref{2pSec}.

\subsection{Equal multiplicities}
Let $W$ be a finite Coxeter group, i.e., a finite group generated by orthogonal  reflections with respect to hyperplanes in a Euclidean space $V$ of some dimension $n$. We shall identify $V$ with $\mathbb{R}^n$, equipped with the standard positive definite symmetric bilinear form $(\cdot,\cdot)$. We fix a corresponding choice of positive roots $R_+$ such that $(\alpha,v)>0$ for all $\alpha\in R_+$ and some $v\in\mathbb{R}^n$, and we normalise all the roots by
\begin{equation*}
	(\alpha,\alpha)=2,\quad \alpha\in R_+.
\end{equation*}
In this section we shall consider only the case of equal multiplicities, i.e., we shall assume that $m_\alpha=m$ for all $\alpha\in R_+$ and for some $m\in\mathbb{N}$. This entails drastic simplifications for many of the formulae involved.

We continue by recalling the so-called Macdonald-Mehta integral associated to $W$. For that, we let $d_j$, $j=1,\ldots,n$, denote the degrees of $n$ homogeneous generators of the sub-algebra of $\mathbb{R}\lbrack x_1,\ldots,x_n\rbrack$ consisting of all $W$-invariant polynomials, see, e.g., Chapter 3 in Humphreys \cite{Hum90}.

\begin{theorem}\label{MMThm}
For ${\rm Re}\, k\geq 0$, one has
\begin{equation}\label{MMint}
	\int_{\mathbb{R}^n}\prod_{\alpha\in R_+}|(\alpha,x)|^{2k}d\gamma(x) = \prod_{j=1}^n\frac{\Gamma(1+kd_j)}{\Gamma(1+k)}.
\end{equation}
\end{theorem}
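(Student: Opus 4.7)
The plan is to derive a functional equation in $k$ using shift operators and then extend the identity to all $\mathrm{Re}\, k \geq 0$ by analytic continuation. Write $M_W(k)$ for the integral on the left of \eqref{MMint}. The Gaussian damping makes $M_W(k)$ holomorphic on $\mathrm{Re}\, k > -\varepsilon$ for some small $\varepsilon > 0$, and the value $M_W(0) = 1$ matches the right-hand side via $\Gamma(1) = 1$. The target of the induction is the ratio
$$
\frac{M_W(k+1)}{M_W(k)} = \prod_{j=1}^n \frac{\Gamma(1+(k+1)d_j)\,\Gamma(1+k)}{\Gamma(1+kd_j)\,\Gamma(2+k)}.
$$

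The key tool is the Heckman--Opdam shift operator $G^+(k)$, a $W$-invariant differential operator with polynomial-in-$k$ coefficients, satisfying an intertwining identity of the form
$$
\int_{\mathbb{R}^n} \bigl(G^+(k) f\bigr)(x)\, \delta_k(x)\, d\gamma(x) = c(k) \int_{\mathbb{R}^n} f(x)\, \delta_{k+1}(x)\, d\gamma(x),\qquad \delta_k(x) := \prod_{\alpha\in R_+}|(\alpha,x)|^{2k},
$$
for $W$-invariant test functions $f$, with an explicit scalar $c(k)$. Testing against $f \equiv 1$ (so that $G^+(k)\cdot 1$ reduces to a constant that one computes directly) extracts the recursion $M_W(k+1) = c(k)\,M_W(k)$, and a direct calculation shows that $c(k)$ coincides with the Gamma ratio displayed above. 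Iterating from $k = 0$ yields the formula for all nonnegative integers, and Carlson's theorem---applied to $k \mapsto M_W(k)$ and the Gamma product on the right, both of which are of order one and agree on $\mathbb{Z}_{\geq 0}$---extends \eqref{MMint} to every $k$ with $\mathrm{Re}\, k \geq 0$.

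The main obstacle is the uniform construction of $G^+(k)$ and the identification of its scalar $c(k)$ across all finite Coxeter groups. Opdam \cite{Opd89} settled this for Weyl groups by specialising the trigonometric Heckman--Opdam shift operators to the rational limit, where the interplay with the degrees $d_j$ produces exactly the right Gamma ratio; the non-crystallographic cases $H_3$ and $H_4$ resisted this approach and were verified in \cite{Opd93} through a computer-assisted calculation of Garvan. A conceptually uniform proof avoiding case analysis was later supplied by Etingof \cite{Eti10} via representation theory of the rational Cherednik algebra, replacing the shift operator calculation by an algebraic identity valid for all Coxeter types simultaneously; in a modern presentation one would likely prefer that route.
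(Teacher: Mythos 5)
The paper does not prove this theorem at all: it is imported verbatim from the literature, with the introduction recording the history (Opdam's shift-operator computation for Weyl groups, the Garvan--Opdam computer verification for $H_3$ and $H_4$, and Etingof's later uniform proof). So there is no internal proof to compare against, and your outline --- which is essentially a roadmap to those same references --- is in the same spirit as what the paper does. That said, two steps of your sketch would not survive being written out. First, Carlson's theorem cannot be applied to $M_W(k)$ and the Gamma product separately: the right-hand side $\prod_j\Gamma(1+kd_j)/\Gamma(1+k)$ grows like $\exp\bigl(|R_+|\,k\log k\bigr)$ as $k\to+\infty$ along the reals, so it is not of exponential type and is certainly not ``of order one'' in the sense Carlson requires. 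The standard repair is to form the ratio $Q(k)=M_W(k)\big/\prod_j\frac{\Gamma(1+kd_j)}{\Gamma(1+k)}$, observe that the recursion makes $Q$ $1$-periodic with $Q(0)=1$, and then prove a growth bound on $Q$ as $|\mathrm{Im}\,k|\to\infty$ (via Stirling on the denominator and a direct estimate on the integral) to force $Q\equiv 1$; this is precisely the argument in \cite{Eti10}.

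Second, the recursion step as you describe it is too optimistic. The rational shift operator $G^+(k)$ has principal symbol $\prod_{\alpha\in R_+}\partial_\alpha$ and annihilates constants, so ``testing against $f\equiv 1$'' does not directly produce the nonzero scalar $c(k)$; in Opdam's actual computation one pairs the shift operator with the Gaussian (or with the discriminant $\prod_{\alpha\in R_+}(\alpha,x)$) and integrates by parts using Dunkl operators, and the identification of $c(k)$ with the displayed Gamma ratio is the entire content of the theorem --- it is exactly the step that resisted a uniform treatment and forced the case-by-case verification for $H_3$ and $H_4$. As it stands your text is an accurate survey of how the theorem is proved in the literature rather than a proof; given that the paper itself treats the statement as a known input, that is defensible, but you should either cite the recursion and the growth estimate precisely or reproduce Etingof's argument in full rather than gesture at Carlson's theorem in a form that does not apply.
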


As we shall see below, the value of $\phi(0,0)$, with $\phi$ the Baker-Akhiezer function associated to the configuration $\mathcal{A}\equiv R_+$, can be deduced from the Macdonald-Mehta integral \eqref{MMint}. To this end, we shall first determine the effect of replacing each factor $|(\alpha,x)|$, $\alpha\in R_+$, by $(\alpha,x)$ in the left-hand side of \eqref{MMint}.

Let $C_+$ denote the 'positive' chamber of $\mathbb{R}^n$, given by $(\alpha,x)>0$ for all $\alpha\in R_+$. Fixing $\xi\in C_+$, the (real) hyperplane $it\xi+\mathbb{R}^n$, $t>0$, does not intersect any of the reflection hyperplanes $(\alpha,x)=0$, $\alpha\in R_+$. It follows that the product $\prod_{\alpha\in R_+}(\alpha,x)^{2k}$ has a (unique) continuous branch on $it\xi+\mathbb{R}^n$ that tends to the positive branch on $C_+$ as $t\downarrow 0$. For this branch, we have the following Lemma.

\begin{lemma}
For $\xi\in C_+$ and ${\rm Re}\, k\geq 0$, one has
\begin{equation}\label{anCont}
	\int_{i\xi+\mathbb{R}^n}\prod_{\alpha\in R_+}(\alpha,x)^{2k}d\gamma(x) = \frac{1}{|W|}\left(\prod_{j=1}^n\frac{1-e^{2\pi ikd_j}}{1-e^{2\pi ik}}\right)\int_{\mathbb{R}^n}\prod_{\alpha\in R_+}|(\alpha,x)|^{2k}d\gamma(x).
\end{equation}
\end{lemma}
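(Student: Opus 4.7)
The plan is to deform the contour $i\xi+\mathbb{R}^n$ to $\mathbb{R}^n$ by shrinking $\xi$ to $0$ along a ray in $C_+$, compute the boundary value of the chosen branch of $\prod_{\alpha\in R_+}(\alpha,x)^{2k}$ on each Weyl chamber, and recognise the resulting sum as the Poincar\'e polynomial of $W$ evaluated at $q=e^{2\pi ik}$.

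For the first step, fix $y\in wC_+$ and look at $(\alpha,y+it\xi)^{2k}$ as $t\to 0^+$. When $(\alpha,y)>0$ the factor tends to the positive real $|(\alpha,y)|^{2k}$. When $(\alpha,y)<0$ the complex number $(\alpha,y)+it(\alpha,\xi)$ lies in the upper half-plane (since $(\alpha,\xi)>0$ for $\alpha\in R_+$, $\xi\in C_+$), its argument tends to $\pi^-$, and the factor tends to $e^{2\pi ik}|(\alpha,y)|^{2k}$. The roots of the second type are precisely those $\alpha\in R_+$ with $w^{-1}\alpha\in R_-$, of cardinality $\ell(w)$. Because a continuous path from $C_+$ to $wC_+$ lifted to the contour $\cdot+i\xi$ keeps each factor in the open upper half-plane, no factor ever crosses zero, so the branch prescribed by continuous extension from $C_+$ agrees with these principal-branch limits, giving
\begin{equation*}
\lim_{t\to 0^+}\prod_{\alpha\in R_+}(\alpha,y+it\xi)^{2k}=e^{2\pi ik\ell(w)}\prod_{\alpha\in R_+}|(\alpha,y)|^{2k},\qquad y\in wC_+.
\end{equation*}

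For the second step, the integrand on $it\xi+\mathbb{R}^n$, $t\in[0,1]$, is uniformly dominated by an integrable function of the form $C\prod_\alpha(|(\alpha,y)|+|(\alpha,\xi)|)^{2\operatorname{Re}k}e^{-y^2/2}$, so dominated convergence justifies passing to the chamber-wise limit. Using the $W$-invariance of $\prod_\alpha|(\alpha,y)|^{2k}$ and of $d\gamma$, every chamber integral equals $|W|^{-1}\int_{\mathbb{R}^n}\prod_\alpha|(\alpha,y)|^{2k}d\gamma(y)$, whence
\begin{equation*}
\int_{i\xi+\mathbb{R}^n}\prod_{\alpha\in R_+}(\alpha,x)^{2k}d\gamma(x)=\frac{1}{|W|}\Bigl(\sum_{w\in W}e^{2\pi ik\ell(w)}\Bigr)\int_{\mathbb{R}^n}\prod_{\alpha\in R_+}|(\alpha,y)|^{2k}d\gamma(y).
\end{equation*}

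Finally, invoking the classical Shephard-Todd-Chevalley factorisation $\sum_{w\in W}q^{\ell(w)}=\prod_{j=1}^n(1-q^{d_j})/(1-q)$ and specialising to $q=e^{2\pi ik}$ produces exactly the factor in \eqref{anCont}. The main obstacle is the first step: one must verify that the continuous extension contributes $+2\pi ik$ (rather than $-2\pi ik$) per negated root, which hinges on the sign convention $\xi\in C_+$ that forces each affected factor into the \emph{upper} half-plane. The specialisation is nominally $0/0$ when $k$ is a positive integer, but then the integrand is single-valued, the left-hand side collapses via Cauchy's theorem to the real Macdonald-Mehta integral, and the $0/0$ limit yields $\prod_jd_j=|W|$; thus the identity extends by analytic continuation to the whole half-plane $\operatorname{Re}k\ge 0$.
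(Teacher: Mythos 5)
Your proof is correct and follows essentially the same route as the paper's (which defers to Etingof's argument and to the proof of its two-parameter analogue, Lemma 4.2): chamber-by-chamber boundary values of the branch contributing $e^{2\pi ik\ell(w)}$, summed into the Poincar\'e polynomial $\sum_{w\in W}q^{\ell(w)}=\prod_{j=1}^n(1-q^{d_j})/(1-q)$ at $q=e^{2\pi ik}$, combined with a contour deformation and $W$-invariance of $\prod_\alpha|(\alpha,x)|^{2k}\,d\gamma$. Your identification of the branch via the observation that each factor $(\alpha,y)+it(\alpha,\xi)$ stays in the open upper half-plane is a clean substitute for the paper's induction on word length, and your explicit handling of dominated convergence and of the removable $0/0$ at integer $k$ fills in details the paper leaves implicit.
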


\begin{remark}
The statement can be readily extracted from the proof of Proposition 4.8 in Etingof \cite{Eti10}, see also the proof of Lemma \ref{contLemma} below.
\end{remark}

The left-hand side of \eqref{anCont} is clearly an entire function in $k$. On the other hand, using the reflection equation $\Gamma(z)\Gamma(1-z)=\pi/\sin\pi z$ and Theorem \ref{MMThm}, we find that the right-hand side is given by
\begin{equation*}
	\frac{1}{|W|}\prod_{j=1}^ne^{\pi ik(d_j-1)}\frac{\Gamma(-k)}{\Gamma(-kd_j)},
\end{equation*}
which yields its analytic continuation to ${\rm Re}\, k<0$. In this equation we may set $k=-m$, $m\in\mathbb{N}$, and thus arrive at the desired evaluation of $\phi(0,0)$, cf.~Corollary \ref{phiCor}. The resulting formula can be simplified somewhat by recalling that
\begin{equation*}
	\sum_{j=1}^n(d_j-1) = |R_+|,
\end{equation*}
see, e.g., Section 3.9 in Humphreys \cite{Hum90}.

\begin{proposition}\label{intProp}
Let $\xi\in\mathbb{R}^n$ be such that $(\alpha,\xi)\neq 0$ for all $\alpha\in R_+$, and let $m\in\mathbb{N}$. Then, we have
\begin{equation*}
	\int_{i\xi+\mathbb{R}^n}\frac{d\gamma(x)}{\prod_{\alpha\in R_+}(\alpha,x)^{2m}}dx = \frac{(-1)^{m|R_+|}}{|W|}\prod_{j=1}^n\frac{\Gamma(m)}{\Gamma(md_j)}.
\end{equation*}
\end{proposition}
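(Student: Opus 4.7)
The strategy is to take the identity supplied by the preceding Lemma, evaluate the real Macdonald--Mehta integral on its right-hand side using Theorem \ref{MMThm}, convert the resulting expression to a form that is meromorphic in $k$ on all of $\mathbb{C}$, and then specialise to $k=-m$.

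\textbf{Step 1.} Combining the Lemma with Theorem \ref{MMThm} yields, for $\mathrm{Re}\,k\geq 0$,
\begin{equation*}
\int_{i\xi+\mathbb{R}^n}\prod_{\alpha\in R_+}(\alpha,x)^{2k}\,d\gamma(x)
= \frac{1}{|W|}\prod_{j=1}^n \frac{1-e^{2\pi i k d_j}}{1-e^{2\pi i k}}\cdot\frac{\Gamma(1+kd_j)}{\Gamma(1+k)}.
\end{equation*}
Rewriting $1-e^{2\pi i z}=-2ie^{\pi i z}\sin(\pi z)$ and applying the reflection formula $\Gamma(z)\Gamma(1-z)=\pi/\sin(\pi z)$ in the form $\sin(\pi z)\Gamma(1+z)=-\pi/\Gamma(-z)$ to both the $kd_j$-factors and the $k$-factor, the trigonometric and gamma factors collapse into
\begin{equation*}
\int_{i\xi+\mathbb{R}^n}\prod_{\alpha\in R_+}(\alpha,x)^{2k}\,d\gamma(x)
= \frac{1}{|W|}\prod_{j=1}^n e^{\pi i k(d_j-1)}\,\frac{\Gamma(-k)}{\Gamma(-kd_j)}.
\end{equation*}

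\textbf{Step 2.} Both sides of this last identity admit meromorphic continuations to $k\in\mathbb{C}$: the left-hand side is in fact entire, since on the shifted contour $i\xi+\mathbb{R}^n$ the integrand is smooth (as $\xi$ avoids the hyperplanes) and the Gaussian factor dominates, so differentiation under the integral with respect to $k$ is legitimate. Hence the identity extends by analytic continuation to every $k$ at which the right-hand side is finite. In particular, at $k=-m$ with $m\in\mathbb{N}$ one has $\Gamma(-k)=\Gamma(m)$ and $\Gamma(-kd_j)=\Gamma(md_j)$, both finite, so the extension is valid there.

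\textbf{Step 3.} Setting $k=-m$ gives $e^{\pi i k(d_j-1)}=(-1)^{m(d_j-1)}$, and the product over $j$ equals $(-1)^{m\sum_j(d_j-1)}=(-1)^{m|R_+|}$ by the identity $\sum_j(d_j-1)=|R_+|$ quoted above. This produces the claimed formula.

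\textbf{Main obstacle.} The only delicate point is the passage from the range $\mathrm{Re}\,k\geq 0$, in which the Lemma is stated, to the negative integer value $k=-m$; this is handled by the reformulation in Step 1, which replaces the originally singular factors $(1-e^{2\pi i k})^{-1}$ and $\Gamma(1+kd_j)$ by expressions that are manifestly finite at $k=-m$, together with the observation that the contour integral on the left is entire in $k$.
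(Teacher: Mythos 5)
Your proof is correct and follows essentially the same route as the paper: rewrite the right-hand side of the Lemma via the reflection formula as $\frac{1}{|W|}\prod_j e^{\pi i k(d_j-1)}\Gamma(-k)/\Gamma(-kd_j)$, continue analytically (the left-hand side being entire in $k$), and set $k=-m$ using $\sum_j(d_j-1)=|R_+|$. The only point you leave implicit is that the Lemma is stated for $\xi\in C_+$ while the Proposition allows any $\xi$ off the hyperplanes; the paper covers this in a remark by invoking the contour-independence Lemma \ref{indepLemma} with $\lambda=\mu=0$.
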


\begin{remark}
The fact that we need not require that $\xi$ is contained in the 'positive' chamber $C_+$ can be directly inferred from Lemma \ref{indepLemma} by setting $\lambda=\mu=0$ in the integral $I_\xi$.
\end{remark}

We note that the integral in Proposition \ref{intProp} is related to the Macdonald-Mehta integral in Theorem \ref{MMThm}, for $k=m$, by a simple 'reflection' equation.

\begin{corollary}
Let $\xi$ be as in Proposition \ref{intProp}. For $m\in\mathbb{Z}$, let
\begin{equation*}
	G_W(m) = \int_{i\xi+\mathbb{R}^n}\prod_{\alpha\in R_+}(\alpha,x)^{2m}d\gamma(x).
\end{equation*}
Then, we have
\begin{equation*}
	G_W(m)G_W(-m) = (-1)^{m|R_+|}.
\end{equation*}
\end{corollary}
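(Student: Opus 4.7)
The plan is to evaluate $G_W(m)$ and $G_W(-m)$ separately and combine them. It suffices to treat $m\ge 0$, since $G_W(-m)G_W(m)$ is manifestly invariant under $m\mapsto -m$ and $(-1)^{m|R_+|}$ is also (as $m|R_+|$ and $-m|R_+|$ have the same parity). The case $m=0$ is trivial, giving $G_W(0)=1$.

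For $m\in\mathbb{N}$ the integrand defining $G_W(m)$ is entire and of Gaussian decay, so by Cauchy's theorem (or by Lemma \ref{indepLemma}, which in the absence of singular hyperplanes extends to $\xi=0$) the contour may be collapsed back to $\mathbb{R}^n$. On the positive chamber the polynomial $\prod_{\alpha\in R_+}(\alpha,x)^{2m}$ coincides with $\prod_{\alpha\in R_+}|(\alpha,x)|^{2m}$, and by $W$-invariance this equality holds on all of $\mathbb{R}^n$. Theorem \ref{MMThm} with $k=m$ therefore gives
\begin{equation*}
G_W(m)=\prod_{j=1}^{n}\frac{\Gamma(1+md_j)}{\Gamma(1+m)}.
\end{equation*}

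Meanwhile Proposition \ref{intProp} (which is already known to be independent of the choice of $\xi$ in $\mathbb{R}^n\setminus\bigcup_\alpha\Pi_\alpha$) supplies
\begin{equation*}
G_W(-m)=\frac{(-1)^{m|R_+|}}{|W|}\prod_{j=1}^{n}\frac{\Gamma(m)}{\Gamma(md_j)}.
\end{equation*}

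Multiplying the two expressions and applying the functional equation $\Gamma(1+z)=z\Gamma(z)$ to each factor, the gamma functions telescope to $\prod_{j=1}^{n}(md_j/m)=\prod_{j=1}^{n}d_j$. The standard identity $\prod_{j=1}^{n}d_j=|W|$ for the product of fundamental degrees of a finite Coxeter group (see, e.g., Section 3.9 in Humphreys \cite{Hum90}) then cancels the factor $1/|W|$, leaving $G_W(m)G_W(-m)=(-1)^{m|R_+|}$. There is no substantive obstacle; the only point requiring care is the justification that for $m\ge 0$ the contour in $G_W(m)$ can be moved back to $\mathbb{R}^n$, which is immediate because the integrand is entire and Gaussian.
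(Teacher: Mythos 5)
Your proposal is correct and follows essentially the same route as the paper: the authors likewise obtain the identity by combining Theorem \ref{MMThm} (for $G_W(m)$), Proposition \ref{intProp} (for $G_W(-m)$), and the degree-product formula $\prod_{j=1}^n d_j = |W|$. Your added justification that the contour in $G_W(m)$ can be collapsed to $\mathbb{R}^n$ for $m\ge 0$ is a reasonable piece of bookkeeping that the paper leaves implicit.
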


\begin{proof}
The statement is a direct consequence of Proposition \ref{intProp}, Theorem \ref{MMThm}, and the fact that
\begin{equation}\label{degprod}
	\prod_{j=1}^nd_j=|W|
\end{equation}
(see, e.g., Section 3.9 in Humphreys \cite{Hum90} for its proof).
\end{proof}

We conclude this section by briefly discussing the connection to the theory of so-called $m$-harmonic polynomials. The space $H_m$ of such polynomials was introduced in \cite{FV02} as the space of joint solutions of the partial differential equations
\begin{equation*}
	L_p\psi = 0,\quad p\in\mathbb{R}\lbrack x_1,\ldots,x_n\rbrack^W.
\end{equation*}
Here, $\mathbb{R}\lbrack x_1,\ldots,x_n\rbrack^W$ denotes the algebra of polynomials invariant under the natural action of the Coxeter group $W$. It is known that all joint solutions of these equations are polynomials, that the dimension of $H_m$ is $|W|$, and that the subspace consisting of the $m$-harmonic polynomials of the highest degree is spanned by the $m$-discriminant
\begin{equation*}
	w_m(x)\equiv\prod_{\alpha\in R_+}(\alpha,x)^{2m+1},
\end{equation*}
see \cite{FV02, EG02}. We note that, for $m$-harmonic polynomials, Theorem \ref{formThm} remains valid if we substitute $p$ for $(e^{L/2}p)$. It follows that
\begin{equation*}
	(w_m,w_m)^{R_+} = \phi(0,0)\int_{\mathbb{R}^n}\prod_{\alpha\in R_+}(\alpha,x)^{2m+2}d\gamma(x).
\end{equation*}
Recalling Corollary \ref{phiCor}, we can thus deduce the value of $(w_m,w_m)^{R_+}$ by combining Theorem~\ref{MMThm} and Proposition \ref{intProp} and making use of \eqref{degprod}. In this way we arrive at the following Proposition.

\begin{proposition}
Let $(x)_n$ be the Pochhammer symbol defined by  $$(x)_n\equiv (x)(x+1)\cdots(x+n-1).$$
Then we have
\begin{equation*}
	(w_m,w_m)^{R_+} = (-1)^{m|R_+|}\prod_{j=1}^n(m+2)_{(m+1)(d_j-1)}(m+1)_{m(d_j-1)}.
\end{equation*}
\end{proposition}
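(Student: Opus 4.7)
The starting point is the identity displayed just before the statement,
$$
(w_m,w_m)^{R_+} = \phi(0,0)\int_{\mathbb{R}^n}\prod_{\alpha\in R_+}(\alpha,x)^{2m+2}\,d\gamma(x),
$$
so the plan is to evaluate the two factors on the right separately and then combine. Since the exponent $2m+2$ is even, the integrand equals $\prod_{\alpha\in R_+}|(\alpha,x)|^{2(m+1)}$, which is precisely the integrand of the Macdonald--Mehta integral \eqref{MMint} with $k=m+1$. Thus Theorem~\ref{MMThm} gives
$$
\int_{\mathbb{R}^n}\prod_{\alpha\in R_+}(\alpha,x)^{2m+2}\,d\gamma(x) = \prod_{j=1}^n\frac{\Gamma(1+(m+1)d_j)}{\Gamma(m+2)}.
$$

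For the prefactor $\phi(0,0)$, I would invoke Corollary~\ref{phiCor}, which identifies $\phi(0,0)$ as the reciprocal of the integral appearing in Proposition~\ref{intProp}. Combining these two results yields
$$
\phi(0,0) = (-1)^{m|R_+|}\,|W|\,\prod_{j=1}^{n}\frac{\Gamma(md_j)}{\Gamma(m)}.
$$
Multiplying this expression by the Macdonald--Mehta evaluation above and using the identity \eqref{degprod}, namely $|W| = \prod_{j=1}^n d_j$, gives
$$
(w_m,w_m)^{R_+} = (-1)^{m|R_+|}\prod_{j=1}^n d_j\cdot\frac{\Gamma(md_j)\,\Gamma(1+(m+1)d_j)}{\Gamma(m)\,\Gamma(m+2)}.
$$

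The remaining step is purely cosmetic: translate the gamma-function quotients into the Pochhammer symbols appearing in the statement. Using $(x)_n = \Gamma(x+n)/\Gamma(x)$, one directly computes
$$
(m+2)_{(m+1)(d_j-1)} = \frac{\Gamma(1+(m+1)d_j)}{\Gamma(m+2)},\qquad
(m+1)_{m(d_j-1)} = \frac{\Gamma(md_j+1)}{\Gamma(m+1)},
$$
and the extra factor $d_j/\Gamma(m)$ is absorbed using $\Gamma(md_j+1) = md_j\,\Gamma(md_j)$ together with $\Gamma(m+1)=m\Gamma(m)$. This matches the claimed formula factor-by-factor. I do not foresee any real obstacle here: once Theorem~\ref{MMThm}, Proposition~\ref{intProp} and \eqref{degprod} are in hand, the proof is essentially an exercise in gamma-function bookkeeping, with the only mildly delicate point being the observation that the evenness of the exponent $2m+2$ lets one pass freely between $(\alpha,x)^{2m+2}$ and $|(\alpha,x)|^{2m+2}$, so that Theorem~\ref{MMThm} applies without an analytic-continuation argument.
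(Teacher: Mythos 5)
Your proposal is correct and follows exactly the route the paper indicates: it starts from the displayed identity $(w_m,w_m)^{R_+}=\phi(0,0)\int_{\mathbb{R}^n}\prod_{\alpha\in R_+}(\alpha,x)^{2m+2}\,d\gamma(x)$, evaluates the integral via Theorem~\ref{MMThm} with $k=m+1$, gets $\phi(0,0)$ from Corollary~\ref{phiCor} and Proposition~\ref{intProp}, and absorbs $|W|=\prod_j d_j$ via \eqref{degprod}. The gamma-to-Pochhammer bookkeeping checks out factor by factor, so there is nothing to add.
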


It is clear from \eqref{bilindef} that $(1,1)^{R_+}=1$. This shows that, when restricted to the space of $m$-harmonics $H_m$, the bilinear form $(\cdot,\cdot)^{\mathcal{A}}$ is (at least in general) indefinite. It would thus be interesting to determine its signature.

\subsection{Cases allowing two distinct multiplicities}\label{2pSec}
Here, we shall consider the Coxeter groups corresponding to the root systems $B_n$ and $F_4$, which allow for two distinct multiplicities. (The dihedral root systems are contained as a special case in Section \ref{2DMMSec}.) We recall that the two orbits of these root systems under the action of the Coxeter group consist of 'short' and 'long' roots, respectively. We fix the following choices of positive roots: $R_+\equiv R_{+,s}\cup R_{+,l}$ with short roots $R_{+,s}$ consisting of
\begin{gather*}
	 e_j, 1\leq j\leq n, \quad \mathrm{(B_n)}\\
	e_j, 1\leq j\leq 4,\quad \frac{1}{2}(e_1\pm e_2\pm e_3\pm e_4), \quad \mathrm{(F_4)}\\
\end{gather*}
and long roots $R_{+,l}$ being
\begin{gather*}
	e_j\pm e_k,\ 1\leq j<k\leq n, \quad \mathrm{(B_n)}\\
	e_j\pm e_k,\ 1\leq j<k\leq 4. \, \quad \mathrm{(F_4)}\\
\end{gather*}
To the short roots we associate the multiplicity $m_1$ and to the long $m_2$. The relevant Macdonald-Mehta integrals are stated in the following Theorem.

\begin{theorem}\label{MMThm2}
Let
\begin{equation*}
	\Delta_a(x) = \prod_{\alpha\in R_{a,+}}(\alpha,x),\quad a=s,l.
\end{equation*}
For ${\rm Re}\, k_1,{\rm Re}\, k_2\geq 0$, the integrals
\begin{equation*}
	\int_{\mathbb{R}^n}|\Delta_s(x)|^{2k_1}|\Delta_l(x)|^{2k_2}d\gamma(x)
\end{equation*}
are given by
\begin{gather*}
	2^{-nk_1}\prod_{j=1}^n\frac{\Gamma\big(1+2k_1+2k_2(j-1)\big)}{\Gamma\big(1+k_1+k_2(j-1)\big)}\frac{\Gamma(1+jk_2)}{\Gamma(1+k_2)}, \quad \mathrm{(B_n)}\\
	2^{-12k_1}\frac{\Gamma(4(k_1+k_2)+1)\Gamma(6(k_1+k_2)+1)}{\Gamma(k_1+k_2+1)\Gamma(3(k_1+k_2)+1)}\\ \times\prod_{j=1,2}\frac{\Gamma(2k_j+1)\Gamma(3k_j+1)\Gamma(2k_j+2(k_1+k_2)+1)}{\Gamma(k_j+1)\Gamma(k_j+1)\Gamma(k_j+k_1+k_2+1)}. \quad \mathrm{(F_4)}\\
\end{gather*}
\end{theorem}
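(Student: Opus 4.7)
Both evaluations are instances of the two-parameter Macdonald--Mehta integrals for root systems with two root lengths. My plan is to handle $B_n$ by an explicit reduction to the Selberg--Laguerre integral, and to invoke Opdam's shift-operator computation for $F_4$, where no comparably elementary reduction is available.

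For $B_n$, I would first use the factorisations $|\Delta_s(x)|^{2k_1}=\prod_i|x_i|^{2k_1}$ and $|\Delta_l(x)|^{2k_2}=\prod_{i<j}|x_i^2-x_j^2|^{2k_2}$. Since the integrand is even in each $x_i$ and symmetric in permutations, I restrict to the positive orthant at the cost of a factor $2^n$ and substitute $y_i=x_i^2$, whose Jacobian $\prod_i(2\sqrt{y_i})^{-1}$ absorbs this $2^n$ and shifts the exponent of $y_i$ from $k_1$ to $k_1-1/2$. A further rescaling $y_i\mapsto 2y_i$ converts the result into the standard Selberg--Laguerre integral
\begin{equation*}
\int_{(0,\infty)^n}\prod_i z_i^{k_1-1/2}\prod_{i<j}|z_i-z_j|^{2k_2}e^{-\sum_i z_i}\,dz = \prod_{j=1}^n\frac{\Gamma(k_1+1/2+(j-1)k_2)\,\Gamma(1+jk_2)}{\Gamma(1+k_2)}.
\end{equation*}
The final step is to apply the Legendre duplication formula in the form $\Gamma(z+1/2)=2^{-2z}\sqrt{\pi}\,\Gamma(1+2z)/\Gamma(1+z)$ at $z=k_1+(j-1)k_2$, which replaces each half-integer gamma factor by $\Gamma(1+2k_1+2(j-1)k_2)/\Gamma(1+k_1+(j-1)k_2)$ and produces compensating powers of $2$ and $\sqrt{\pi}$. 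Careful bookkeeping of the powers of $2$ (coming from the rescaling, from the duplication, and from the Gaussian prefactor $(2\pi)^{-n/2}$) and of $\pi$ (which cancels entirely) collapses them to the single overall factor $2^{-nk_1}$ displayed in the theorem.

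For $F_4$ the main obstacle is that no such two-step factorisation separating the two orbits is available. I would therefore invoke Opdam's evaluation \cite{Opd89} of the two-parameter Macdonald--Mehta integral for crystallographic Weyl groups, specialised to $F_4$ with multiplicity $k_1$ on short roots and $k_2$ on long roots. Grouping Opdam's product over positive roots according to the two orbits and simplifying via elementary gamma-function identities yields the closed form stated in the theorem; this step is a routine, finite computation rather than a deep one. Both sides being meromorphic in $k_1,k_2$, and the integrals converging absolutely for $\mathrm{Re}\,k_1,\mathrm{Re}\,k_2\geq 0$, analytic continuation from the region where Opdam's evaluation is initially proved extends the identity to the full half-plane stated.
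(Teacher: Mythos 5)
Your proposal is correct and is essentially the route the paper itself takes: the paper gives no proof of this theorem, only a remark attributing the $B_n$ case to Macdonald's derivation as a limiting case of the Selberg integral (exactly your reduction to the Selberg--Laguerre integral, and your bookkeeping of the powers of $2$ and $\pi$ via the duplication formula does check out, yielding precisely the factor $2^{-nk_1}$) and the $F_4$ case to Garvan's computation and Opdam's uniform shift-operator proof. The only caveat is that your description of the $F_4$ simplification as ``routine'' is slightly optimistic --- the paper's own remark notes that the specialisation of Opdam's formula takes a somewhat different form --- but this affects presentation, not correctness.
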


\begin{remark}
In the $B_n$ case the evaluation was obtained by Macdonald \cite{Mac82} (see Section 6) as a limiting case of Selberg's integral formula. The $F_4$ case was computed by Garvan \cite{Gar89} using computer calculations. A uniform proof, valid for all crystallographic root systems, was obtained by Opdam \cite{Opd89}. However, the specialisations of his uniform formula to the two cases above takes a somewhat different form, and they are not directly applicable to our discussion below.
\end{remark}

The reflections in the hyperplanes $(\alpha,x)=0$, $\alpha\in R_+$, generate the Weyl group $W$ of the corresponding root system. Given $w\in W$ and a reduced decomposition $w=s_1\cdots s_m$ in terms of simple reflections we let $\ell_1(w)$ and $\ell_2(w)$ denote the number of reflections given by short- and long roots, respectively, see, e.g., Section 2 in Macdonald \cite{Mac72}.

As before, we let $C_+$ denote the 'positive' chamber of $\mathbb{R}^n$, given by the requirement $(\alpha,x)>0$ for all $\alpha\in R_+$. Fixing $\xi\in C_+$, we can thus conclude that the function $\Delta_s(x)^{2k_1}\Delta_l(x)^{2k_2}$ has a (unique) continuous branch on $it\xi+\mathbb{R}^n$, $t>0$, that tends to the positive branch on $C_+$ as $t\downarrow 0$. For this branch, we have the following Lemma.

\begin{lemma}\label{contLemma}
For $\xi\in C_+$ and ${\rm Re}\, k_1, {\rm Re}\, k_2\geq 0$, we have
\begin{equation}\label{anCont2}
	\int_{i\xi+\mathbb{R}^n}\Delta_s(x)^{2k_1}\Delta_l(x)^{2k_2}d\gamma(x) = \frac{P(k_1,k_2)}{|W|}\int_{\mathbb{R}^n}|\Delta_s(x)|^{2k_1}|\Delta_l(x)|^{2k_2}d\gamma(x),
\end{equation}
where $P(k_1,k_2)$ is given by
\begin{gather*}
	\prod_{j=1}^n\frac{1-e^{2\pi i(2k_1+2(j-1)k_2)}}{1-e^{2\pi i(k_1+(j-1)k_2)}}\frac{1-e^{2\pi i jk_2}}{1-e^{2\pi ik_2}}, \quad \mathrm{(B_n)}\\
	\frac{1-e^{2\pi i(4k_1+4k_2)}}{1-e^{2\pi i(k_1+k_2)}}\frac{1-e^{2\pi i(6k_1+6k_2)}}{1-e^{2\pi i(3k_1+3k_2)}}\\ \times\prod_{j=1,2}\frac{1-e^{4\pi ik_j}}{1-e^{2\pi i k_j}}\frac{1-e^{6\pi ik_j}}{1-e^{2\pi i k_j}}\frac{1-e^{2\pi i(2k_j+2k_1+2k_2)}}{1-e^{2\pi i(k_j+k_1+k_2)}}.\quad \mathrm{(F_4)}\\
\end{gather*}
\end{lemma}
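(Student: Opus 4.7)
The strategy is to mirror the proof of the equal-multiplicity formula \eqref{anCont}, keeping track separately of the phases acquired from short- and long-root hyperplanes. A first reduction, using $W$-invariance of $d\gamma$ and of the functions $|\Delta_s|$ and $|\Delta_l|$, yields
\begin{equation*}
\int_{\mathbb{R}^n}|\Delta_s(x)|^{2k_1}|\Delta_l(x)|^{2k_2}d\gamma(x)=|W|\int_{C_+}\Delta_s(x)^{2k_1}\Delta_l(x)^{2k_2}d\gamma(x),
\end{equation*}
with the positive branches on $C_+$. For the left-hand side of \eqref{anCont2} I would apply Cauchy's theorem to replace $\xi$ by $\varepsilon\xi$, $\varepsilon\in(0,1]$ (legitimate since each $t\xi$ with $t>0$ lies in $C_+$, so $it\xi+\mathbb{R}^n$ misses every complex hyperplane $(\alpha,x)=0$), and then decompose the resulting contour chamber by chamber:
\begin{equation*}
\int_{i\varepsilon\xi+\mathbb{R}^n}\Delta_s(x)^{2k_1}\Delta_l(x)^{2k_2}d\gamma(x)=\sum_{w\in W}\int_{i\varepsilon\xi+wC_+}\Delta_s(x)^{2k_1}\Delta_l(x)^{2k_2}d\gamma(x).
\end{equation*}

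The heart of the argument is to identify, for each $w\in W$, the $\varepsilon\downarrow 0$ limit of the prescribed branch of the integrand on $i\varepsilon\xi+wC_+$. For $y\in wC_+$ and $\alpha\in R_+$ one has $(\alpha,i\varepsilon\xi+y)=(\alpha,y)+i\varepsilon(\alpha,\xi)$ with $(\alpha,\xi)>0$. If $w^{-1}\alpha\in R_+$ then $(\alpha,y)>0$ and $(\alpha,i\varepsilon\xi+y)^{2k_\alpha}\to|(\alpha,y)|^{2k_\alpha}$; if instead $w^{-1}\alpha\in R_-$ then $(\alpha,y)<0$, the argument of $(\alpha,i\varepsilon\xi+y)$ approaches $\pi$ through values in $(\pi/2,\pi)$, and hence $(\alpha,i\varepsilon\xi+y)^{2k_\alpha}\to e^{2\pi ik_\alpha}|(\alpha,y)|^{2k_\alpha}$. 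Since the number of short (respectively long) positive roots mapped into $R_-$ by $w^{-1}$ equals $\ell_1(w)$ (respectively $\ell_2(w)$), the full product limit is
\begin{equation*}
\Delta_s(x)^{2k_1}\Delta_l(x)^{2k_2}\longrightarrow e^{2\pi i(k_1\ell_1(w)+k_2\ell_2(w))}|\Delta_s(y)|^{2k_1}|\Delta_l(y)|^{2k_2}.
\end{equation*}
Invoking dominated convergence (a uniform Gaussian majorant is immediate from $|e^{-(y+i\varepsilon\xi)^2/2}|\leq e^{-y^2/2+\xi^2/2}$) together with the $W$-invariance reduction above, the identity \eqref{anCont2} reduces to the Poincar\'e-type identity
\begin{equation*}
P(k_1,k_2)=\sum_{w\in W}e^{2\pi i(k_1\ell_1(w)+k_2\ell_2(w))}.
\end{equation*}

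The main obstacle is the verification of this last identity for the Weyl groups $W(B_n)$ and $W(F_4)$. It is a classical fact: the bivariate length generating function $\sum_{w\in W}q_1^{\ell_1(w)}q_2^{\ell_2(w)}$, computed explicitly by Macdonald in \cite{Mac72}, factors as precisely the product on the right-hand side of each case of the Lemma (as a sanity check, specialising $k_1=k_2=k$ returns the classical Poincar\'e series $\prod_{j=1}^n(1-q^{d_j})/(1-q)$ with $d_j=2j$ for $B_n$ and $d_j=2,6,8,12$ for $F_4$, in agreement with \eqref{anCont}). Once this identity is granted, the analytic framework above---contour deformation, chamber decomposition, branch tracking, and dominated convergence---assembles into a proof of \eqref{anCont2}.
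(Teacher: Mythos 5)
Your proof is correct and follows essentially the same route as the paper: decompose by Weyl chambers, identify the phase picked up over $w(C_+)$ as $e^{2\pi i(k_1\ell_1(w)+k_2\ell_2(w))}$, and invoke Macdonald's two-variable Poincar\'e series from \cite{Mac72} to evaluate $\sum_{w\in W}e^{2\pi i(k_1\ell_1(w)+k_2\ell_2(w))}=P(k_1,k_2)$. The only (cosmetic) differences are that the paper establishes the branch limit by induction on $\ell_1(w)+\ell_2(w)$ rather than by your direct tracking of $\arg(\alpha,y+i\varepsilon\xi)$, and it performs the contour deformation at the end rather than at the outset.
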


\begin{proof}
Proceeding by induction on $\ell_1(w)+\ell_2(w)$, it is straightforward to show that the limit of the relevant branch of $\Delta_s(x)^{2k_1}\Delta_l(x)^{2k_2}$ in the chamber $w(C_+)$, $w\in W$, is given by
\begin{equation*}
	\Delta_s(wx)^{2k_1}\Delta_l(wx)^{2k_2} = \exp\big(2\pi i\lbrack k_1\ell_1(w)+k_2\ell_2(w)\rbrack\big) |\Delta_s(x)|^{2k_1}|\Delta_l(x)|^{2k_2},\quad x\in C_+.
\end{equation*}
Since $W$ acts simply transitively on the chambers of $\mathbb{R}^n$, we can thus conclude that
\begin{equation*}
\begin{split}
	\int_{\mathbb{R}^n}\Delta_s(x)^{2k_1}\Delta_l(x)^{2k_2}d\gamma(x) &= \sum_{w\in W}\int_{C_+}\Delta_s(wx)^{2k_1}\Delta_l(wx)^{2k_2}d\gamma(x)\\ &= \left(\sum_{w\in W}\exp\big(2\pi i\lbrack k_1\ell_1(w)+k_2\ell_2(w)\rbrack\big)\right)\\ & \qquad \times \int_{C_+}|\Delta_s(x)|^{2k_1}|\Delta_l(x)|^{2k_2}d\gamma(x).
\end{split}
\end{equation*}
It follows from Section 2.2 in Macdonald \cite{Mac72} that the sum in the right-hand side is equal to $P(k_1,k_2)$. Finally, let us replace the domain of integration $\mathbb{R}^n$ in the left-hand side by the (real) hyperplane $it\xi+\mathbb{R}^n$, $t>0$. Since the hyperplane does not intersect any of the reflection hyperplanes $(\alpha,x)=0$, $\alpha\in R_+$, the resulting integral is independent of $t$, and the statement follows by taking the limit $t\downarrow 0$.
\end{proof}

Proceeding as in the previous section, making use of the reflection equation $\Gamma(z)\Gamma(1-z)=\pi/\sin\pi z$ and Theorem \ref{MMThm2}, we obtain an expression for the right-hand side of \eqref{anCont2} that it is manifestly analytic for ${\rm Re}\, k_1,{\rm Re}\, k_2<0$. Then setting $k_1=-m_1$ and $k_2=-m_2$ with $m_1,m_2\in\mathbb{N}$ we arrive at the following Proposition. Note that the order of the Weyl group of type $F_4$ is  $2^7\times 3^2.$
\begin{proposition}
Let $\xi\in\mathbb{R}^n$ be such that $(\alpha,\xi)\neq 0$ for all $\alpha\in R_+$, and let $m_1,m_2\in\mathbb{N}$. Then, the integral
\begin{equation*}
	\int_{i\xi+\mathbb{R}^n}\frac{d\gamma(x)}{\Delta_s(x)^{2m_1}\Delta_l(x)^{2m_2}}
\end{equation*}
is given by
\begin{gather*}
	 \frac{(-2)^{nm_1}}{2^nn!}\prod_{j=1}^n\frac{\Gamma\big(m_1+(j-1)m_2\big)}{\Gamma\big(2m_1+2(j-1)m_2\big)}\frac{\Gamma(m_2)}{\Gamma(jm_2)},\quad \mathrm{(B_n)}\\
	\frac{2^{12m_1}}{2^7\times 3^2}\frac{\Gamma(m_1+m_2)\Gamma\big(3(m_1+m_2)\big)}{\Gamma\big(4(m_1+m_2)\big)\Gamma\big(6(m_1+m_2)\big)}\\ \times\prod_{j=1,2}\frac{\Gamma(m_j)^2\Gamma(m_j+m_1+m_2)}{\Gamma(2m_j)\Gamma(3m_j)\Gamma(2m_j+2m_1+2m_2)}.\quad \mathrm{(F_4)}\\
\end{gather*}
\end{proposition}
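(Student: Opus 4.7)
The plan is to follow the same route used in the equal-multiplicity case (Proposition \ref{intProp}): combine Lemma \ref{contLemma} with the Macdonald-Mehta evaluations in Theorem \ref{MMThm2}, then apply a Gamma-function reflection identity to produce an expression for the integral that is manifestly entire in $k_1,k_2$, and finally specialise at $k_1=-m_1$, $k_2=-m_2$. The hypothesis $(\alpha,\xi)\neq 0$ rather than $\xi\in C_+$ is harmless: by setting $\lambda=\mu=0$ in Lemma \ref{indepLemma}, the integral is the same for every choice of $\xi$ in $\mathbb{R}^n\setminus\cup_{\alpha\in R_+}\Pi_\alpha$, so I may assume $\xi\in C_+$ throughout.

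The key algebraic tool is the identity
\begin{equation*}
(1-e^{2\pi i z})\,\Gamma(1+z) = \frac{2\pi i\, e^{\pi i z}}{\Gamma(-z)},
\end{equation*}
which follows from $\Gamma(z)\Gamma(1-z)=\pi/\sin(\pi z)$ together with $1-e^{2\pi iz}=-2ie^{\pi iz}\sin(\pi z)$. The right-hand side is entire in $z$, and taking ratios yields
\begin{equation*}
\frac{(1-e^{2\pi iz})\Gamma(1+z)}{(1-e^{2\pi iz'})\Gamma(1+z')} = e^{\pi i(z-z')}\frac{\Gamma(-z')}{\Gamma(-z)},
\end{equation*}
which extends analytically to all $k_1,k_2\in\mathbb{C}$.

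The crucial observation is that the factors of $P(k_1,k_2)$ from Lemma \ref{contLemma} pair up naturally with the Gamma factors in the Macdonald-Mehta integrals of Theorem \ref{MMThm2}: each numerator $(1-e^{2\pi iz})$ in $P$ corresponds to a numerator $\Gamma(1+z)$ in the Macdonald-Mehta formula, and likewise for the denominators. Applying the ratio identity to each such pair converts $P(k_1,k_2)\cdot M(k_1,k_2)$ into an entire function of $k_1,k_2$, which we then evaluate at $k_1=-m_1$, $k_2=-m_2$. The prefactor $2^{-nk_1}$ (resp. $2^{-12k_1}$) becomes $2^{nm_1}$ (resp. $2^{12m_1}$), and dividing by $|W|=2^n n!$ for type $B_n$ and $|W|=2^7\cdot 3^2$ for type $F_4$ produces the claimed ratios of Gamma values.

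The main technical obstacle is purely bookkeeping: one must carefully collect the accumulated phase $e^{\pi i\sum(z_i-z'_i)}$ at $k_j=-m_j$. In the $B_n$ case, summing over $j=1,\ldots,n$ the contributions $\pi i(k_1+(j-1)k_2)$ and $\pi i(j-1)k_2$ yields the total phase $e^{\pi i\cdot n k_1}$ modulo $2\pi i$ integer multiples, giving the sign $(-1)^{nm_1}$ and hence the factor $(-2)^{nm_1}$. In the $F_4$ case, tracking the six pairs $\{4(k_1+k_2),k_1+k_2\}$, $\{6(k_1+k_2),3(k_1+k_2)\}$, $\{2k_j,k_j\}$, $\{3k_j,k_j\}$, $\{2(k_j+k_1+k_2),k_j+k_1+k_2\}$ for $j=1,2$ shows that the phases sum to $\pi i\cdot 12(k_1+k_2)$, which at $k_j=-m_j$ gives $1$. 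Hence no extra sign appears in the $F_4$ formula, and one recovers the stated expression exactly.
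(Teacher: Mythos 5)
Your proposal is correct and follows essentially the same route as the paper: combine Lemma \ref{contLemma} with Theorem \ref{MMThm2}, use the reflection equation to rewrite $P(k_1,k_2)$ times the Macdonald--Mehta evaluation as an expression analytic at negative integer values of $k_1,k_2$, and then specialise at $k_1=-m_1$, $k_2=-m_2$. Your phase bookkeeping (total phase $e^{\pi i(nk_1+n(n-1)k_2)}$ giving $(-1)^{nm_1}$ for $B_n$, and $e^{12\pi i(k_1+k_2)}$ giving $1$ for $F_4$) checks out against the stated formulae.
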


\section{Macdonald-Mehta integrals -- 2D examples}\label{2DMMSec}

In this section we shall consider configurations of vectors in $\mathbb{R}^2$. Suppose that $\mathcal{A}$ is such a configuration of vectors with multiplicities  for which the Baker-Akhiezer function exists, and consider the corresponding Schr\"odinger type operator
\begin{equation*}
	\mathcal L_{\mathcal A} = -\Delta + \sum_{\alpha\in {\mathcal A}} \frac{m_\alpha(m_\alpha+1)(\alpha,\alpha)}{(\alpha,x)^2}.
\end{equation*}
Introducing polar coordinates $x=(r \cos \varphi, r \sin \varphi)$, we can write this operator in the form
\begin{equation*}
	\mathcal L_{\mathcal A} = -\frac{\partial^2}{\partial r^2} - \frac{1}{r}\frac{\partial}{\partial r} + \frac{1}{r^2}{\mathcal L^{(\varphi)}_{\mathcal A}},\quad{\mathcal  L^{(\varphi)}_{\mathcal A}} = -\frac{\partial^2}{\partial\varphi^2} + V(\varphi)
\end{equation*}
for some potential function $V$. Berest \cite{Ber97} showed that the existence of a corresponding Baker-Akhiezer function implies that the latter operator $\mathcal L^{(\varphi)}_{\mathcal A}$ is obtained from $-\partial^2_\varphi$ by a sequence of Darboux transformations.

Let us fix $m, \tilde m, l, q \in \mathbb{N}$ such that $m\ge \tilde m$, $m, q\ge 1$, and $l$ is even.
Let us consider a sequence of Darboux transformations associated with the functions $\chi_j=\cos (k_j \varphi)$, where $k_j= q j$ for $j=0,\ldots, m-\tilde m$,  $k_{m - \tilde m +j}= q(m - \tilde m +2j)$ for $j=1,\ldots,\tilde m-1$, and $k_{m}=q( m+ \tilde m +l)$. It is known that the corresponding configuration $\mathcal{A}_{(m, \tilde m, 1^l)}^q$ is real and the corresponding lines form a dihedral arrangement of $2q$ lines with multiplicities $m$ and $\tilde m$ together with $q l$ lines of multiplicity 1 that form $l/2$ dihedral orbits, and that the Baker-Akhiezer function exists, see \cite{FJ12}. In particular, the case $l=0$ gives the dihedral configuration with the multiplicities $m, \tilde m$.

We note the following equality of ratios of Wronskians:
\begin{equation*}
	Q\equiv \frac{\mathrm{Wr}[\chi_0,\ldots,\chi_m]}{\mathrm{Wr}[\chi_0,\ldots,\chi_{m-1}]} = -q(m+\tilde m+l) \frac{\mathrm{Wr}[\hat\chi_1,\ldots,\hat\chi_m]}{\mathrm{Wr}[\hat\chi_1,\ldots,\hat\chi_{m-1}]},
\end{equation*}
where $\hat \chi_j = \sin k_j \varphi$. We also have
$$
{\mathrm{Wr}[\hat\chi_1,\ldots,\hat\chi_{m-1}]}=\pm 2^{\frac{(m-1)(m-2)+\tilde m(\tilde m-1)}{2}} (\cos q\vf)^{\frac{\tilde m(\tilde m-1)}{2}} (\sin q\vf)^{\frac{m(m-1)}{2}}  \prod_{\nad{i,j=1}{i>j}}^{m-1} (k_i-k_j),
$$
\begin{multline}
{\mathrm{Wr}[\hat\chi_1,\ldots,\hat\chi_{m}]}=\pm 2^{l+\frac{m(m-1)+\tilde m(\tilde m+1)}{2}} (\cos q\vf)^{\frac{\tilde m(\tilde m+1)}{2}} (\sin q\vf)^{\frac{m(m+1)}{2}} \times \\
 \prod_{j=1}^l \sin(q\vf - \vf_j) \prod_{\nad{i,j=1}{i>j}}^{m} (k_i-k_j),
\end{multline}
so it follows that
\begin{equation}\label{wronski}
	Q = A (\sin q \varphi)^m (\cos q\vf)^{\tilde m} \prod_{j=1}^l \sin (q \varphi - \varphi_j)
\end{equation}
for some angles $\varphi_j\in (0,\pi)$, $\vf_j\ne \pi/2$ such that $\sum_{j=1}^l \varphi_j = \frac{\pi}{2}l$  and with
\begin{equation}
\label{a-coeff}
	A= \pm2^{m+\tilde m+l-1}\prod_{j=0}^{m-1}(k_m-k_j) =\pm q^m 2^{m+\tilde m+l-1} \prod_{j=0}^{m-\tilde m} (m+\tilde m+l-j)\prod_{j=1}^{\tilde m-1} (l+2j).
\end{equation}
 Let  $\varphi_{j,s}=(\varphi_j + \pi s)/q$, where $\vf_0=0$, $\vf_{l+1}=\pi/2$, $j=0,\ldots,l+1$, $s=0,\ldots,q-1$. Then the (normalised)  configuration of vectors $\mathcal{A}_{(m,  \tilde m, 1^l)}^q$ consists of the vectors $\sqrt{2}(-\sin \varphi_{j,s}, \cos \varphi_{j,s})$. The corresponding multiplicities $m_{0,s}=m$, $m_{l+1,s}=\tilde m$ and $m_{j,s}=1$ if $j\neq 0,l+1$.

The Baker-Akhiezer functions corresponding to configurations of vectors in two dimensions were determined by Berest et al.~\cite{BCE08} in terms of Darboux transformations data. It follows from Theorems 2 and 3 in said paper that, for the normalised configuration $\mathcal{A}_{(m, \tilde m,  1^l)}^q$, we have
\begin{equation*}
	\phi(0,0)=(-1)^{q(m+\tilde m+l)}  2^{2q(m+\tilde m+l)-1} c (q(m+\tilde m+l))! \Psi,
\end{equation*}
where
\begin{equation*}
	\Psi=\left(Q^{-1} \prod_{j=0}^{l+1} \prod_{s=0}^{q-1} \sin^{m_{j,s}}(\varphi - \varphi_{j,s})\right)^2
\end{equation*}
and
$$
c=  \prod_{j=0}^{m-1}(k_m^2-k_j^2)  =q^{2m} \prod_{j=0}^{m-\tilde m} ((m+\tilde m+l)^2-j^2)  \prod_{j=1}^{\tilde m-1} ((m+\tilde m+l)^2-(m-\tilde m-2j)^2).
$$

By rearranging \eqref{wronski} into the form
\begin{equation*}
	Q= 2^{(q-1)(l+m+\tilde m)} A  \prod_{j=0}^{l+1} \prod_{s=0}^{q-1} \sin^{m_{j,s}}(\varphi - \varphi_{j,s})
\end{equation*}
we arrive at the following Proposition.

\begin{proposition}\label{2Dphi}
Let $\phi(x,\lambda)$ be the Baker-Akhiezer function corresponding to the configuration  $\mathcal{A}_{(m, \tilde m, 1^l)}^q$. Then it  satisfies
\begin{equation*}
	\phi(0,0)= (-1)^{q(m+\tilde m)} 2 \Gamma(q(m+\tilde m+l)+1) \prod_{j=0}^{m-1}\frac{k_m+k_j}{k_m-k_j}.
\end{equation*}
\end{proposition}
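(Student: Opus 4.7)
The plan is to derive the stated formula by pure algebraic simplification of the closed-form expression for $\phi(0,0)$ given just above the proposition, together with the explicit formulas for $Q$ and for the constant $A$. No new analytic input is needed; the work is entirely in bookkeeping powers of $2$, of $q$, and of the $k_j$.

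First I would substitute the rearranged expression $Q = 2^{(q-1)(l+m+\tilde m)} A \prod_{j=0}^{l+1}\prod_{s=0}^{q-1} \sin^{m_{j,s}}(\varphi-\varphi_{j,s})$ into
$\Psi = \bigl(Q^{-1}\prod_{j,s}\sin^{m_{j,s}}(\varphi-\varphi_{j,s})\bigr)^{2}$. The product of sines cancels between the numerator and denominator, and $\Psi$ collapses to the $\varphi$-independent constant $\Psi = 2^{-2(q-1)(l+m+\tilde m)}A^{-2}$. Feeding this back into
$\phi(0,0) = (-1)^{q(m+\tilde m+l)} 2^{2q(m+\tilde m+l)-1} c\,\Gamma(q(m+\tilde m+l)+1)\,\Psi$, the powers of two combine to $2^{2(m+\tilde m+l)-1}$, so that at this point
\[
\phi(0,0) = (-1)^{q(m+\tilde m+l)} \frac{2^{2(m+\tilde m+l)-1}\,c\,\Gamma(q(m+\tilde m+l)+1)}{A^{2}}.
\]

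The second step is to exploit the two alternative ways of writing $A$ and $c$ in terms of $k_0,\ldots,k_m$. From \eqref{a-coeff} one reads off $\prod_{j=0}^{m-1}(k_m-k_j) = q^{m} \prod_{j=0}^{m-\tilde m}(m+\tilde m+l-j) \prod_{j=1}^{\tilde m-1}(l+2j)$, whence $A^{2} = 2^{2(m+\tilde m+l-1)}\bigl(\prod_{j=0}^{m-1}(k_m-k_j)\bigr)^{2}$; at the same time the given formula for $c$ factors as $c = \prod_{j=0}^{m-1}(k_m-k_j)(k_m+k_j)$. Dividing, the squared factor $\prod(k_m-k_j)^{2}$ becomes a single factor in the denominator, and the remaining power of two $2^{2(m+\tilde m+l)-1}/2^{2(m+\tilde m+l-1)}$ reduces to $2$. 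This yields
\[
\phi(0,0) = (-1)^{q(m+\tilde m+l)}\cdot 2\,\Gamma(q(m+\tilde m+l)+1)\prod_{j=0}^{m-1}\frac{k_m+k_j}{k_m-k_j}.
\]

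Finally, since $l$ is assumed even, $(-1)^{q(m+\tilde m+l)} = (-1)^{q(m+\tilde m)}$, matching the proposition exactly. The main obstacle, such as it is, is bookkeeping: one must be careful that the ambiguous signs $\pm$ appearing in the Wronskian formulas and in \eqref{a-coeff} enter $\phi(0,0)$ only through $A^{2}$, so they drop out, and that the two distinct indexing conventions for the product (via $j=0,\ldots,m-\tilde m$ and $j=1,\ldots,\tilde m-1$ on one hand, and via $j=0,\ldots,m-1$ on the other) are consistent with the definition of $k_j$. Once these bookkeeping points are checked, the proof is a direct computation with no further input.
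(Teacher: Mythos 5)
Your computation is correct and is exactly the argument the paper intends: the paper states the closed form of $\phi(0,0)$ from Berest et al., rearranges $Q$ so that the sine product cancels in $\Psi$, and leaves the resulting bookkeeping of powers of $2$ and of the factors $k_m\pm k_j$ (together with $(-1)^{ql}=1$ for $l$ even) to the reader. Your write-up simply makes that implicit computation explicit, including the correct observation that the ambiguous signs enter only through $A^2$ and hence drop out.
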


As a direct consequence of Corollary \ref{phiCor}, we obtain an explicit evaluation of the corresponding Macdonald-Mehta type integral.

\begin{theorem}
Fix $m, \tilde m, l, q \in \mathbb{N}$ such that $m\ge \tilde m$; $m, q\ge 1$, and $l$ is even.
Let $\chi_j=\cos (k_j \varphi)$, where   $k_j=q j$ for $j=1,\ldots, m-\tilde m$, $k_{m - \tilde m +j}= q(m - \tilde m +2j)$ for $j=1,\ldots,\tilde m-1$, and $k_{m}=q( m+ \tilde m +l)$. Then the generalised Macdonald-Mehta integral
\begin{equation*}
	M\equiv \frac{1}{2 \pi} \int_{i \xi + \mathbb{R}^2}  \left(\frac{\mathrm{Wr}[\chi_0,\ldots,\chi_{m-1}]}{\mathrm{Wr}[\chi_0,\ldots,\chi_{m}]}\right)^2 \frac{e^{-\frac{x^2}{2}}}{x^{2q(m+\tilde m+l)}} dx
\end{equation*}
is given by
\begin{equation*}
	M=(-1)^{q(m+\tilde m)} \left(2^{q(m+\tilde m +l)-1}  \Gamma(q(m+\tilde m+l)+1) \prod_{j=0}^{m-1} (k_m^2-k_j^2)\right)^{-1},
\end{equation*}
where $\xi \in \mathbb{R}^2$ is an arbitrary vector satisfying $\frac{\mathrm{Wr}[\chi_0,\ldots,\chi_{m}]^2 x^{2q(m+\tilde m+l)}}{\mathrm{Wr}[\chi_0,\ldots,\chi_{m-1}]^2}|_{x=\xi}\ne 0$.
\end{theorem}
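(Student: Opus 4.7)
The plan is to combine Corollary \ref{phiCor}, which expresses $\phi(0,0)^{-1}$ as $\int_{i\xi+\mathbb{R}^2}A_m(x)^{-2}\,d\gamma(x)$, with the explicit evaluation of $\phi(0,0)$ provided by Proposition \ref{2Dphi}. The only nontrivial link is a direct identification of $A_m(x)^{-2}$ with the Wronskian ratio appearing in the integrand of $M$, which falls out of the formula \eqref{wronski} for $Q$.

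First I would pass to polar coordinates $x=(r\cos\varphi,r\sin\varphi)$. Since each vector of the configuration is $\alpha_{j,s}=\sqrt 2(-\sin\varphi_{j,s},\cos\varphi_{j,s})$, the identity $(\alpha_{j,s},x)=\sqrt 2\, r\sin(\varphi-\varphi_{j,s})$ gives
\[
A_m(x) = 2^{N/2}\,r^N\prod_{j,s}\sin^{m_{j,s}}(\varphi-\varphi_{j,s}),\qquad N:=q(m+\tilde m+l),
\]
where the exponent count uses $m_{0,s}=m$, $m_{l+1,s}=\tilde m$, and $m_{j,s}=1$ otherwise. The two expressions for $Q$ in the text---its definition as $\mathrm{Wr}[\chi_0,\ldots,\chi_m]/\mathrm{Wr}[\chi_0,\ldots,\chi_{m-1}]$ and the factorised form displayed immediately after \eqref{wronski}---yield $\prod_{j,s}\sin^{m_{j,s}}(\varphi-\varphi_{j,s})=Q\cdot (2^{(q-1)(m+\tilde m+l)}A)^{-1}$. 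Substituting and squaring I obtain
\[
\frac{1}{A_m(x)^2} = \frac{2^{(q-2)(m+\tilde m+l)}A^2}{x^{2N}}\left(\frac{\mathrm{Wr}[\chi_0,\ldots,\chi_{m-1}]}{\mathrm{Wr}[\chi_0,\ldots,\chi_m]}\right)^2,
\]
so that Corollary \ref{phiCor} reads $\phi(0,0)^{-1}=2^{(q-2)(m+\tilde m+l)}A^2\,M$. The independence of this equality from the choice of $\xi$ is covered by Lemma \ref{indepLemma}, which also explains why the hypothesis on $\xi$ in the theorem is the natural one.

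To finish, I would plug in the two explicit formulas. Proposition \ref{2Dphi} together with the identity $\prod_{j=0}^{m-1}(k_m-k_j)(k_m+k_j)=c$ gives
\[
\phi(0,0)^{-1} = (-1)^{q(m+\tilde m)}\,\frac{\prod_{j=0}^{m-1}(k_m-k_j)^2}{2\,\Gamma(N+1)\,c},
\]
while the expression for $A$ in \eqref{a-coeff} yields $A^2 = 2^{2(m+\tilde m+l-1)}\prod_{j=0}^{m-1}(k_m-k_j)^2$. Dividing, the products $\prod(k_m-k_j)^2$ cancel and the powers of $2$ combine via $1+(q-2)(m+\tilde m+l)+2(m+\tilde m+l-1)=N-1$, giving
\[
M = (-1)^{q(m+\tilde m)}\bigl(2^{N-1}\,\Gamma(N+1)\,c\bigr)^{-1},
\]
as claimed. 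The only obstacle here is careful bookkeeping of the various powers of $2$ and the overall sign through the substitution; there is no further analytic input beyond Corollary \ref{phiCor} and Proposition \ref{2Dphi}.
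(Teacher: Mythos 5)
Your proposal is correct and follows essentially the same route as the paper: rewrite the Wronskian ratio $Q$ in the factorised form involving $\prod_{j,s}\sin^{m_{j,s}}(\varphi-\varphi_{j,s})$, identify $Q^2x^{2q(m+\tilde m+l)}$ with $2^{(q-2)(m+\tilde m+l)}A^2 A_m(x)^2$, and then combine Corollary \ref{phiCor} with Proposition \ref{2Dphi}. Your bookkeeping of the powers of $2$ and the cancellation of $\prod_{j=0}^{m-1}(k_m-k_j)^2$ against $c$ checks out; the paper merely leaves this final substitution implicit.
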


\begin{proof}
We have
\begin{equation*}
\begin{split}
	M &= \frac{1}{2 \pi} \int_{i \xi + \mathbb{R}^2}  \frac{e^{-\frac{x^2}{2}}}{Q^2 x^{2q(m+\tilde m+l)}}dx\\ &= \frac{1}{2 \pi} \int_{i \xi + \mathbb{R}^2}  \frac{e^{-\frac{x^2}{2}} dx}{A^2 2^{(q-2)(m+\tilde m+l)}\prod_{\alpha_{j,s}\in {\mathcal A}_{(m, \tilde m, 1^l)}^q} (\alpha_{j,s},x)^{2m_{j,s}}},
\end{split}
\end{equation*}
where $Q$ and $A$ are given by formulas (\ref{wronski}), (\ref{a-coeff}). By Corollary \ref{phiCor}, we have
$$
M= (A^2 2^{(q-2)(m+\tilde m+l)} \phi(0,0))^{-1},
$$
where $\phi(0,0)$ is found in Proposition \ref{2Dphi}.
\end{proof}

\section{Deformed root systems and Dotsenko-Fateev integral}\label{SectDeform}

In dimensions higher than 2 there are two known series of non-Coxeter configurations that admit the Baker-Akhiezer functions \cite{CFV96}, \cite{CFV98}, \cite{CFV99}.

The first series ${\mathcal A}_m(p)\subset{\mathbb R}^{m+1}$ depends on one parameter $p\in \mathbb N$ and it consists of the vectors $e_i-e_j$, $1\le i<j \le m$ with the multiplicites $p$ and the vectors $e_i-\sqrt{p}e_{m+1}$, $1\le i \le m$ with the multiplicities 1. It can be viewed as a special deformation of the ${A}_m$ root system.

 The second series ${\mathcal C}_{m+1}(r,s)\subset {\mathbb R}^{m+1}$ depends on two parameters $r,s \in \mathbb N$ such that $p=\frac{2r+1}{2s+1}\in \mathbb N$. It can be viewed as a special deformation of the ${C}_{m+1}$ (or ${B}_{m+1}$) root system. It consists of the vectors $e_i\pm e_j$, $1\le i <j \le m$ with the multiplicities $p$, the vectors $e_i$, $1\le i \le m$ with the multiplicities $r$, the vector $e_{m+1}$ with the multiplicity $s$ and the vectors $e_i\pm \sqrt{p}e_{m+1}$, $1\le i \le m$ with the multiplicities 1.

These arrangements correspond to the {\it deformed root systems} from \cite{SV} with integer multiplicities.
Recall that according to \cite{SV} one can define for any basic classical Lie superalgebra a certain deformation of the corresponding root system by prescribing multiplicities to the roots and changing the bilinear form. The corresponding deformed root systems form two series of type $A(n,m)$ and $BC(n,m)$ and 3 exceptional cases. All the multiplicities are integer only in the cases listed above.

The main property of the deformed root systems, which will be important for us, is that the corresponding deformed Calogero-Moser operator
\begin{equation}
\label{hCM}
L_{\mathcal A}=-\Delta+\sum_{\alpha \in \mathcal A}\frac{m_\alpha(m_\alpha+1)(\alpha,\alpha)}{(\alpha,x)^2}+x^2
\end{equation}
has a pseudo-groundstate of the form
\begin{equation}
\label{pgCM}
\psi_0=\prod_{\alpha \in \mathcal A}(\alpha,x)^{-m_\alpha}e^{-\frac{1}{2}x^2}
\end{equation}
 (cf. property 3 in the original trigonometric version in \cite{SV}).
 Indeed, one can check that the main identity (12) from \cite{SV} implies that
 $$L_{\mathcal A}\psi_0= \lambda_0 \psi_0, \quad \lambda_0=N-2 \sum_{\alpha \in \mathcal A}m_\alpha,$$
 where $N$ is the dimension of the space.
 Since the multiplicity of the isotropic roots are fixed to be 1, such a function always has a singularity at the corresponding hyperplanes. Nevertheless, sometime one can make sense of its norm as the integral
\begin{equation}
\label{norm}
I_{\mathcal A}=|\psi_0|^2=\int_{i\xi+\mathbb{R}^n} dx \prod_{\alpha \in \mathcal A}(\alpha,x)^{-2m_\alpha}e^{-x^2},
\end{equation}
in the same way as we did before. In particular for the deformed system of type $A(n,m)$ we have
 \begin{equation}
\label{defamn}
I_{A(n,m)} =  \int_{{\mathbb R}^n+i\eta} dx \int_{{\mathbb R}^m+i \xi} dy \frac{\prod_{i<j}^n (x_i - x_j)^{-2k} \prod_{i<j}^m (y_i- y_j)^{-\frac{2}{k}} e^{-\sum_{j=1}^n x_j^2-\frac{1}{k}\sum_{i=1}^m y_i^2}}{\prod_{i=1}^n \prod_{j=1}^m (x_i - y_j)^2},
\end{equation}
 where $k$ is a parameter of deformation. When $k=-1$ we have the root system of Lie superalgebra $sl(n,m),$ while $k=1$ corresponds to the usual root system $A_{m+n-1}$.

 In $BC(n,m)$ case we have the integral
\begin{equation}\label{bcdef}
I_{BC(n,m)} =\int_{{\mathbb R}^n+i \xi} dx \int_{{\mathbb R}^m+i\eta} dy  \frac{  \prod_{i=1}^n x_i^{-2r} \prod_{j=1}^m y_j^{-2s} e^{-\sum_{i=1}^n x_i^2-\frac{1}{k}\sum_{j=1}^m y_j^2}}{\prod_{i<j}^n (x_i^2 - x_j^2)^{2k}  \prod_{i<j}^m (y_i^2- y_j^2)^{\frac{2}{k}}   \prod_{j=1}^m \prod_{i=1}^n ( x_i^2 -y_j^2)^2},
\end{equation}
where the parameters $k,r,s$ satisfy one relation
$$2r+1=k(2s+1)$$
 (see \cite{SV} and take into account that proportional roots can be combined in the rational case).

We are going to show now that these integrals can be computed explicitly using the following remarkable formula found by Dotsenko and Fateev \cite{DF} in conformal field theory.

For $n,m \in \mathbb N, \,\, \alpha, \rho \in \mathbb C$ consider the following integral, which can be interpreted as a norm of the pseudo-ground state $\psi_0$ of the deformed trigonometric $BC(n,m)$ Calogero-Moser system \cite{SV} (cf. \cite{Mac82}, where Selberg integral was interpreted in a similar way):
\begin{equation}
\label{DF-intergral}
J=\prod_{i=1}^n \int_{C_i} d t_i \prod_{j=1}^m \int_{S_j} d \tau_j \frac{\prod_{i=1}^n t_i^{-\frac{\alpha}{\rho}}(1-t_i)^{-\frac{\beta}{\rho}} \prod_{j=1}^m \tau_j^{\alpha}(1-\tau_j)^{\beta} \prod_{i<j}^n (t_i-t_j)^\frac{2}{\rho}  \prod_{i<j}^m (\tau_i-\tau_j)^{2\rho}}{\prod_{j=1}^m \prod_{i=1}^n (t_i-\tau_j)^2},
\end{equation}
where the contours $C_i$ are arcs going from the point 0 to the point 1 in the upper half plane such that the arc $C_i$ is above the arc $C_{i-1}$. Similarly, the contours $S_j$ are arcs going from the point  0 to the point 1 in the lower half-plane such that the arc $S_j$ is above the arc $S_{j-1}$, see Fig. 1.

\begin{figure}[h]
\centerline{ \includegraphics[width=12cm]{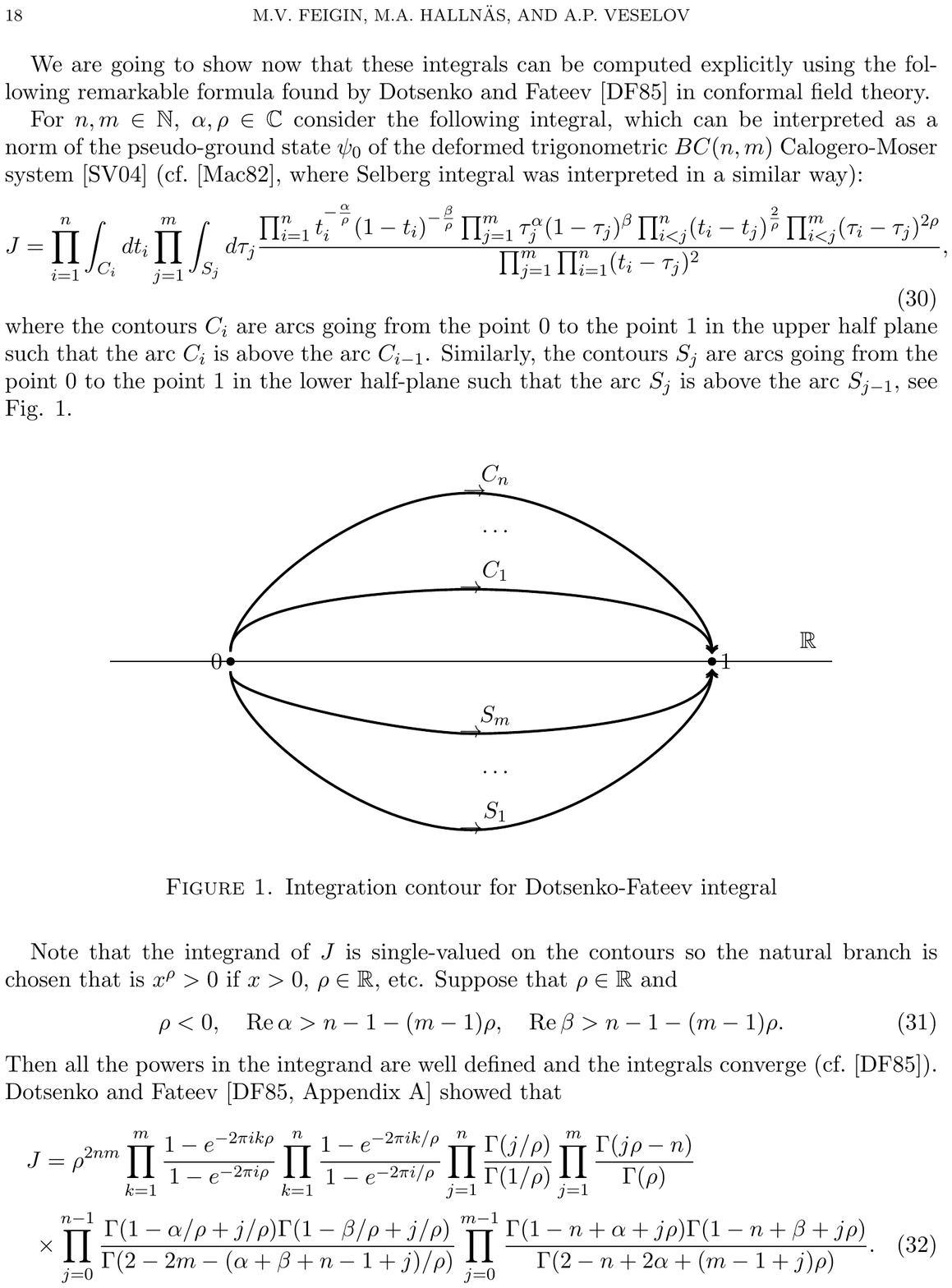} }
\caption{Integration contour for Dotsenko-Fateev integral} \label{DFcontour}
\end{figure}

Note that the integrand of $J$ is single-valued on the contours so the natural branch is chosen that is $x^\rho>0$ if $x>0$, $\rho\in\mathbb R$, etc.  Suppose that $\rho \in \mathbb{R}$ and 
\begin{equation}
\label{domain}
 \rho <0, \quad {\rm Re}\, \alpha>n-1-(m-1) \rho, \quad {\rm Re}\, \beta>n-1-(m-1) \rho.
\end{equation}
Then all the powers in the integrand are well defined and the integrals converge (cf. \cite{DF}). Dotsenko and Fateev  \cite[Appendix A]{DF} showed that
\begin{multline}\label{DFth}
J=\rho^{2nm} \prod_{k=1}^m \frac{1-e^{-2\pi i k \rho}}{1-e^{-2\pi i \rho}} \prod_{k=1}^n \frac{1-e^{-2\pi i k /\rho}}{1-e^{-2\pi i/ \rho}} \prod_{j=1}^n\frac{\Gamma(j/\rho)}{\Gamma(1/\rho)} \prod_{j=1}^m \frac{\Gamma(j \rho -n)}{\Gamma(\rho)} \\ \times
\prod_{j=0}^{n-1}\frac{\Gamma(1-\alpha/\rho+j/\rho) \Gamma(1-\beta/\rho+j/\rho)}{\Gamma(2-2m-(\alpha+\beta+n-1+j)/\rho)}  \prod_{j=0}^{m-1}\frac{\Gamma(1-n+\alpha+j \rho) \Gamma(1-n+\beta+j \rho)}{\Gamma(2-n+2\alpha+(m-1+j)\rho)}.
\end{multline}

\subsection{Deformations of type $A$}
Following Bombieri's idea we are going to take a special limit of the Dotsenko-Fateev integral that leads to a generalised Macdonald-Mehta integral for the configurations ${\mathcal A}_p(q).$
Assume that $\alpha=\beta \in {\mathbb R}_{>0}$, and $\alpha$ is large, so we are in the domain \eqref{domain}.

 Firstly we change the variables in the integral $J$. Define $\hat t_i = (2 t_i-1)L$, $\hat \tau_j =(2 \tau_j-1)L$, where $L\in {\mathbb R}_{>0}$. Let $\hat C_i, \hat S_j$ be the corresponding rescaled contours that now connect the points $-L$ and $L$. Then
$$
J=a\prod_{i=1}^n \int_{\hat C_i} d \hat t_i \prod_{j=1}^m \int_{\hat S_j} d \hat \tau_j \frac{\prod_{i=1}^n (1-\frac{\hat t_i^2}{L^2})^{-\frac{\alpha}{\rho}} \prod_{j=1}^m (1-\frac{\hat \tau_j^2}{L^2})^{\alpha} \prod_{i<j}^n (\hat t_i - \hat t_j)^\frac{2}{\rho}  \prod_{i<j}^m (\hat\tau_i- \hat\tau_j)^{2\rho}}{\prod_{j=1}^m \prod_{i=1}^n (\hat t_i - \hat \tau_j)^2},
$$
where
$$
a= L^{-\rho m (m-1)-\frac{1}{\rho}n(n-1)+2 m n-m-n} 2^{-\rho m (m-1)-\frac{1}{\rho}n(n-1)+2 m n-m-n+\frac{2\alpha}{\rho}n-2\alpha m}.
$$
Recall that $\lim_{L\to\infty}(1+\frac{x}{L})^L=e^x$. This allows us to take the limit of $J$ to get the generalised Macdonald-Mehta integral.

\begin{lemma}\label{lem1}
Let $\alpha=L^2/2$ where $L\in {\mathbb R}_+$. Then
$$
\lim_{L\to \infty} a^{-1}J= \int_{{\mathbb R}^n+i \xi}  d t \int_{{\mathbb R}^m+i\eta}    d \tau \frac{\prod_{i<j}^n (t_i - t_j)^\frac{2}{\rho}  \prod_{i<j}^m (\tau_i- \tau_j)^{2\rho} e^{-\frac12\sum_{j=1}^m\tau_j^2+\frac{1}{2\rho}\sum_{i=1}^n t_i^2}}{\prod_{j=1}^m \prod_{i=1}^n (t_i -  \tau_j)^2},
$$
where $\xi\in  {\mathbb R}^n$, $\eta \in {\mathbb R}^m$ are  such that $\xi_n>\ldots>\xi_1>\eta_m>\ldots>\eta_1$.
\end{lemma}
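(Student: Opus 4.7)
The strategy is to pass to the limit $L\to\infty$ in the rescaled form of $J$ written in the $\hat t_i, \hat\tau_j$ variables given just before the lemma. With the choice $\alpha=\beta=L^2/2$, the prefactor $a$ absorbs all $L$-dependent monomial constants, and the only remaining $L$-dependence in the integrand sits in
$$
\bigl(1-\hat t_i^2/L^2\bigr)^{-L^2/(2\rho)} \quad\text{and}\quad \bigl(1-\hat\tau_j^2/L^2\bigr)^{L^2/2},
$$
which converge pointwise to $e^{\hat t_i^2/(2\rho)}$ and $e^{-\hat\tau_j^2/2}$, precisely the Gaussian weights appearing in the target integral. All the remaining factors $\prod(\hat t_i-\hat t_j)^{2/\rho}\prod(\hat\tau_i-\hat\tau_j)^{2\rho}\big/\prod(\hat t_i-\hat\tau_j)^2$ are already $L$-independent, so the problem reduces to justifying the exchange of $\lim_{L\to\infty}$ with the integral, and to deforming the contours.

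For the contour deformation, I would replace, for each $L$, the arc $\hat C_i$ in the upper half-plane (from $-L$ to $L$) by the rectangular path formed by the vertical segment from $-L$ to $-L+i\xi_i$, the horizontal segment at height $\xi_i$, and the vertical segment from $L+i\xi_i$ down to $L$; the arcs $\hat S_j$ are treated symmetrically at heights $\eta_j$ in the lower half-plane. The ordering $\xi_n>\cdots>\xi_1>\eta_m>\cdots>\eta_1$ matches the original nesting of the arcs (with $\hat C_1$ the lowest upper arc and $\hat S_m$ the uppermost of the lower arcs) and keeps the branch points at $\hat t_i=\hat t_j$, $\hat\tau_i=\hat\tau_j$, $\hat t_i=\pm L$, $\hat\tau_j=\pm L$ and the poles at $\hat t_i=\hat\tau_j$ outside each swept-out region, so Cauchy's theorem applies. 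On the vertical segments one has $|1-t^2/L^2|=O(1/L)$, hence $\bigl|(1-t^2/L^2)^{-L^2/(2\rho)}\bigr|$ decays super-exponentially in $L$ while the remaining factors grow only polynomially; the vertical pieces therefore contribute $0$ in the limit, leaving exactly the horizontal contours $\mathbb{R}^n+i\xi$ and $\mathbb{R}^m+i\eta$.

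The exchange of limit and integral on these horizontal contours I would justify by dominated convergence. The main technical point, and in my view the chief obstacle, is producing an $L$-uniform integrable dominant. A Taylor expansion of $-\tfrac{L^2}{2\rho}\log(1-t^2/L^2)$ for $t\in\mathbb{R}+i\xi_i$ yields the estimate
$$
\bigl|(1-t^2/L^2)^{-L^2/(2\rho)}\bigr|\leq C_{\xi}\, e^{\mathrm{Re}(t^2)/(2\rho)},
$$
an $L$-independent Gaussian envelope (decaying because $\rho<0$), with an analogous bound on $\mathbb{R}+i\eta_j$; a separate regime estimate for $|\mathrm{Re}\, t|$ close to $L$ (where the Taylor expansion degrades) is handled by the super-exponential decay of $|1-t^2/L^2|^{-L^2/(2\rho)}$ near the branch points. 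The separation $\xi_1>\eta_m$ bounds $|t_i-\tau_j|$ away from zero, while the branch-cut factors $|t_i-t_j|^{2/\rho}$ and $|\tau_i-\tau_j|^{2\rho}$ are controlled by the separations $\xi_i-\xi_j$ and $\eta_i-\eta_j$ and are easily integrated against the Gaussian envelope. With such a dominant in hand, Lebesgue's theorem converts the pointwise limit into the stated integral identity.
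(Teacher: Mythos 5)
The paper itself gives no proof of this lemma beyond the remark that $\lim_{L\to\infty}(1+x/L)^L=e^x$, so you are supplying detail the authors omit, and your overall strategy --- deform the arcs $\hat C_i,\hat S_j$ onto horizontal lines, identify the pointwise limits $(1-\hat t_i^2/L^2)^{-L^2/2\rho}\to e^{\hat t_i^2/2\rho}$ and $(1-\hat\tau_j^2/L^2)^{L^2/2}\to e^{-\hat\tau_j^2/2}$, and conclude by dominated convergence --- is exactly the intended one. Your dominating function is also correct: the bound $\bigl|(1-t^2/L^2)^{-L^2/2\rho}\bigr|\le C_\xi e^{\operatorname{Re}(t^2)/2\rho}$ does hold uniformly in $L$ on the full horizontal segment $[-L,L]+i\xi$ (it follows from $\log(1-v)+v\le 0$ together with an elementary control of the imaginary part), and the separations $\xi_i-\xi_j$, $\eta_i-\eta_j$, $\xi_1-\eta_m$ keep the remaining factors bounded there, so the Lebesgue argument on the horizontal parts is sound.

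There is, however, a concrete defect in your contour deformation. Your rectangular connectors all use the \emph{same} vertical segment through $-L$ (and through $L$): for two indices $i\ne j$ the paths of $\hat t_i$ and $\hat t_j$ overlap along the segment from $-L$ to $-L+i\min(\xi_i,\xi_j)$. On that overlap the factor $(\hat t_i-\hat t_j)^{2/\rho}$ behaves like $|s-s'|^{2/\rho}$, which is non-integrable across the diagonal whenever $2/\rho\le -1$; likewise $(\hat\tau_i-\hat\tau_j)^{2\rho}$ is non-integrable on the overlapping $\tau$-connectors whenever $2\rho\le -1$. Since $(2/\rho)(2\rho)=4$, at least one of these two exponents is $\le -2$ for \emph{every} $\rho<0$, so for all admissible $\rho$ the deformed multiple integral over your rectangles is actually divergent (away from the corner $-L$ the damping factors $(2s/L)^{L^2/2|\rho|}$ are small but bounded below on the connectors, so they cannot rescue a non-integrable singularity). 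Consequently Cauchy's theorem cannot be applied to these contours as written, and the claim that ``the vertical pieces contribute $0$'' is not even well posed. The repair is easy but should be stated: choose pairwise disjoint (nested) connectors from $\pm L$ to $\pm L+i\xi_i$ --- e.g.\ slanted or concentric arcs ordered consistently with $\xi_n>\cdots>\xi_1>\eta_m>\cdots>\eta_1$ --- so that distinct contours meet only at the common endpoints $\pm L$, where the vanishing of $(1-\hat t^2/L^2)^{-\alpha/\rho}$ and $(1-\hat\tau^2/L^2)^{\alpha}$ to order $\sim L^2$ restores integrability; your super-exponential decay estimate then legitimately kills the connector contributions in the limit. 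You should also perform the deformation one variable at a time in an order preserving the nesting, so that the swept region for each variable avoids the branch cuts emanating from the others and the branch of the integrand on $\mathbb{R}^n+i\xi$, $\mathbb{R}^m+i\eta$ is the continuation of the natural branch on the original contours.
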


Now we analyze the limit of the right-hand side of \eqref{DFth}.

\begin{lemma}\label{lem2}
Let $\alpha=\frac{L^2}{2}$. Then
$$
\lim_{L\to \infty} b \prod_{i=0}^{n-1}\frac{\Gamma(1-\alpha/\rho+i/\rho)^2}{\Gamma(2-2m-2\alpha/\rho+\frac{n-1+i}{\rho})}  \prod_{j=0}^{m-1}\frac{\Gamma(1-n+\alpha+j \rho)^2}{\Gamma(2-n+2\alpha+(m-1+j)\rho)}=1,
$$
where $b=\sqrt\frac{\pi}{2}^{-m-n}2^{-2n m +m L^2+m(m-1)\rho-\frac{L^2 n}{\rho}+\frac{n(n-1)}{\rho}} L^{m+2n m +m(m-1)\rho} (-\frac{L^2}{\rho})^{-\frac{n}{2}-2 m n +\frac{n(n-1)}{2\rho}}$.
\end{lemma}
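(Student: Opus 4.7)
\emph{Plan.} My approach would be to apply Stirling's asymptotics uniformly to all $n+m$ factors. First I would introduce the short-hands $z_i:=1-\alpha/\rho+i/\rho$ and $w_j:=1-n+\alpha+j\rho$, so that the two gamma-ratios in the statement become $\Gamma(z_i)^2/\Gamma(2z_i+c_i)$ with $c_i=-2m+(n-1-i)/\rho$, and $\Gamma(w_j)^2/\Gamma(2w_j+d_j)$ with $d_j=n+(m-1-j)\rho$. Since $\rho<0$ is fixed and $\alpha=L^2/2\to+\infty$, both $z_i\sim -\alpha/\rho>0$ and $w_j\sim\alpha>0$ tend to $+\infty$, so Stirling applies uniformly.

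The key ingredient I would establish is the asymptotic
\[
\frac{\Gamma(y)^2}{\Gamma(2y+c)}=\sqrt{\pi}\,2^{\,1-2y-c}\,y^{-c-1/2}\bigl(1+O(1/y)\bigr),\qquad y\to+\infty,
\]
which follows by combining the Legendre duplication formula $\Gamma(2y+c)=\pi^{-1/2}2^{2y+c-1}\Gamma(y+c/2)\Gamma(y+c/2+1/2)$ with the standard gamma-ratio asymptotic $\Gamma(y+a)/\Gamma(y)=y^a(1+O(1/y))$. Applying this to each of the $n+m$ factors, and replacing $z_i^{-c_i-1/2}$ by the leading power $(-\alpha/\rho)^{-c_i-1/2}$ and $w_j^{-d_j-1/2}$ by $\alpha^{-d_j-1/2}$ (valid since the finitely many correction factors $1+O(1/\alpha)$ still multiply to $1$), the product reduces to an asymptotic of the form $\pi^{(m+n)/2}\cdot 2^{E}\cdot(-\alpha/\rho)^{P_1}\cdot\alpha^{P_2}$, with exponents given by the elementary sums $\sum_i z_i$, $\sum_i c_i$, $\sum_j w_j$, $\sum_j d_j$, computed using $\sum_{k=0}^{N-1}k=N(N-1)/2$.

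The remaining step is purely algebraic bookkeeping. Substituting $\alpha=L^2/2$ converts each $\alpha^r$ into $L^{2r}\,2^{-r}$ and each $(-\alpha/\rho)^r$ into $(-L^2/\rho)^r\,2^{-r}$, yielding an asymptotic of the form $\pi^{(m+n)/2}\cdot 2^{E'}\cdot L^{E_L}\cdot(-L^2/\rho)^{E_A}$ with explicit, $L$-independent exponents $E',E_L,E_A$. The prefactor $b$ is engineered precisely so that multiplication by it zeroes out all four exponents simultaneously. The only real obstacle is sign-tracking across the four summation identities: I would verify the cancellations of the $\pi$, $2$, $L$, and $(-L^2/\rho)$ contributions separately, at which point the limit $b\cdot(\text{product})\to 1$ follows.
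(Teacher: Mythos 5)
Your approach is essentially the paper's: the authors likewise reduce each factor to an asymptotic of the form $\Gamma(y)^2/\Gamma(2y+c)\sim\sqrt{\pi}\,2^{1-2y-c}y^{-c-1/2}$ (they invoke Stirling directly rather than the duplication formula, but the resulting intermediate relations $\sqrt{\pi/2}\,2^{(L^2-2i)/\rho}(-L^2/\rho)^{-1/2+2m-(n-1-i)/\rho}$ and $\sqrt{\pi/2}\,2^{2n-L^2-2j\rho}L^{-1-2n-2(m-1-j)\rho}$ are exactly what your identity yields), after which they too leave the exponent bookkeeping to the reader. One warning about the step you defer: summing the $(-L^2/\rho)$-exponents against the printed $b$ gives $-n$ rather than $0$ --- the term $-\tfrac{n}{2}$ in the exponent of $(-L^2/\rho)$ in $b$ should read $+\tfrac{n}{2}$ (consistent with the factor $(-\rho)^{n/2-2mn+n(n-1)/(2\rho)}$ that appears when the lemma is applied in \eqref{mmm1}); with that sign corrected, the four cancellations you describe all go through.
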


\begin{proof}
The Stirling formula $\Gamma(x+1)\sim \sqrt{2\pi x}(\frac{x}{e})^x, x\to +\infty$, leads to the relations
$$
\frac{\Gamma(1-\alpha/\rho+i/\rho)^2}{\Gamma(2-2m-2\alpha/\rho+\frac{n-1+i}{\rho})}\sim \sqrt\frac{\pi}{2} 2^{\frac{L^2-2i}{\rho}} (-\frac{L^2}{\rho})^{-\frac12+2m-\frac{n-1-i}{\rho}},
$$
$$
\frac{\Gamma(1-n+\alpha+j \rho)^2}{\Gamma(2-n+2\alpha+(m-1+j)\rho)}\sim \sqrt\frac{\pi}{2} 2^{2n-L^2-2j \rho} L^{-1-2n-2(m-1-j)\rho},
$$
and the lemma follows.
\end{proof}
The previous results allow to calculate the following generalised Macdonald-Mehta integral, which is a version of the norm integral (\ref{defamn}), corresponding to the deformed root system of type $A(n,m)$:
\begin{equation}
\label{gMM}
M= \frac{1}{{(2\pi)}^\frac{m+n}{2}} \int_{{\mathbb R}^n+i \xi} dt \int_{{\mathbb R}^m+i\eta} d\tau \frac{\prod_{i<j}^n (t_i - t_j)^\frac{2}{\rho}  \prod_{i<j}^m (\tau_i- \tau_j)^{2\rho} e^{-\frac12\sum_{j=1}^m\tau_j^2-\frac{1}{2}\sum_{i=1}^n t_i^2}}{\prod_{j=1}^m \prod_{i=1}^n (\sqrt{-\rho} t_i -  \tau_j)^2},
\end{equation}
where $\xi \in  {\mathbb R}^n$, $\eta \in  {\mathbb R}^m$ are such that $\xi_n>\ldots>\xi_1>\eta_m>\ldots>\eta_1$.

\begin{proposition}\label{MDF}
Let $M$ be given by \eqref{gMM}. Then
$$
M= \varepsilon\prod_{i=1}^n\frac{\Gamma(1-1/\rho)}{\Gamma(1-i/\rho)} \prod_{j=1}^m \frac{\Gamma(1- \rho)}{\Gamma(n+1-j\rho)} = \varepsilon\prod_{j=1}^m \prod_{i=1}^n \frac{1}{i-j \rho} \prod_{i=1}^n\frac{\Gamma(1-1/\rho)}{\Gamma(1-i/\rho)} \prod_{j=1}^m \frac{\Gamma(1- \rho)}{\Gamma(1-j\rho)},
$$
where $\varepsilon = (-1)^m e^{-\pi i(m(m-1)\rho+\frac{n(n-1)}{2\rho})}$.
\end{proposition}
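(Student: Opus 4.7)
The plan is to derive the proposition as a suitable $L\to+\infty$ limit of the Dotsenko-Fateev identity \eqref{DFth}, combining Lemmas \ref{lem1} and \ref{lem2}, and then to pass from the resulting integral to $M$ by a rescaling of the $t$-variables.

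First, I would set $\alpha=\beta=L^2/2$ in \eqref{DFth}, divide both sides by $a$, and let $L\to+\infty$. By Lemma \ref{lem1} the left-hand side tends to the integral $N:=\lim_{L\to\infty}a^{-1}J$ of that lemma. On the right-hand side, multiplying and dividing by $b$, Lemma \ref{lem2} asserts that $b$ times the $\alpha$-dependent Gamma-function product in \eqref{DFth} tends to $1$, while the remaining $L$-independent factors, namely $\rho^{2nm}$, the two trigonometric prefactors, and the Gamma products $\prod_{j=1}^n\Gamma(j/\rho)/\Gamma(1/\rho)$ and $\prod_{j=1}^m\Gamma(j\rho-n)/\Gamma(\rho)$, stay put. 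Thus $N$ equals these remaining factors divided by $\lim_{L\to\infty}ab$, which is then determined by direct bookkeeping of the powers of $L$, $2$, $\pi$ and $(-\rho)$ appearing in $a$ and $b$.

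Second, to pass from $N$ to $M$, I would change variables $t_i\mapsto\sqrt{-\rho}\,t_i$ in the $t$-integrations of $N$. Since $\rho<0$ this converts $e^{t_i^2/(2\rho)}$ into $e^{-t_i^2/2}$, produces a Jacobian $(-\rho)^{n/2}$, multiplies the Vandermonde $\prod_{i<j}(t_i-t_j)^{2/\rho}$ by $(-\rho)^{n(n-1)/(2\rho)}$, and turns the denominator $(t_i-\tau_j)^2$ into $(\sqrt{-\rho}\,t_i-\tau_j)^2$. The resulting integrand, together with the prefactor $(2\pi)^{-(m+n)/2}$, is precisely that of $M$, giving
$$M=\frac{N}{(2\pi)^{(m+n)/2}\,(-\rho)^{n/2+n(n-1)/(2\rho)}}.$$

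Third, I would simplify the trigonometric and Gamma-function products using the identity $\frac{1-e^{-2\pi ik\gamma}}{1-e^{-2\pi i\gamma}}=e^{-\pi i(k-1)\gamma}\frac{\sin(\pi k\gamma)}{\sin(\pi\gamma)}$ applied with $\gamma=1/\rho$ and with $\gamma=\rho$, the reflection formula $\Gamma(z)\Gamma(1-z)=\pi/\sin\pi z$, and the relation $\Gamma(j\rho-n)=\Gamma(j\rho)/\prod_{i=1}^n(j\rho-i)$. The sine factors then cancel neatly between the trigonometric prefactors and the Gamma ratios, and rewriting $(j\rho-i)=-(i-j\rho)$ produces the product $\prod_{i,j}(i-j\rho)^{-1}$ together with a sign $(-1)^{nm}$. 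Collecting the residual exponential phases $e^{-\pi i n(n-1)/(2\rho)}$ and $e^{-\pi i m(m-1)\rho}$ with all the constants accumulated in the previous two steps should yield the stated formula, with $\varepsilon=(-1)^m e^{-\pi i(m(m-1)\rho+n(n-1)/(2\rho))}$. The main obstacle will be the meticulous bookkeeping in the first step: the $L$-dependent parts of $a$ and $b$, the Jacobian $(-\rho)^{n/2}$, and the Vandermonde rescaling $(-\rho)^{n(n-1)/(2\rho)}$ must conspire to cancel, leaving a clean finite constant; in the third step, carefully tracking the branches of the multivalued factors and the various signs so that they combine exactly to the stated $\varepsilon$ is the delicate part.
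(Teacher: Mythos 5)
Your proposal follows essentially the same route as the paper: take the $L\to\infty$ limit of $a^{-1}J$ using the Dotsenko--Fateev evaluation \eqref{DFth} together with Lemma \ref{lem2} (which is exactly the paper's identity $a^{-1}=b\,(2\pi)^{\frac{m+n}{2}}(-\rho)^{\frac n2-2mn+\frac{n(n-1)}{2\rho}}$ in disguise), identify the limit with $(2\pi)^{\frac{m+n}{2}}(-\rho)^{\frac n2+\frac{n(n-1)}{2\rho}}M$ via Lemma \ref{lem1} and the substitution $t_i\mapsto\sqrt{-\rho}\,t_i$, and then convert the trigonometric and Gamma factors with the reflection formula. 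This matches the paper's proof step for step, with the only remaining work being the constant/phase bookkeeping you correctly flag as the delicate part.
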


\begin{proof}
In the notations of Lemmas \ref{lem1}, \ref{lem2} we have
\begin{multline}\label{mmm1}
\lim_{L\to\infty} a^{-1}J=\lim_{L\to\infty} b (2\pi)^\frac{m+n}{2} (-\rho)^{\frac{n}{2}-2 m n +\frac{n(n-1)}{2\rho}} J=\\
(2\pi)^{\frac{m+n}{2}} (-\rho)^{\frac{n}{2}+\frac{n(n-1)}{2\rho}} \prod_{k=1}^m \frac{1-e^{-2\pi i k \rho}}{1-e^{-2\pi i \rho}} \prod_{k=1}^n \frac{1-e^{-2\pi i k /\rho}}{1-e^{-2\pi i/ \rho}} \prod_{j=1}^n\frac{\Gamma(j/\rho)}{\Gamma(1/\rho)} \prod_{j=1}^m \frac{\Gamma(j \rho -n)}{\Gamma(\rho)}
\end{multline}
by \eqref{DFth} and by Lemma \ref{lem2}. On the other hand, by Lemma \ref{lem1} and by performing the change of variables $t_i \to \sqrt{-\rho} t_i$ we have
\begin{equation}
\label{mmm2}
\lim_{L\to\infty} \frac{a^{-1}J}{(2\pi)^{\frac{m+n}{2}}}=(-\rho)^{\frac{n}{2}+\frac{n(n-1)}{2\rho}} M.
\end{equation}
It follows from \eqref{mmm1} and \eqref{mmm2} that
$$
M=\prod_{k=1}^m \frac{1-e^{-2\pi i k \rho}}{1-e^{-2\pi i \rho}} \prod_{k=1}^n \frac{1-e^{-2\pi i k /\rho}}{1-e^{-2\pi i/ \rho}} \prod_{i=1}^n\frac{\Gamma(i/\rho)}{\Gamma(1/\rho)} \prod_{j=1}^m \frac{\Gamma(j \rho -n)}{\Gamma(\rho)}.
$$
The statement follows by applying the reflection equation $\Gamma(z)\Gamma(1-z)=\pi/\sin\pi z$.
\end{proof}

\begin{remark}
Due to the quasi-invariance of the integrand in \eqref{gMM} at $\sqrt{-\rho} t_i=\tau_j$ the Proposition \ref{MDF} remains valid if the parameters $\xi, \eta$ satisfy the weaker restrictions  $\xi_n>\ldots>\xi_1$, $\eta_m>\ldots>\eta_1$, $\xi_i\ne \eta_j  \, \forall i,j$.
\end{remark}

Suppose now that $\rho=-p$ where $p\in \mathbb N$. Suppose also that $n=1$.  In this case the integral \eqref{gMM} has the same value if parameters $\xi, \eta$ satisfy the weaker restrictions $\eta_i \ne \eta_j, \xi_1 \ne \eta_i \, \forall i,j$.
Furthermore,
the integral $M$ is of the form that appears in Corollary \ref{phiCor} for the configuration ${\mathcal A}_m(p)$ which admits the Baker-Akhiezer function  \cite{CFV96}.
Thus
 Proposition \ref{MDF} implies the following statement.
 \begin{corollary}
Let $\phi$ be the Baker-Akhiezer function for the configuration ${\mathcal A}_m(p)$. Then
$$
\phi(0,0)= (-1)^{m+\frac{p m (m-1)}{2}} \prod_{j=1}^m \frac{\Gamma(pj+2)}{\Gamma(p+1)}.
$$
\end{corollary}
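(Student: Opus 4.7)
The strategy is to combine Corollary \ref{phiCor} with the specialisation of Proposition \ref{MDF} at $n=1$ and $\rho=-p$. Writing out the configuration, ${\mathcal A}_m(p)\subset\mathbb{R}^{m+1}$ yields
$$
A_m(x)^2 = \prod_{1\le i<j\le m}(x_i-x_j)^{2p}\prod_{i=1}^m(x_i-\sqrt{p}\,x_{m+1})^2,
$$
and since the Gaussian factorises as $e^{-x^2/2}=e^{-x_{m+1}^2/2}\prod_j e^{-x_j^2/2}$, Corollary \ref{phiCor} reads
$$
\phi(0,0)^{-1}=\frac{1}{(2\pi)^{(m+1)/2}}\int_{i\xi+\mathbb{R}}dt\int_{i\eta+\mathbb{R}^m}d\tau\,\frac{e^{-\frac12 t^2-\frac12\sum_j\tau_j^2}}{\prod_{i<j}(\tau_i-\tau_j)^{2p}\prod_j(\sqrt{p}\,t-\tau_j)^2}
$$
after the identification $t=x_{m+1}$, $\tau_j=x_j$. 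Lemma \ref{indepLemma} frees us to choose the imaginary shifts so as to match the contour prescription of \eqref{gMM}.

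The second step is to recognise that this is exactly the integral $M$ of \eqref{gMM} with $n=1$ and $\rho=-p$: the $\prod_{i<j}^n(t_i-t_j)^{2/\rho}$-factor is empty, while $\prod_{i<j}^m(\tau_i-\tau_j)^{2\rho}=\prod_{i<j}(\tau_i-\tau_j)^{-2p}$ and $(\sqrt{-\rho}\,t-\tau_j)^{2}=(\sqrt{p}\,t-\tau_j)^{2}$. Hence $\phi(0,0)^{-1}=M\big|_{n=1,\,\rho=-p}$, and the corollary reduces to evaluating this specialisation of $M$.

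The third step is to plug into Proposition \ref{MDF}. For $n=1$ the factor $\prod_{i=1}^n\Gamma(1-1/\rho)/\Gamma(1-i/\rho)$ collapses to $1$, the double product $\prod_j\prod_i 1/(i-j\rho)$ becomes $\prod_j 1/(1+jp)$, and $\prod_{j=1}^m\Gamma(1-\rho)/\Gamma(1-j\rho)=\prod_j\Gamma(1+p)/\Gamma(1+jp)$. Using $(1+jp)\Gamma(1+jp)=\Gamma(2+jp)$ we obtain
$$
M=\varepsilon\prod_{j=1}^m\frac{\Gamma(1+p)}{\Gamma(2+jp)}.
$$
Inverting produces the stated product of Gamma-ratios, and the corollary will follow once the sign $\varepsilon^{-1}=(-1)^{m+pm(m-1)/2}$ is verified at $\rho=-p$.

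The only delicate point, and the one I expect to be the main obstacle, is the precise sign of $\varepsilon$ at integer $\rho$. A naive substitution $\rho=-p$ into the exponential prefactor of Proposition \ref{MDF} is tempting but misleading, because the trigonometric factors $\prod_k(1-e^{-2\pi ik\rho})/(1-e^{-2\pi i\rho})$ become indeterminate $0/0$. The clean way is to re-derive the sign from the reflection formula $\sin\pi z=\pi/(\Gamma(z)\Gamma(1-z))$: the rewrite $\prod_{k=1}^m(1-e^{-2\pi ik\rho})/(1-e^{-2\pi i\rho})=e^{-\pi i\rho m(m-1)/2}\prod_k\Gamma(\rho)\Gamma(1-\rho)/(\Gamma(k\rho)\Gamma(1-k\rho))$ produces the half-integer phase $e^{-\pi i\rho m(m-1)/2}$, which at $\rho=-p$ gives the missing factor $(-1)^{pm(m-1)/2}$. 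Together with the $(-1)^m$ arising from turning $(j\rho-1)\Gamma(1-j\rho)$ into $-\Gamma(2-j\rho)$ for each $j=1,\dots,m$, this yields the announced sign and completes the proof.
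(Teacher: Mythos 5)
Your proposal is correct and follows essentially the same route as the paper: specialise Proposition \ref{MDF} to $n=1$, $\rho=-p$, identify the resulting integral with $\phi(0,0)^{-1}$ via Corollary \ref{phiCor} (using the quasi-invariance/contour-independence remarks to match the shifts), and simplify the Gamma-factors. Your careful re-derivation of the phase as $e^{-\pi i\rho m(m-1)/2}$ — rather than the $e^{-\pi i\rho m(m-1)}$ printed in the statement of Proposition \ref{MDF}, which at $\rho=-p$ would only give $(-1)^m$ — is exactly the bookkeeping needed to obtain the stated sign $(-1)^{m+pm(m-1)/2}$, and it correctly identifies what is evidently a typo in $\varepsilon$.
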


\subsection{Deformations of type $B$.}
Firstly we change the variables in the integral \eqref{DF-intergral}. Let $\hat t_i= L t_i$, $\hat\tau_i= L \tau_i$, $\beta = L/2$, where $L\in {\mathbb R}_+$. We suppose that $\beta$ and $\alpha \in \mathbb R$  are sufficiently large, so that the conditions \eqref{domain} are satisfied.
Then
\begin{equation}
\label{M1}
\lim_{L \to \infty} L^{m\alpha-\frac{n \alpha}{\rho}+\rho m(m-1)+\frac{n(n-1)}{\rho}-2m n+m+n} J = M_1,
\end{equation}
where $M_1$ is the integral
\begin{equation}\label{m1m1}
  M_1 = \prod_{i=1}^n\int_{\hat C_i} d \hat t_i \prod_{j=1}^m\int_{\hat S_j}  d \hat \tau_j \frac{\prod_{i=1}^n \hat t_i^{-\frac{\alpha}{\rho}} \prod_{j=1}^m \hat \tau_j^{\alpha}e^{-\frac12\sum_{j=1}^m\hat\tau_j+\frac{1}{2\rho}\sum_{i=1}^n \hat t_i}}{ \prod_{i<j}^n (\hat t_i - \hat t_j)^{-\frac{2}{\rho}}  \prod_{i<j}^m (\hat\tau_i- \hat\tau_j)^{-2\rho} \prod_{j=1}^m \prod_{i=1}^n (\hat t_i -  \hat \tau_j)^2},
\end{equation}
for the corresponding scaled contours $\hat C_j$ and $\hat S_j$ connecting $0$ and $\infty$.

\begin{lemma}\label{M1-calculate}
The integral $M_1$ has the value
\begin{multline}\label{M1-value}
M_1=\rho^{2 n m} (-2\rho)^{n(1-2m-\frac{\alpha-n+1}{\rho})} 2^{m(1+\alpha+(m-1)\rho)}
\prod_{k=1}^m \frac{1-e^{-2\pi i k \rho}}{1-e^{-2\pi i \rho}} \prod_{k=1}^n \frac{1-e^{-2\pi i k /\rho}}{1-e^{-2\pi i/ \rho}} \\
\times \prod_{j=1}^n\frac{\Gamma(j/\rho)}{\Gamma(1/\rho)} \prod_{j=1}^m \frac{\Gamma(j \rho -n)}{\Gamma(\rho)} \prod_{j=0}^{n-1} \Gamma(1-\frac{\alpha-j}{\rho})  \prod_{j=0}^{m-1} \Gamma(1-n+\alpha+j\rho).
\end{multline}
\end{lemma}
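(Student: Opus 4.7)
The plan is to derive the value of $M_1$ directly from the Dotsenko--Fateev formula \eqref{DFth} for $J$ via the scaling limit already indicated by \eqref{M1}. With the substitutions $\hat t_i = L t_i$, $\hat\tau_j = L\tau_j$, $\beta = L/2$, the integral $J$ transforms, after dividing by the Jacobian-related factor $L^{-n-m}$, into an integral over the rescaled contours $\hat C_i$, $\hat S_j$ whose integrand coincides (up to a power of $L$) with that of $M_1$ once one uses the pointwise limits
\begin{equation*}
\bigl(1-\hat t_i/L\bigr)^{-L/(2\rho)} \longrightarrow e^{\hat t_i/(2\rho)}, \qquad \bigl(1-\hat\tau_j/L\bigr)^{L/2} \longrightarrow e^{-\hat\tau_j/2},
\end{equation*}
together with the trivial limits of $(1-\hat t_i/L)^{-\alpha/\rho}\to 1$ and $(1-\hat\tau_j/L)^{\alpha}\to 1$. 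A bookkeeping of the powers of $L$ coming from the Vandermonde-type factors $(\hat t_i-\hat t_j)^{2/\rho}$, $(\hat\tau_i-\hat\tau_j)^{2\rho}$, $(\hat t_i-\hat\tau_j)^{-2}$ and the monomial factors $\hat t_i^{-\alpha/\rho}$, $\hat\tau_j^\alpha$ reproduces the exponent of $L$ displayed in \eqref{M1}; convergence of the rescaled integral is legitimate in the domain \eqref{domain} since the exponential decay on the tails of $\hat C_i$, $\hat S_j$ replaces the original endpoint regularisation.

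On the other side, I would apply Stirling's formula $\log\Gamma(z) = (z-\tfrac12)\log z - z + O(1)$ to each of the three $\beta$-dependent Gamma factors in the right-hand side of \eqref{DFth}, namely
\begin{equation*}
\Gamma\!\bigl(1-\tfrac{\beta}{\rho}+\tfrac{j}{\rho}\bigr),\quad \Gamma\!\bigl(1-n+\beta+j\rho\bigr),\quad \Gamma\!\bigl(2-2m-(\alpha+\beta+n-1+j)/\rho\bigr).
\end{equation*}
For $\rho<0$ and $\beta=L/2\to+\infty$ the first and third Gamma arguments tend to $+\infty$ while the second also tends to $+\infty$, so Stirling applies directly and yields asymptotics of the form $(\mathrm{const})\,L^{a_j}\,2^{b_j}$ for explicit exponents $a_j$, $b_j$. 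Multiplying these asymptotics together, the powers of $L$ must cancel precisely against the prefactor $L^{m\alpha-n\alpha/\rho+\rho m(m-1)+n(n-1)/\rho-2mn+m+n}$ appearing in \eqref{M1}; the nontrivial residual factor in $L$ is exactly the $(-2\rho)^{n(1-2m-(\alpha-n+1)/\rho)}\,2^{m(1+\alpha+(m-1)\rho)}$ factor of \eqref{M1-value}, while the $L$-independent pieces of Stirling combine with the surviving $\alpha$-dependent Gamma functions $\Gamma(1-(\alpha-j)/\rho)$ and $\Gamma(1-n+\alpha+j\rho)$ to produce the remaining factors in \eqref{M1-value}. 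The $\beta$-independent parts of \eqref{DFth}, namely the sine-product prefactor $\rho^{2nm}\prod_k(1-e^{-2\pi ik\rho})/(1-e^{-2\pi i\rho})\cdot\prod_k(1-e^{-2\pi ik/\rho})/(1-e^{-2\pi i/\rho})$ and $\prod_j\Gamma(j/\rho)/\Gamma(1/\rho)\cdot\prod_j\Gamma(j\rho-n)/\Gamma(\rho)$, pass to the limit unchanged.

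The main obstacle I expect is the careful tracking of the numerical constants: separating in each Stirling asymptotic the power of $L$, the power of $2$ (arising from $\beta=L/2$ rather than $\beta=L$), the sign (coming from the factors $-1/\rho$ with $\rho<0$ in the arguments, so that $(-\cdot)^{z}$-type factors must be interpreted via the chosen branch), and the constants like $\sqrt{2\pi}$ from Stirling. These must combine consistently between the two sides, and in particular one must verify that the $\sqrt{2\pi}$-factors, the signs, and the $2$-powers all match the factor $(-2\rho)^{n(1-2m-(\alpha-n+1)/\rho)}2^{m(1+\alpha+(m-1)\rho)}$ in \eqref{M1-value}. This is essentially the same kind of bookkeeping as in Lemma \ref{lem2}, and I would model the calculation on the two asymptotics displayed in its proof, adapting them to the case $\alpha\ne\beta$ and $\beta=L/2$ rather than $\alpha=L^2/2$. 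Once this verification is complete, equating the limit of the right-hand side of \eqref{DFth}, multiplied by the prefactor in \eqref{M1}, with the limit of the same quantity interpreted via the integral, yields \eqref{M1-value}.
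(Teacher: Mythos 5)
Your proposal is correct and follows essentially the same route as the paper: the proof there likewise takes the limit $\beta=L/2\to\infty$ of the Dotsenko--Fateev evaluation \eqref{DFth}, applies Stirling's formula (in the form $\Gamma(z+a)/\Gamma(z+b)\sim z^{a-b}$) to the two $\beta$-dependent ratios of Gamma factors, and lets the resulting powers of $L$ cancel against the prefactor in \eqref{M1}, leaving precisely the $(-2\rho)^{n(1-2m-(\alpha-n+1)/\rho)}2^{m(1+\alpha+(m-1)\rho)}$ constants of \eqref{M1-value} while the $\beta$-independent factors pass through unchanged. (Your worry about leftover $\sqrt{2\pi}$'s is moot here since each Stirling application occurs inside a ratio $\Gamma/\Gamma$, unlike the $\Gamma^2/\Gamma$ ratios of Lemma \ref{lem2}.)
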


\begin{proof}
We apply Stirling's formula to the terms in the left-hand side of \eqref{M1}. Since $\lim_{L\to\infty}(1+x/L)^L=e^x$ we get
$$
\frac{\Gamma(1-\frac{\beta-j}{\rho})}{\Gamma(2-2m-(\alpha+\beta-n+1-j)/\rho)}
\sim \left(-\frac{L}{2\rho}\right)^{-1+2m+(\alpha-n+1)/\rho},
$$
$$
\frac{\Gamma(1-n+\beta+j\rho)}{\Gamma(2-n+\alpha+\beta+(m-1+j)\rho)}
\sim \left(\frac{L}{2}\right)^{-1-\alpha-(m-1)\rho},
$$
as $\beta=L/2\to\infty$. The statement follows.
\end{proof}
Now we change variables in \eqref{m1m1} according to $\hat \tau_i = \widetilde\tau_i^2$, $ \hat t_j = \widetilde t_j^2$. Removing an overall factor $2^{m+n}$ from the resulting integral, we obtain
\begin{equation}\label{m2m2}
  M_2 = \prod_{i=1}^n\int_{\hat C_i} d \widetilde t_i \prod_{j=1}^m\int_{\hat S_j} d \widetilde \tau_j \frac{\prod_{i=1}^n \widetilde t_i^{-2\frac{\alpha}{\rho}+1} \prod_{j=1}^m \widetilde \tau_j^{2\alpha+1} e^{-\frac12\sum_{j=1}^m\widetilde\tau_j^2+\frac{1}{2\rho}\sum_{i=1}^n \widetilde t_i^2}}{\prod_{i<j}^n ({\widetilde t}_i^2 - \widetilde t_j^2)^{-\frac{2}{\rho}}  \prod_{i<j}^m (\widetilde\tau_i^2- \widetilde\tau_j^2)^{-2\rho} \prod_{j=1}^m \prod_{i=1}^n (\widetilde t_i^2 -  \widetilde \tau_j^2)^2},
\end{equation}
where the integration contours are chosen as in  \eqref{m1m1}. In particular, this means that the value of the integral $M_2$ is given by the right-hand side of \eqref{M1-value} multiplied by $2^{m+n}$.

Consider now the following version of Macdonald-Mehta integral (\ref{bcdef}),
corresponding to the deformed root system of type $BC(n,m).$

\begin{proposition}\label{main-prop-B}
Let $M$ be the generalised Macdonald-Mehta integral defined by
\begin{multline}\label{Mdef}
M=(2\pi)^{-\frac{m+n}2}\\ \times \int_{{\mathbb R}^n+i \xi} dt \int_{{\mathbb R}^m+i\eta} d\tau  \frac{  \prod_{i=1}^n t_i^{-2\frac{\alpha}{\rho}+1} \prod_{j=1}^m \tau_j^{2\alpha+1} e^{-\frac12\sum_{j=1}^m\tau_j^2-\frac{1}{2}\sum_{i=1}^n t_i^2}}{\prod_{i<j}^n (t_i^2 - t_j^2)^{-\frac{2}{\rho}}  \prod_{i<j}^m (\tau_i^2- \tau_j^2)^{-2\rho}   \prod_{j=1}^m \prod_{i=1}^n (\rho t_i^2 +  \tau_j^2)^2},
\end{multline}
where $\rho<0$, $\alpha \in \mathbb R$, and  $\xi_n>\ldots>\xi_1>\eta_m>\ldots>\eta_1>0$.
Then
\begin{multline}\label{MdefB}
M=(2\pi)^{\frac{m+n}{2}} 2^{2(m+n)-2m n -\frac{n\alpha}{\rho}+\frac{n(n-1)}{\rho}+m\alpha+m(m-1)\rho} e^{\pi i (\frac{m+n}{2}+m\alpha-\frac{n\alpha}{\rho})}\\ \times \prod_{k=1}^m \frac{\Gamma(1-\rho)}{\Gamma(1-k \rho) \prod_{j=1}^n (k\rho-j)}
\prod_{k=1}^n \frac{\Gamma(1-\rho^{-1})}{\Gamma(1-k \rho^{-1})}\\ \times \prod_{j=0}^{n-1}\frac{1}{\Gamma(\frac{\alpha-j}{\rho})} \prod_{j=0}^{m-1} \frac{1}{\Gamma(-\alpha-j \rho)} \prod_{j=0}^{m-1}\prod_{k=0}^{n-1} \frac{1}{\alpha+j \rho -k}.
\end{multline}
\end{proposition}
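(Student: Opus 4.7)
The plan is to derive the formula \eqref{MdefB} by relating the integral $M$ of \eqref{Mdef} to the integral $M_2$ of \eqref{m2m2}, whose value is known from Lemma \ref{M1-calculate} (up to the factor $2^{m+n}$ arising from the squaring substitution), exactly as in the type $A$ strategy of Proposition \ref{MDF}. The bridge between $M_2$ and $M$ will consist of a further rescaling $\widetilde t_i=\sqrt{-\rho}\,t_i$, $\widetilde \tau_j=\tau_j$ --- with $\sqrt{-\rho}>0$ since $\rho<0$ --- together with a deformation of the integration contours from the images (under the principal square root) of the Dotsenko--Fateev arcs to the horizontal lines $\mathbb{R}^n+i\xi$, $\mathbb{R}^m+i\eta$ appearing in the definition of $M$.

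First I would carry out the rescaling. Under it, the factor $(\widetilde t_i^2-\widetilde \tau_j^2)^2$ becomes exactly $(\rho t_i^2+\tau_j^2)^2$ (no additional power of $-\rho$), and the exponential $e^{\widetilde t^2/(2\rho)}$ becomes the desired Gaussian $e^{-t^2/2}$. The remaining factors $\widetilde t_i^{-2\alpha/\rho+1}$ and $(\widetilde t_i^2-\widetilde t_j^2)^{-2/\rho}$, together with the Jacobian $(-\rho)^{n/2}$, contribute an overall $(-\rho)$-power $n-n\alpha/\rho-n(n-1)/\rho$. Collecting these gives $M_2 = (-\rho)^{n-n\alpha/\rho-n(n-1)/\rho}\,\widetilde M$, where $\widetilde M$ has the same integrand as $M$ in \eqref{Mdef} but is integrated over the images of the Dotsenko--Fateev contours.

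Second I would justify the contour deformation from $\widetilde M$ to $(2\pi)^{(m+n)/2}M$. The relevant contours go from $0$ to $+\infty$ in specific quadrants of the right half-plane, while the target contours are full horizontal lines. In the spirit of Lemma \ref{lem1} and Lemma \ref{indepLemma}, the Gaussian suppression kills any contribution at infinity during the deformation, and the ordering $\xi_n>\ldots>\xi_1>\eta_m>\ldots>\eta_1>0$ keeps the deformed contours on the correct side of the poles $\rho t_i^2+\tau_j^2=0$ and away from the branch points at $t_i=0$, $\tau_j=0$. Opening each arc into a full horizontal line requires crossing the branch cuts of $t_i^{-2\alpha/\rho+1}$ and $\tau_j^{2\alpha+1}$; the accumulated monodromy is what produces the phase $e^{\pi i((m+n)/2+m\alpha-n\alpha/\rho)}$ in \eqref{MdefB}.

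Together, these two steps yield an identity of the shape $M_2 = K\,(2\pi)^{(m+n)/2}M$ in which $K$ is an explicit product of a $(-\rho)$-power, a $2$-power, and the monodromy phase. Since $M_2$ is known explicitly, one then solves for $M$ and simplifies. The main obstacle, as I expect it, is the bookkeeping at this final stage: one must verify that the powers of $-\rho$ coming from the substitution and from the Jacobians exactly cancel those appearing in the explicit value of $M_2$ via Lemma \ref{M1-calculate}, leaving only the pure $2$-power prefactor $2^{2(m+n)-2mn-n\alpha/\rho+n(n-1)/\rho+m\alpha+m(m-1)\rho}$ stated in \eqref{MdefB}. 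Simultaneously, the cyclotomic factors $(1-e^{-2\pi ik\rho})/(1-e^{-2\pi i\rho})$ and $(1-e^{-2\pi ik/\rho})/(1-e^{-2\pi i/\rho})$ have to be recast via the reflection formula $\Gamma(z)\Gamma(1-z)=\pi/\sin\pi z$ into the Gamma quotients $\Gamma(1-\rho)/\Gamma(1-k\rho)$ and $\Gamma(1-\rho^{-1})/\Gamma(1-k\rho^{-1})$, while the factor $\Gamma(j\rho-n)$ has to be combined with the shift identity $\Gamma(j\rho-n)\prod_{k=1}^{n}(j\rho-k) = \Gamma(j\rho)$ to produce the polynomial denominators $\prod_{k=1}^n(k\rho-j)$ and $\prod_{j=0}^{m-1}\prod_{k=0}^{n-1}(\alpha+j\rho-k)$ that appear in the statement.
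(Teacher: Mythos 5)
Your overall strategy --- rescale $t_i$ by $\sqrt{-\rho}$ to match the integrand of \eqref{Mdef} with that of $M_2$, then account for the difference between the Dotsenko--Fateev contours and the horizontal lines, and finally feed in Lemma \ref{M1-calculate} together with the reflection formula --- is the same as the paper's. But the middle step, as you describe it, has a genuine gap. A contour running from the branch point $0$ to $\infty$ cannot be ``opened'' into a full line $\mathbb{R}+i\xi_j$ by a homotopy; the two are related only after one splits each horizontal line into a `negative' and a `positive' half and expands the product of the $n+m$ split integrals into a sum over subsets $J\subset\{1,\dots,n\}$, $K\subset\{1,\dots,m\}$. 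Reflecting the negative halves via $t_j\mapsto -t_j$, $\tau_k\mapsto -\tau_k$ and restoring the ordering of the contours produces, for each subset, a product of phases, and the sum over all $2^{m+n}$ subsets factorises as
\begin{equation*}
B=\prod_{j=0}^{n-1}\bigl(1-e^{2\pi i(j-\alpha)/\rho}\bigr)\prod_{k=0}^{m-1}\bigl(1-e^{2\pi i(k\rho+\alpha)}\bigr)
=2^{m+n}\,\epsilon\,\prod_{j=0}^{n-1}\sin\frac{\pi(j-\alpha)}{\rho}\prod_{k=0}^{m-1}\sin\pi(k\rho+\alpha),
\end{equation*}
with $\epsilon$ a unimodular factor. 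This is emphatically not ``only a phase'': the sine factors are precisely what converts the Gamma functions $\Gamma(1-\frac{\alpha-j}{\rho})$ and $\Gamma(1-n+\alpha+j\rho)$ appearing in the value of $M_2$ (Lemma \ref{M1-calculate}) into the reciprocals $1/\Gamma(\frac{\alpha-j}{\rho})$ and $1/\bigl(\Gamma(-\alpha-j\rho)\prod_{k=0}^{n-1}(\alpha+j\rho-k)\bigr)$ occurring in \eqref{MdefB}, and the accompanying $2^{m+n}$ feeds into the stated power of $2$. Your sketch attributes to the contour manipulation only the phase $e^{\pi i(\frac{m+n}{2}+m\alpha-\frac{n\alpha}{\rho})}$, so the final bookkeeping cannot close: the $\alpha$-dependent Gammas from Lemma \ref{M1-calculate} would survive in the wrong position.

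A smaller omission: the Dotsenko--Fateev evaluation, and hence Lemma \ref{M1-calculate}, is available only in the convergence domain $\alpha>n-1-(m-1)\rho$, whereas the Proposition asserts the formula for all real $\alpha$. You need to finish by observing that both sides of \eqref{MdefB} are analytic in $\alpha$ and invoking analytic continuation.
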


\begin{proof}
Initially, we assume that $\alpha>n-1-(m-1)\rho$, so that the previous analysis as well as Lemma~\ref{M1-calculate} can be applied.

Rescaling the variables $t_j$, $j=1,\ldots n$, in the integral \eqref{Mdef}, we obtain
\begin{multline*}
	M=(2\pi)^{-\frac{m+n}{2}}(-\rho)^{\frac{\alpha n}{\rho}-\frac{n(n-1)}{\rho}-n}\\ \times \int_{{\mathbb R}^n+i \xi} dt \int_{{\mathbb R}^m+i\eta} d\tau \frac{\prod_{i=1}^n  t_i^{-2\frac{\alpha}{\rho}+1} \prod_{j=1}^m \tau_j^{2\alpha+1} e^{-\frac12\sum_{j=1}^m\tau_j^2+\frac{1}{2\rho}\sum_{i=1}^n t_i^2}}{\prod_{i<j}^n (t_i^2 -  t_j^2)^{-\frac{2}{\rho}}  \prod_{i<j}^m (\tau_i^2- \tau_j^2)^{-2\rho} \prod_{j=1}^m \prod_{i=1}^n (t_i^2 - \tau_j^2)^2}
\end{multline*}
Let $I_{n,m}(t,\tau)$ denote the integrand in the right-hand side, so that
\begin{equation}
\label{Minteg}
	M = (2\pi)^{-\frac{m+n}{2}}(-\rho)^{\frac{\alpha n}{\rho}-\frac{n(n-1)}{\rho}-n}\int_{{\mathbb R}^n+i \xi}dt \int_{{\mathbb R}^m+i\eta} d\tau I_{n,m}(t,\tau).
\end{equation}
Let $S_j^-$ (respectively, $C_j^-$) be the contour $\hat S_j$ (respectively, $\hat C_j$) reflected about the imaginary axis and oriented from left to right. Let also $S_j^+$ (respectively, $C_j^+$) be the contour $\hat S_j$ (respectively, $\hat C_j$) reflected about the real axis and oriented from left to right. By deforming and splitting each integration contour in \eqref{Minteg} into a 'negative' and a 'positive' part, we can rewrite the integral as
\begin{equation*}
	\prod_{j=1}^n\left(\int_{C_j^-}dt_j + \int_{\hat C_j}dt_j\right)\prod_{k=1}^m\left(\int_{S_k^-}d\tau_k + \int_{\hat S_k}d\tau_k\right)I_{n,m}(t,\tau).
\end{equation*}
Expanding the products, we obtain the double-sum
\begin{equation*}
\begin{split}
	\sum_{J\subset\{1,\ldots,n\}}\sum_{K\subset\{1,\ldots,m\}} & \prod_{j\in J}\int_{C_j^-}dt_j\prod_{j^\prime \in \{1,\ldots,n\}\setminus J}\int_{\hat C_{j^\prime}}dt_{j^\prime}\\ & \times\prod_{k\in K}\int_{S_k^-}d\tau_k\prod_{k^\prime \in \{1,\ldots,m\}\setminus K}\int_{\hat S_{k^\prime}}d\tau_{k^\prime}I_{n,m}(t,\tau).
\end{split}
\end{equation*}
(Empty products are identified with $1$.)

Let us fix two such index sets $J$ and $K$, and consider the corresponding summand. Clearly, we may deform the contours ${\mathbb R}_{\geq 0}+i\xi_{j^\prime}$, $j^\prime\notin J$, and ${\mathbb R}_{\geq 0}+i\eta_{k^\prime}$, $k^\prime\notin K$, into $\hat{C}_{j^\prime}$ and $\hat{S}_{k^\prime}$, respectively, without altering the value of the integral. In order to relate $M$ to $M_2$, and thereby be able to make use of Lemma \ref{M1-calculate}, we apply change of variables $t_j\to -t_j$, $j\in J$, and $\tau_k\to -\tau_k$, $k\in K$, and we replace contours $C_j^-$, $S_k^-$ with the contours $C_j^+$, $S_k^+$  respectively. To this end, we note that
\begin{multline*}
	\prod_{j\in J}\int_{C_j^-}dt_j\prod_{k\in K}\int_{S_k^-}d\tau_k I_{n,m}(t,\tau)\\ = \prod_{j\in J}\big(-e^{2\pi i(-\alpha/\rho)}\big)\int_{C_j^+}dt_j\prod_{k\in K}\big(-e^{2\pi i\alpha}\big)\int_{S_k^+}d\tau_k I_{n,m}(t,\tau),
\end{multline*}
where the exponential factors in the right-hand side arise from the factors $t_j^{-2\alpha/\rho+1}$ and $\tau_k^{2\alpha+1}$ in $I_{n,m}(t,\tau)$. We proceed to first deform the contours $C_j^+$ into $\hat C_j$ and then to deform the contours  $S_k^+$ into $\hat S_k$. In doing so we will reverse the order of some contours. Thus the contours $C_j^+$ will have to be swaped with the contours $\hat C_1, \ldots, \hat C_{j-1}$ where $j\in J$ and we start with the smallest $j$. Since a swap of $C$ and $S$ contours does not affect the integral, the product $\prod_{i<j}(t_i-t_j)^{2/\rho}$ in $I_{n,m}(t,\tau)$ will yield a factor $\prod_{j\in J}e^{2\pi i(j-1)/\rho}$.
Similarly, the product $\prod_{i<j}(\tau_i-\tau_j)^{2\rho}$ yields $\prod_{k\in K}e^{2\pi i(k-1)\rho}$. We can thus conclude that
\begin{equation*}
	M=(2\pi)^{-\frac{m+n}{2}}(-\rho)^{\frac{\alpha n}{\rho}-\frac{n(n-1)}{\rho}-n}BM_2
\end{equation*}
with
\begin{equation*}
\begin{split}
	B &= \sum_{J\subset\{1,\ldots,n\}}\prod_{j\in J}\big(-e^{2\pi i((j-1)-\alpha)/\rho}\big)\sum_{K\subset\{1,\ldots,m\}}\prod_{k\in K}\big(-e^{2\pi i((k-1)\rho+\alpha)}\big)\\ & = \prod_{j=0}^{n-1}\big(1-e^{2\pi i(j-\alpha)/\rho}\big)\prod_{k=0}^{m-1}\big(1-e^{2\pi i(k\rho+\alpha)}\big)\\& = 2^{m+n}\epsilon\prod_{j=0}^{n-1}\sin\pi\big(j-\alpha)/\rho\prod_{k=0}^{m-1}\sin\pi(k\rho+\alpha),
\end{split}	
\end{equation*}
where $\epsilon=e^{\pi i (-\frac{m+n}{2}+m\alpha+\frac{m(m-1)}{2}\rho-\frac{n\alpha}{\rho}+\frac{n(n-1)}{2\rho})}$.

We have
$$
\Gamma(1-\frac{\alpha-j}{\rho})\sin\pi(\frac{-\alpha+j}{\rho})=-\frac{\pi}{\Gamma(\frac{\alpha-j}{\rho})},
$$
and
$$
\Gamma(1-n+\alpha+j\rho)\sin\pi(\alpha+j\rho)=\frac{-\pi}{\Gamma(-\alpha-j\rho)\prod_{k=0}^{n-1} (\alpha+j \rho-k)}.
$$
With the use of Lemma \ref{M1-calculate} we thus deduce
\begin{multline*}
M=(2\pi)^{-\frac{m+n}{2}} (-\rho)^{\frac{\alpha n}{\rho}-\frac{n(n-1)}{\rho}-n} 2^{m+n} \rho^{2mn}(-2\rho)^{n(1-2m-\frac{\alpha-n+1}{\rho})} 2^{m(1+\alpha+(m-1)\rho)}\times\\
\prod_{k=1}^m \frac{1-e^{-2\pi i k \rho}}{1-e^{-2\pi i \rho}} \prod_{k=1}^n \frac{1-e^{-2\pi i k /\rho}}{1-e^{-2\pi i/ \rho}}
\prod_{j=1}^n\frac{\Gamma(j/\rho)}{\Gamma(1/\rho)} \prod_{j=1}^m \frac{\Gamma(j \rho -n)}{\Gamma(\rho)} \times \\
(-2\pi)^{m+n}\epsilon
 \prod_{j=0}^{n-1}\frac{1}{\Gamma(\frac{\alpha-j}{\rho})} \prod_{j=0}^{m-1} \frac{1}{\Gamma(-\alpha-j \rho) \prod_{k=0}^{n-1} (\alpha+j \rho -k)},
\end{multline*}
and the statement follows for large values of $\alpha$. Since the integral $M$ and its value \eqref{MdefB} are analytic functions of $\alpha$ the statement is true for any $\alpha$.
\end{proof}

Suppose now that $\rho=-p$ where $p\in \mathbb N$. Suppose also that $n=1$. Let also $\alpha=-r-\frac12$ where $r\in \mathbb N$ and suppose that $s=\frac{\alpha}{\rho}-\frac12 \in \mathbb N$.  In this case the integral $M$ has the same value \eqref{MdefB}
 if parameters $\xi_i, \eta_j$ are non-zero and satisfy the weaker restrictions $\eta_i \ne \eta_j, \xi_1 \ne \eta_i$, for all $i,j=1,\ldots,m$ such that $i\neq j$.
Furthermore,
the integral $M$ is of the form that appears in Corollary \ref{phiCor} for the configuration ${\mathcal C}_{m+1}(r,s)$ which admits the Baker-Akhiezer function  \cite{CFV99}.
Thus
 Proposition \ref{main-prop-B} implies the following statement.
 \begin{corollary}
Let $\phi$ be the Baker-Akhiezer function for the configuration ${\mathcal C}_{m+1}(r,s)$.  Then
\begin{multline*}
\phi(0,0)= (-1)^{mr+s}2^{s-\frac32+mp(s-\frac12+m)}(2\pi)^{-\frac{m+1}{2}}\Gamma(s+\frac12)
\prod_{j=0}^{m-1}\frac{\Gamma(jp+r+\frac32)\Gamma(jp+p+2)}{\Gamma(p+1)}.
\end{multline*}
\end{corollary}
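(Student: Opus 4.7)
The plan is to combine Corollary~\ref{phiCor} with the specialisation of Proposition~\ref{main-prop-B} at $n=1$, $\rho=-p$, $\alpha=-r-\tfrac{1}{2}$ (so $\alpha/\rho=s+\tfrac{1}{2}$), and then to simplify the right-hand side of~\eqref{MdefB} using the relation $p(2s+1)=2r+1$. Corollary~\ref{phiCor} gives
\[
\phi(0,0)^{-1}=\int_{i\xi+\mathbb{R}^{m+1}}\frac{d\gamma(x)}{A_m(x)^2},
\]
so the task reduces to recognising this integral as the specialised $M$ and reading off its value.

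First I would verify, writing $x=(\tau_1,\ldots,\tau_m,t_1)$, that the description of ${\mathcal C}_{m+1}(r,s)$ yields
\[
A_m(x)^2=t_1^{2s}\prod_{j=1}^m\tau_j^{2r}\prod_{1\le i<j\le m}(\tau_i^2-\tau_j^2)^{2p}\prod_{j=1}^m(\tau_j^2-pt_1^2)^2.
\]
Under the stated specialisation the exponents in~\eqref{Mdef} become $-2\alpha/\rho+1=-2s$ and $2\alpha+1=-2r$, the factor $\rho t_1^2+\tau_j^2$ equals $\tau_j^2-pt_1^2$, and the $t$-Vandermonde is empty; hence the integrand of $M$ matches $A_m(x)^{-2}d\gamma(x)$, and the looser contour condition on $\xi,\eta$ noted just before the corollary applies by the quasi-invariance/Cauchy argument used in Lemma~\ref{indepLemma}. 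Therefore $\phi(0,0)^{-1}=M$.

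Second I would substitute $n=1$, $\rho=-p$, $\alpha=-r-\tfrac{1}{2}$ into~\eqref{MdefB} and collapse the factors one by one. Using $k\rho-j=-(kp+1)$ (the only $j$, since $n=1$) together with $\Gamma(kp+1)(kp+1)=\Gamma(kp+2)$, the first gamma line reduces to $(-1)^m\prod_{k=1}^m\Gamma(p+1)/\Gamma(kp+2)$, while the second is trivially~$1$. The factor $\prod_{j=0}^{n-1}\Gamma((\alpha-j)/\rho)^{-1}$ becomes $\Gamma(s+\tfrac{1}{2})^{-1}$. Combining $\Gamma(-\alpha-j\rho)^{-1}=\Gamma(jp+r+\tfrac{1}{2})^{-1}$ with $\prod_{j=0}^{m-1}(\alpha+j\rho)^{-1}=(-1)^m\prod_{j}(jp+r+\tfrac{1}{2})^{-1}$ and using $(jp+r+\tfrac{1}{2})\Gamma(jp+r+\tfrac{1}{2})=\Gamma(jp+r+\tfrac{3}{2})$ gives $(-1)^m\prod_{j=0}^{m-1}\Gamma(jp+r+\tfrac{3}{2})^{-1}$. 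The identity $r+\tfrac{1}{2}=p(s+\tfrac{1}{2})$ collapses the exponent of $2$ in~\eqref{MdefB} to $\tfrac{3}{2}-s-mp(s-\tfrac{1}{2}+m)$ and the phase $e^{\pi i((m+1)/2+m\alpha-\alpha/\rho)}$ to $(-1)^{mr+s}$; the two separate $(-1)^m$ factors cancel, so the overall sign is $(-1)^{mr+s}$.

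Inverting $M$ and rewriting $\prod_{k=1}^m\Gamma(kp+2)=\prod_{j=0}^{m-1}\Gamma(jp+p+2)$ produces the claimed formula. There is no analytic difficulty here—the entire computation is bookkeeping—so the only real challenge is keeping straight the accumulation of signs, exponents of~$2$, and gamma-function shifts; every non-trivial cancellation ultimately rests on the single relation $2r+1=p(2s+1)$, which I would invoke explicitly at each step.
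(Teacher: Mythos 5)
Your proposal is correct and follows exactly the paper's route: the paper likewise obtains the corollary by specialising Proposition \ref{main-prop-B} to $n=1$, $\rho=-p$, $\alpha=-r-\tfrac12$ (with $s=\alpha/\rho-\tfrac12$), identifying the resulting integral $M$ with the integral of Corollary \ref{phiCor} for ${\mathcal C}_{m+1}(r,s)$ under the weakened contour conditions, and inverting. The gamma-function and sign bookkeeping you carry out (which the paper leaves implicit) checks out and reproduces the stated formula.
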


\section{Concluding remarks}\label{concl}

We have seen that the story of Macdonald-Mehta  integrals goes beyond the Coxeter premise
and can be extended at least to all known Baker-Akhiezer configurations.
We have used Dotsenko-Fateev formulas to extend it further to the deformed root systems \cite{SV},
related to the series of basic classical Lie superalgebras, but the exceptional cases are still to be studied.

In particular, for the exceptional Lie superalgebra of type $D(2,1,\lambda)$ we have the following 3D integral depending on 3 parameters $\lambda_1, \lambda_2, \lambda_3:$
\begin{equation}
I(\lambda)=\int \frac{x_1^{-2m_1}x_2^{-2m_2}x_3^{-2m_3}e^{-\frac{1}{\lambda_1}x_1^2-\frac{1}{\lambda_2}x_2^2-\frac{1}{\lambda_3}x_3^2} \, dx}{(x_1+x_2+x_3)^2(x_1-x_2+x_3)^2(x_1+x_2-x_3)^2(x_1-x_2-x_3)^2},
\end{equation}
where
$$m_1=\frac{\lambda_2+\lambda_3-\lambda_1}{2\lambda_1},\,m_2=\frac{\lambda_3+\lambda_1-\lambda_2}{2\lambda_2},\,m_3=\frac{\lambda_1+\lambda_2-\lambda_3}{2\lambda_3}.$$

The link of the deformed trigonometric Calogero-Moser systems \cite{SV} with conformal field theory via Dotsenko-Fateev work \cite{DF} looks mysterious to us and needs a better understanding (see \cite{SSAFR} for some results in the non-deformed case).

As it was already mentioned the initial motivation for this work came from the theory of quasi-invariant and $m$-harmonic polynomials \cite{FV02, EG02}.
We hope that our results will help to answer some remaining questions in this theory, for example, about the signature of the canonical bilinear form on the space of $m$-harmonic polynomials.  As it follows from our results (see e.g. Proposition 4.2) this form is sometime bound to be indefinite in contrast to the case $m=0.$

In this relation we would like to mention here important recent papers by Grinevich and Novikov \cite{GN09, GN11}, who developed a spectral theory for the one-dimensional Schr\"odinger operators with singular potentials in the spaces with indefinite bilinear form. Their ideas may help to make a progress in multidimensional case as well.

Another natural link to explore is with the paper \cite{FelV09} by Felder and one of the authors, where an integral  formula for the full Baker-Akhiezer function in both ${\mathcal A}_m$ and deformed ${\mathcal A}_m(p)$ cases as an iterated residue and Selberg-type integral were found.

\section{Acknowledgements}
Two of us (MVF and APV) are very grateful to Giovanni Felder for illuminating discussions during our visit to ETH Zurich in Spring 2003, when the first calculations in this direction had been done. We also would like to thank Pavel Etingof for useful discussions in November 2001.

We all are grateful to the Hausdorff Institute, Bonn for the hospitality during the trimester programme "Integrability in Geometry and Mathematical Physics" in January-April 2012, when this paper was drafted.

APV is grateful to F. Smirnov for useful discussions during the Integrability programme at the Simons Center for Geometry and Physics  at Stony Brook and to the Simons Center for the hospitality in September 2012.

The work of APV was partially supported by the EPSRC (grant EP/J00488X/1). MVF acknowledges support  from Royal Society/RFBR joint project JP101196/11-01-92612.

\bibliographystyle{amsalpha}

\end{document}